\documentclass[12pt, a4paper]{article}
\usepackage[utf8]{inputenc}
\usepackage{fontenc}
\usepackage{hyperref}
\usepackage{amsmath}
\usepackage{dcolumn,lscape}
\usepackage{dsfont,amscd,color}
\usepackage{longtable,multicol,tabularx,graphicx,amssymb}
\usepackage{exscale,amsthm,multirow,rotating}
\usepackage{natbib}
\usepackage{bbm}
\usepackage{tikz,dsfont,float}
\usepackage[caption = false]{subfig}
\textwidth 17.cm \textheight 23.6cm \topmargin -.77in
\evensidemargin 0.cm \oddsidemargin 0.cm
\parskip 0.5ex plus 0.01ex minus 0.01ex
\parindent 1.0cm

\setlength{\unitlength}{1cm}

\newtheorem{thm}{Theorem}

\newtheorem{assumption}{Assumption}

\newtheorem{lemma}{Lemma}

 \graphicspath{{figs/}}

\DeclareMathOperator{\plim}{plim}

\DeclareMathOperator{\Diag}{Diag}
\DeclareMathOperator{\argmin}{argmin} 
 
\newcommand{\bs}{\boldsymbol}
\newcommand{\E}{\mathbb{E}}
\newcommand{\tr}{\mathrm{tr}}

\newcommand{\var}{\mathrm{Var}}
\newcommand{\mb}{\mathbb}
\newcommand{\mf}{\mathbf}

\renewcommand{\vec}{\text{vec}}
\DeclareMathOperator{\e}{\varepsilon}

\newcommand{\Y}{\mathbf{Y}}

\clubpenalty = 10000
\widowpenalty = 10000
\displaywidowpenalty = 10000

\bibliographystyle{apalike}

\begin{document}

\def\spacingset#1{\renewcommand{\baselinestretch}%
	{#1}\small\normalsize} \spacingset{1}


\title{\bf Dynamic Spatiotemporal ARCH Models}
\author{Philipp Otto\\
	\small{Leibniz University Hannover, Germany}\\
	Osman Do\u{g}an\\
	\small{Department of Economics, University of Illinois at Urbana-Champaign, U.S.A.}\\
	S\"uleyman Ta\c{s}p{\i}nar\\
    \small{Department of Economics, Queens College CUNY, New York, U.S.A.}}
\maketitle

\begin{abstract}
	Geo-referenced data are characterized by an inherent spatial dependence due to the geographical proximity. In this paper, we introduce a dynamic spatiotemporal autoregressive conditional heteroscedasticity (ARCH) process to describe the effects of (i) the log-squared time-lagged outcome variable, i.e., the temporal effect, (ii) the spatial lag of the log-squared outcome variable, i.e., the spatial effect, and (iii)  the spatial lag of the log-squared time-lagged outcome variable, i.e., the spatiotemporal effect, on the volatility of an outcome variable. Furthermore, our suggested process allows for the fixed effects over time and space to account for the unobserved heterogeneity. For this dynamic spatiotemporal ARCH model, we derive a generalized method of moments (GMM) estimator based on the linear and quadratic moment conditions of a specific transformation. We show the consistency and asymptotic normality of the GMM estimator, and determine the best set of moment functions. We investigate the finite-sample properties of the proposed GMM estimator in a series of Monte-Carlo simulations with different model specifications and error distributions. Our simulation results show that our suggested GMM estimator has good finite sample properties. In an empirical application, we use monthly log-returns of the average condominium prices of each postcode of Berlin from 1995 to 2015 (190 spatial units, 240 time points) to demonstrate the use of our suggested model. Our estimation results show that  the temporal, spatial and spatiotemporal lags of the log-squared returns have statistically significant effects on the volatility of the log-returns. 
\end{abstract}

\noindent%
{\it Keywords:} Spatial ARCH, GMM, volatility clustering, volatility, house price returns, local real-estate market

\spacingset{1.45} 

\section{Introduction}\label{s1}

In a standard autoregressive conditional heteroskedasticity (ARCH) model, the volatility is modeled as a linear function of the lagged squared outcome variable in order to account for the volatility clustering patterns observed in the outcome variable \citep{Engle:1982, Bollerslev:1992, Engle:1986}. However, when analyzing geo-referenced time series, a further phenomenom occurs -- observations close in space are more similar than observations farther away -- known as Tobler's first law of geography \citep{Tobler70}. From a statistical perspective, this spatial dependence may occur in the means and the volatility of a random process \citep{Otto:2018}. Thus, in this paper, we extend the standard ARCH model to spatiotemporal data by using some tools from spatial econometrics. In our suggested specification, the log-volatility term may depend on (i) the log-squared time-lagged outcome variable, (ii) the higher-order spatial lags of the log-squared outcome variable, (iii)  the higher-order spatial lags of the log-squared time-lagged outcome variable, (iv) exogenous variables, and (v) the unobserved heterogeneity across regions and time. The presence of higher-order spatial lags in our specification indicates that the log-volatility term of a region may depend on the current and time-lagged outcome variables in the neighboring locations in differing ways, depending on the specifications of the associated spatial weights matrices (e.g., different influences from different directions or directional dependence, cf. \citealt{gupta2015inference,merk2021directional}). We refer to this extended model as the dynamic spatiotemporal ARCH model. 

To introduce an estimation approach for our model, we transform the outcome equation so that it is in the form of log-squared terms. We then substitute the log-volatility equation into the the transformed outcome equation to obtain an estimation equation for the log-squared outcome variable. The resulting specification is in the form of a higher-order spatial dynamic panel data model with disturbance terms that may not have a zero mean. We use an orthonormal and a deviation from group-mean operator to wipe out the regional and time fixed effects from the estimation specification. For the estimation of the transformed model that is free of the regional and time fixed effects, we propose a generalized method of moments (GMM) estimator formulated with a set of linear and quadratic moment functions \citep{Lee:2007,lee2010efficient,Lee:2014}. We show that the resulting GMM estimator has the standard large sample properties irrespective of whether the number of time periods is large or finite. When the number of time periods is large, the precision matrix of our GMM estimator simplifies significantly, allowing us to determine a set of linear and quadratic moment functions that can lead to an efficient estimator. We provide such a set of best linear and quadratic moment functions, and establish the asymptotic properties of the resulting best GMM estimator. In a Monte Carlo simulation study, we show that the proposed GMM estimator performs well in finite samples. 

In the literature, \citet{Robinson:2009} introduces the log-square transformation to a cross-sectional spatial stochastic volatility model, and consider a quasi  maximum likelihood (QML) estimation approach for the estimation of the transformed model.\footnote{In the literature, the log-square transformation approach is also used for the estimation of (i) the cross-sectional spatial stochastic volatility models \citep{Taspinar:2021}, and (ii)  the cross-sectional spatial ARCH/GARCH models \citep{Sato:2017,Otto:2019,Otto:2020,Takaki:2021}.} In a similar manner, we may alternatively consider the QML estimation approach rather than the GMM approach for our transformed model \citep{Yu:2008,Yu:2010,Hol:2020}.  Compared to the QML estimation approach, our GMM estimation approach has the following advantages. First, the GMM approach has the computational advantage over the QML approach since the QML estimation involves calculation of the determinant of a Jacobian term at each iteration during the estimation. The computational cost can be especially high when the number of the cross-sectional units is large\footnote{\citet{Lesage:2009} provide some solutions based on various approximation methods to reduce the computation cost significantly.} Second, it is well known that the QML estimator has an asymptotic bias, and thus requires a bias correction approach even when the number of time period is large \citep{Yu:2008,Yu:2010}. Finally, the QML estimator may have poor finite sample properties since the distribution of the log-squared disturbance terms in the transformed model is approximated by a normal distribution. In the time series literature on the volatility models, it has been documented that the QML estimator obtained in this way has poor finite sample properties \citep{JPR:1994, Shephard:1994, Kim:1998, Koopman:1998}. 

In an empirical application, we use a monthly dataset of the real house price returns in Berlin at the zip-code  level over the period, January 1995 to December 2015 to test the  effect of temporal, spatial and spatiotemporal lags of the log-squared returns on the volatility of the log-returns. That is, we analyze the volatility of the house price returns in a real-estate market on a small geographic area of around 900 $km^2$ (see also \citealt{mcmillen2014local,bille2017two,zhang2017quantile}). In Section~\ref{spec}, we show that our dynamic spatiotemporal ARCH model implies a spatial dynamic panel data model for the log-squared returns. Therefore, the presence of spatial, temporal and spatiotemporal effects in the log-squared returns will provide the empirical evidence for our suggested specification. To motivate the presence of these effects on the log-squared returns, Figure \ref{fig:acfs} displays the average log-squared returns over Berlin's zip-codes (the left figure), the estimated temporal autocorrelation of the log-squared returns as a series of boxplots (the center figure), and the estimated spatiotemporal autocorrelation in terms of Moran's $I$ across the time horizon (the right figure). The first figure shows a clustering pattern in the log-squared returns, indicating the presence of spatial dependence. From the ACF estimates, we can observe a clear temporal volatility clustering, while the spatiotemporal dependence is of minor degree, irregularly fluctuating around zero.

By using a first-order version of our dynamic spatiotemporal ARCH process for the local house price returns, we separately identify temporal, spatial and spatiotemporal interaction effects in the log-squared returns. Our estimation results show that the temporal, spatial and spatiotemporal lags of the log-squared returns have statistically significant effects on the log-volatility, and for that there is significant variation in the log-volatility of the real house price returns in Berlin over its zip-codes. This finding is not surprising, because it has been documented in the literature that the spatial dependence in house price variations might arise due to several factors such as migration, equity transfer, spatial arbitrage and spatial patterns in the determinants of house prices \citep{Meen:1999}. These patterns can change with the local infrastructure \citep{chang2021inter}. Recently, \cite{holmes2017pair} and \citet{bashar2021intra} particularly focus on intra-city house prices and show significant temporal and spatial dependence in the growth rates. In contrast to these studies, we focus on the analysis of the log-volatility as a measure of the market risk.

\begin{figure}
	\begin{center}
	    \includegraphics[width = 0.32\textwidth]{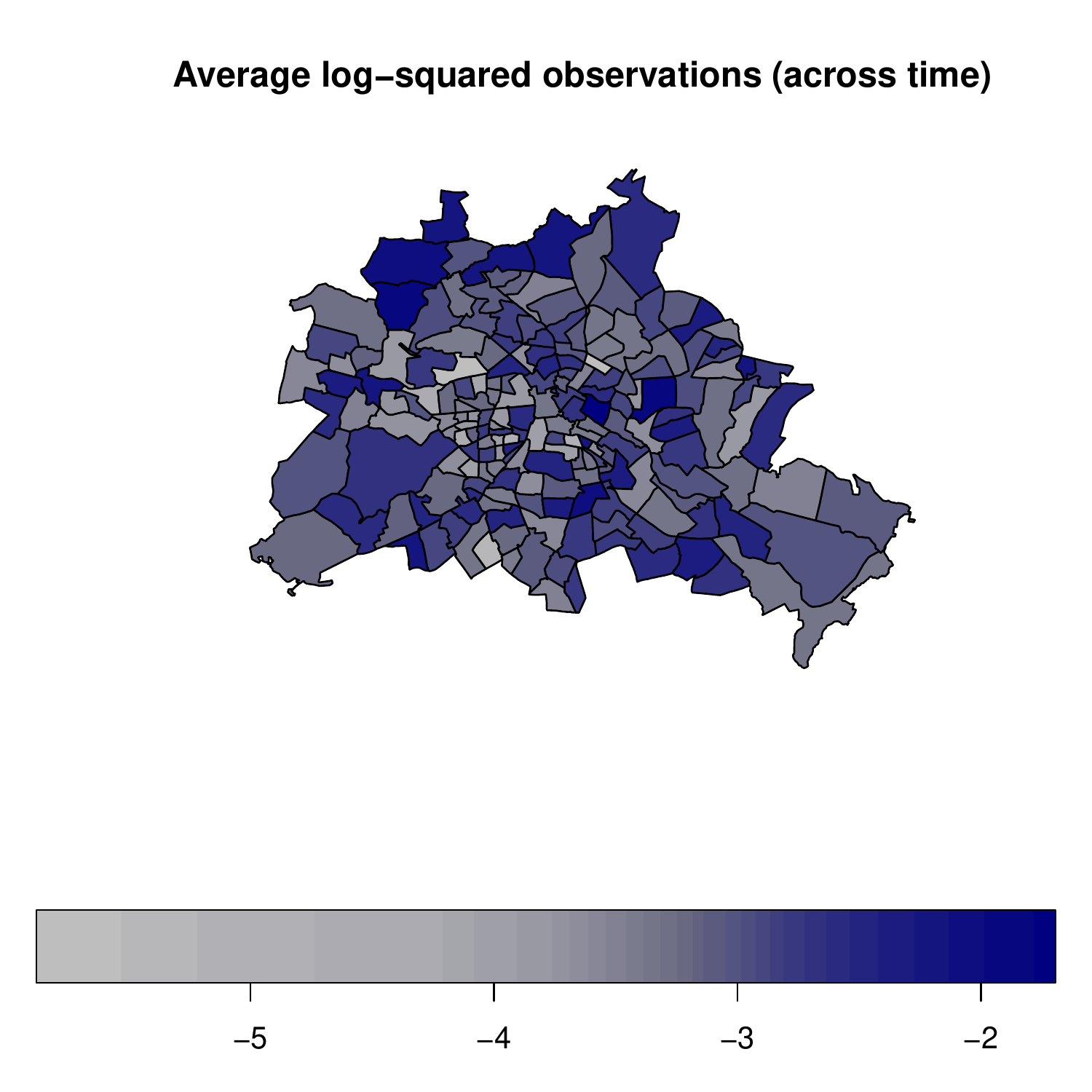}
		\includegraphics[width = 0.32\textwidth]{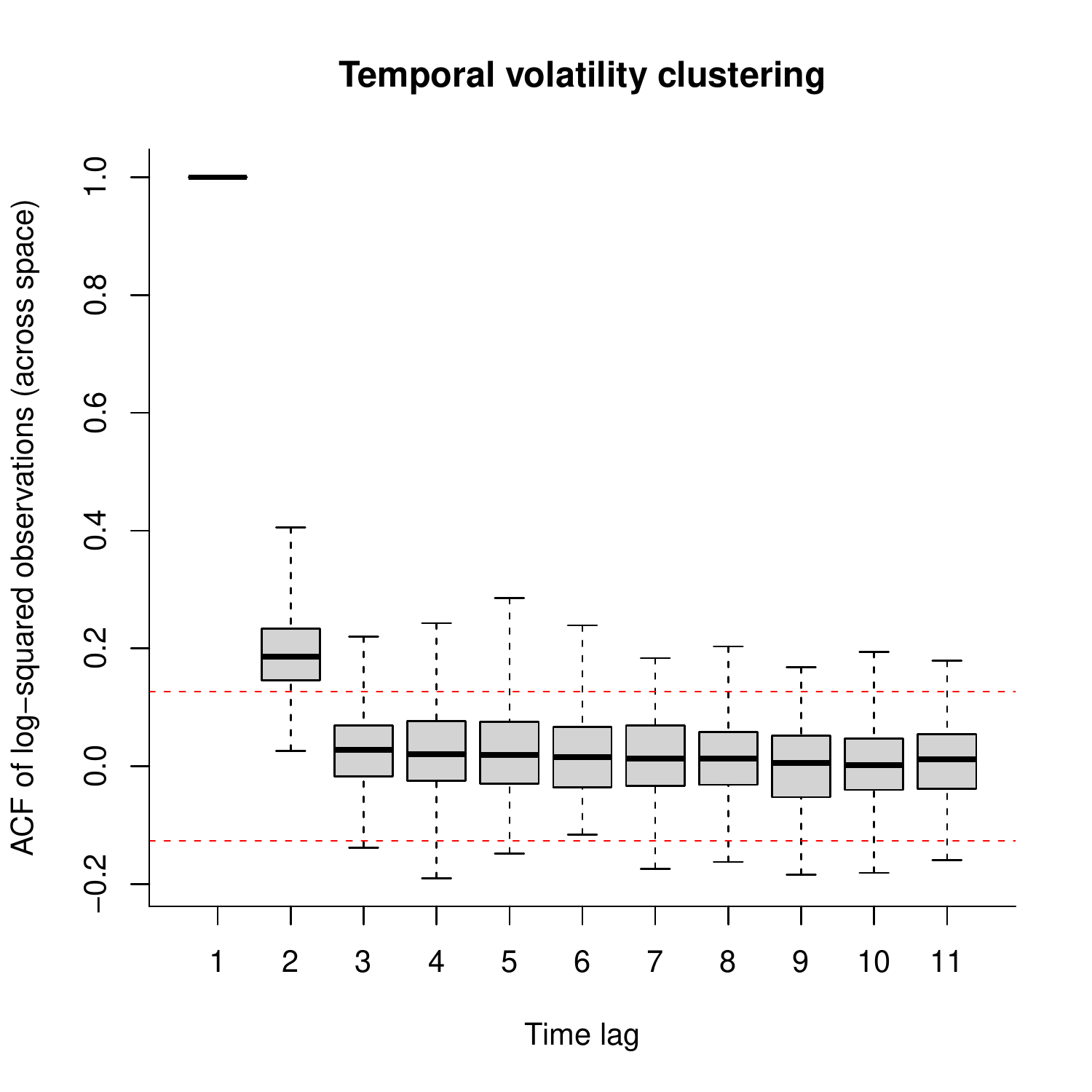}
		\includegraphics[width = 0.32\textwidth]{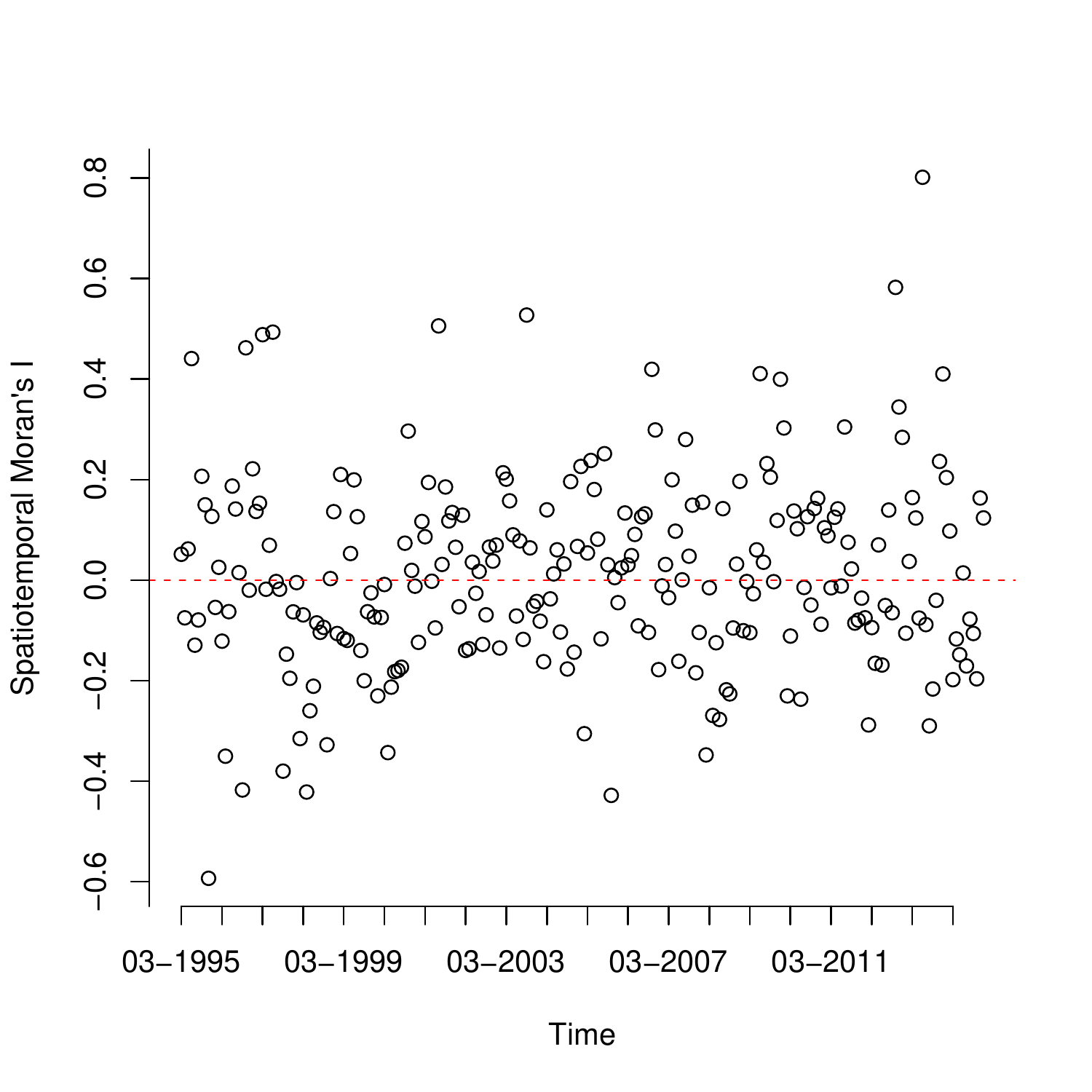}
	\end{center}
	\caption{Indication of spatial, temporal and spatiotemporal volatility clustering. Left: Average log-squared house price return for each of the 190 zip-code areas over the period from February 1995 to December 2015. The map shows a weak clustering effect especially for the outer regions (indicating a positive spatial dependence in the volatility). Center: Temporal ACF depicted as a series of boxplots showing the estimated temporal autocorrelation of all 190 locations. Right: Spatiotemporal correlation in terms of the slope of a regression line between the log squared returns and their temporally lagged neighbors (Moran's $I$ of first spatiotemporal lag). There is no clear pattern with varying coefficients around zero, which may indicate a weak spatiotemporal dependence.}\label{fig:acfs}
\end{figure}

The rest of the paper proceeds in the following way. In Section~\ref{spec}, we state our model specification and discuss its properties. In Section~\ref{gmme}, we provide the details of the GMM estimation approach for our model, and formally establish its large sample properties. In Section~\ref{mc}, we investigate the finite sample properties of our suggested algorithm through an extensive Monte Carlo study. In Section~\ref{emp}, we provide the details of our empirical application on Berlin's house price returns.  In Section~\ref{conc}, we offer our concluding comments with some directions for future studies. Some technical results are relegated to \hyperref[app]{Appendix}.

\section{Model Specification}\label{spec}

The outcome variable $y_{it}$ of  region $i$ at time $t$ is modeled according to
\begin{align}
y_{it}&=h_{it}^{1/2}\e_{it},\label{2.1}\\
\log h_{it}&=\sum_{l=1}^p\sum_{j=1}^n\rho_{l0}m_{l,ij}\log y^2_{jt}+\gamma_0\log y^2_{j,t-1}+\sum_{l=1}^p\sum_{j=1}^n\delta_{l0}m_{l,ij}\log y^2_{j,t-1}\label{2.2}\nonumber\\ 
&\quad+\mathbf{x}^{'}_{it}\bs{\beta}_0+\mu_{i0}+\alpha_{t0},
\end{align}
for $i=1,2,\hdots,n$ and $t=1,\hdots T$. The spatial locations indexed by $i = 1, \ldots, n$ are supposed to be on discrete regular (e.g., for image processes) or irregular lattice, also known as spatial polygons. The latter case is typically present in economics, e.g., regional legal units, districts, countries, etc. Here, $h_{it}$ is considered as the volatility term in region $i$ at time $t$, and $\e_{it}$ are independent and identically distributed random variables that has mean zero and unit variance. The log-volatility terms follow the process in \eqref{2.2},  where $\{m_{l,ij}\}_{l=1}^p$, for $i,j=1,\hdots,n$, are the non-stochastic spatial weights. Here, $p$ is a finite positive integers, and $\{m_{l,ii}\}_{l=1}^p$ are zero for $i=1,\hdots,n$.  The spatial, temporal and spatiotemporal effects of the log-squared outcome variable on the log-volatility are measured by the unknown parameters $\gamma_0$, $\{\rho_{l0}\}_{l=1}^p$, and $\{\delta_{l0}\}_{l=1}^q$, respectively. In \eqref{2.2}, $\mf{x}_i$ is a $k\times1$ vector of exogenous variables with the associated parameter vector $\bs{\beta}_0$, and the regional and time fixed effects are denoted by $\bs{\mu}_0=(\mu_{10},\hdots,\mu_{n0})^{'}$ and $\bs{\alpha}_0=(\alpha_{10},\hdots,\alpha_{T0})^{'}$. Both $\bs{\mu}_0$ and $\bs{\alpha}_0$ can be correlated with the exogenous variables in an arbitrary manner.   We assume that the initial value vector $\mf{Y}_0=(y_{10},\hdots,y_{n0})^{'}$ is observable.

Squaring both sides of \eqref{2.1} and then taking the natural logarithm yield
\begin{align}\label{2.3}
y^{*}_{it}=h^{*}_{it}+\e^{*}_{it},
\end{align}
where $y^*_{it}=\log y^2_{it}$, $h^{*}_{it}=\log h_{it}$ and $\e^*_{it}=\log\e^2_{it}$. In vector form, we can express \eqref{2.3} and \eqref{2.2} as
\begin{align}
&\mf{Y}^{*}_t=\mf{h}^{*}_t+\bs{\e}^{*}_t,\label{2.4}\\
&\mf{h}^{*}_t=\sum_{l=1}^p\rho_{l0}\mf{M}_l\mf{Y}^{*}_t+\gamma_0 \mf{Y}^{*}_{t-1}+\sum_{l=1}^p\delta_{l0}\mf{M}_l\mf{Y}^{*}_{t-1}+\mf{X}_t\bs{\beta}_0+\bs{\mu}_0+\alpha_{t0}\mf{1}_n,\label{2.5}
\end{align}
where $\mf{M}_l=(m_{l,ij})$ is the $n\times n$ spatial weight matrices, $\mf{Y}^{*}_t=(y^{*}_{1t},\hdots,y^{*}_{nt})^{'}$, $\mf{h}^{*}_t=(h^{*}_{1t},\hdots,h^{*}_{nt})^{'}$, $\bs{\e}^{*}_t=(\e_{1t},\hdots,\e^{*}_{nt})^{'}$, $\mf{X}_t=(\mf{x}_{1t},\hdots,\mf{x}_{nt})^{'}$, and $\mf{1}_n$ is the $n\times1$ vector of ones.  The process in \eqref{2.5} indicates that $\mf{h}^{*}_t$ depends on the high order spatial lags of $\mf{Y}^{*}_{t}$ and $\mf{Y}^{*}_{t-1}$. Substituting \eqref{2.5} into \eqref{2.4}, we obtain
\begin{align}\label{2.7}
\mf{Y}^{*}_t=\sum_{l=1}^p\rho_{l0}\mf{M}_l\mf{Y}^{*}_t+\gamma_0 \mf{Y}^{*}_{t-1}+\sum_{l=1}^p\delta_{l0}\mf{M}_l\mf{Y}^{*}_{t-1}+\mf{X}_t\bs{\beta}_0+\bs{\mu}_0+\alpha_{t0}\mf{1}_n+\bs{\e}^{*}_t,
\end{align}
for $t=1,\hdots,T$. This transformed model indicates that our specification in \eqref{2.2} implies a high-order spatial dynamic panel data model for the log-squared outcome variable. In next section, we show how \eqref{2.7} can be used to estimate the parameters of the model.

\section{The Estimation Approach}\label{gmme}
The elements of  $\bs{\e}^{*}_t$ in \eqref{2.7} are i.i.d across $i$ and $t$ but their mean may not be zero. Therefore, we add and subtract $\E\left(\bs{\e}^{*}_t\right)$ to obtain the following equation.
\begin{align}\label{3.1}
\mf{Y}^{*}_t&=\sum_{l=1}^p\rho_{l0}\mf{M}_l\mf{Y}^{*}_t+\gamma_0 \mf{Y}^{*}_{t-1}+\sum_{l=1}^p\delta_{l0}\mf{M}_l\mf{Y}^{*}_{t-1}+\mf{X}_t\bs{\beta}_0+\bs{\mu}_0 +\alpha_{t0}\mf{1}_n+\mu_{\e}\mf{1}_n+\mf{U}_t,
\end{align}
where $\mf{U}_t=(u_{1t},\hdots,u_{nt})^{'}=\bs{\e}^{*}_t-\E\left(\bs{\e}^{*}_t\right)$, and $\mu_{\e}=\E\left(\e^{*}_{it}\right)$. Let $\sigma^2_0=\E(u^2_{it})$. Then, it follows that the elements of $\mf{U}_t$ are i.i.d across $i$ and $t$ with mean zero and variance $\sigma^2_0$. We need to eliminate both fixed effects terms from the model in order to  avoid the incidental parameter problem.  To eliminate $\bs{\mu}$ and $\mu_{\e}\mf{1}_n$ from the model, we consider an orthonormal transformation based on the matrix decomposition of $\mf{J}_T=\left(\mf{I}_T-\frac{1}{T}\mf{1}_T\mf{1}^{'}_T\right)$, where $\mf{I}_T$ is the $T\times T$ identity matrix. Let $\left(\mf{F}_{T,T-1},\frac{1}{\sqrt{T}}\mf{1}_T\right)$ be the orthonormal  eigenvector matrix of $\mf{J}_T$, where $\mf{F}_{T,T-1}$ is the $T\times (T-1)$ sub-matrix  containing eigenvectors corresponding to the eiegenvalues of one.  Let  $\mf{C}=\left(\mf{c}_1,\hdots,\mf{c}_T\right)$ be an $n\times T$ matrix, where $\mf{c}_t$ is an $n\times1$ vector for $t=1,\hdots,T$.  Using $\mf{F}_{T,T-1}$, we can transform $\mf{C}$ into a $n\times(T-1)$ matrix in the following way: $\left(\mf{c}^{*}_1,\hdots,\mf{c}^{*}_{T-1}\right)=\left(\mf{c}_1,\hdots,\mf{c}_T\right)\mf{F}_{T,T-1}$, where $\mf{c}^{*}_j$ is the $j$th column of $\mf{C}\mf{F}_{T,T-1}$ for $j=1,\hdots,T-1$. If we apply  $\mf{F}_{T,T-1}$ to our model in \eqref{3.1} in a similar manner, we obtain
\begin{align}\label{3.2}
\mf{Y}^{**}_t&=\sum_{l=1}^q\rho_{l0}\mf{M}_l\mf{Y}^{**}_t+\gamma _0\mf{Y}^{**,-1}_{t-1}+\sum_{l=1}^q\delta_{l0}\mf{M}_l\mf{Y}^{**,-1}_{t-1}+\mf{X}^{*}_t\bs{\beta}_0+\alpha^{*}_{t0}\mf{1}_n+\mf{U}^{*}_t,
\end{align}
for $t=1,\hdots,T-1$, where  $\left(\mf{Y}^{**}_1,\hdots,\mf{Y}^{**}_{T-1}\right)=\left(\mf{Y}^{*}_1,\hdots,\mf{Y}^{*}_{T}\right)\mf{F}_{T,T-1}$, $\left(\mf{Y}^{**,-1}_0,\hdots,\mf{Y}^{**,-1}_{T-2}\right)=\left(\mf{Y}^{*}_0,\hdots,\mf{Y}^{*}_{T-1}\right)\mf{F}_{T,T-1}$, $\left(\mf{X}^{*}_{l1},\hdots,\mf{X}^{*}_{k,T-1}\right)=\left(\mf{X}_{l1},\hdots,\mf{X}_{lT}\right)\mf{F}_{T,T-1}$, where $\mf{X}_{lt}$ is the $l$th column of $\mf{X}_t$ for $l=1,\hdots,k$, $\left(\alpha^{*}_1,\hdots,\alpha^{*}_{T-1}\right)=\left(\alpha_1,\hdots,\alpha_T\right)\mf{F}_{T,T-1}$ and $\left(\mf{U}^{*}_{1},\hdots,\mf{U}^{*}_{T-1}\right)=\left(\mf{U}_{1},\hdots,\mf{U}_{T}\right)\mf{F}_{T,T-1}$. Note that both $\bs{\mu}_0$ and $\mu_{\e}\mf{1}_n$ are dropped from the model since
\begin{align*}
&\left(\bs{\mu}_0,\hdots,\bs{\mu}_0\right)\mf{F}_{T,T-1}=\bs{\mu}_0\mf{1}^{'}_T\mf{F}_{T,T-1}=\mf{0}_{n\times(T-1)},\\
&\left(\mu_{\e}\mf{1}_n,\hdots,\mu_{\e}\mf{1}_n\right)\mf{F}_{T,T-1}=\mu_{\e}\mf{1}_n\mf{1}^{'}_T\mf{F}_{T,T-1}=\mf{0}_{n\times(T-1)}.
\end{align*}
Let $N=n(T-1)$ and $\mf{U}_N=(\mf{U}^{*'}_1,\hdots,\mf{U}^{*'}_{T-1})^{'}$. Note that we can express  $\left(\mf{U}^{*}_{1},\hdots,\mf{U}^{*}_{T-1}\right)=\left(\mf{U}_{1},\hdots,\mf{U}_{T}\right)\mf{F}_{T,T-1}$ as $\left(\mf{U}^{*'}_{1},\hdots,\mf{U}^{*'}_{T-1}\right)^{'}=\left(\mf{F}^{'}_{T,T-1}\otimes\mf{I}_n\right)\left(\mf{U}^{'}_{1},\hdots,\mf{U}^{'}_{T}\right)^{'}$.\footnote{Note that the matrix equation $\mf{A}\mf{B}\mf{C}=\mf{D}$, where $\mf{D}$, $\mf{A}$, $\mf{B}$, and $\mf{C}$ are suitable matrices, can be expressed as  $\vec(\mf{D})=(\mf{C}^{'}\otimes \mf{A})\vec(\mf{B})$, where $\vec(\mf{B})$ denotes the vectorization of the matrix $\mf{B}$ \citep[p. 282]{Karim:2005}. This property can be applied to $\left(\mf{U}^*_1,\mf{U}^{*}_2,\hdots,\mf{U}^*_{T-1}\right)=\left(\mf{U}_1,\mf{U}_2,\hdots,\mf{U}_T\right)\mf{F}_{T,T-1}$ by setting $\mf{D}=\left(\mf{U}^*_1,\mf{U}^{*}_2,\hdots,\mf{U}^*_{T-1}\right)$, $\mf{C}=\mf{F}_{T,T-1}$, $\mf{B}=\left(\mf{U}_1,\mf{U}_2,\hdots,\mf{U}_T\right)$ and $\mf{A}=\mf{I}_n$. } Then, it follows that 
\begin{align*}
\E\left(\mf{U}_N\mf{U}^{'}_N\right)=\E\left(\left(\mf{F}^{'}_{T,T-1}\otimes\mf{I}_n\right)\left(\mf{U}^{'}_{1},\hdots,\mf{U}^{'}_{T}\right)^{'}\left(\mf{U}^{'}_{1},\hdots,\mf{U}^{'}_{T}\right)\left(\mf{F}_{T,T-1}\otimes\mf{I}_n\right)\right)=\sigma^2_0\mf{I}_N,
\end{align*}
indicating that the elements of $\mf{U}_N$ are uncorrelated. Among the orthonormal transformations, \citet{Lee:2014} show that  the forward orthogonal difference (the Helmert transformation) can be useful for the spatial dynamic panel data models. Thus, we have the following explicit forms for the transformed variables: $\mf{Y}^{**}_t=\left(\frac{T-t}{T-t+1}\right)^{1/2}\left(\mf{Y}^{*}_t-\frac{1}{T-t}\sum_{h=t+1}^T\mf{Y}^{*}_h\right)$, $\mf{Y}^{**,-1}_{t-1}=\left(\frac{T-t}{T-t+1}\right)^{1/2}\left(\mf{Y}^{*}_{t-1}-\frac{1}{T-t}\sum_{h=t}^{T-1}\mf{Y}^{*}_h\right)$ and the other variables are expressed similarly. 

The transformed model in \eqref{3.2} includes the transformed time fixed effects. These terms can be eliminated by pre-multiplying the model with $\mf{J}_n=\left(\mf{I}_n-\frac{1}{n}\mf{1}_n\mf{1}^{'}_n\right)$ to get
\begin{align}\label{3.3}
\mf{J}_n\mf{Y}^{**}_t&=\sum_{l=1}^p\rho_{l0}\mf{J}_n\mf{M}_l\mf{Y}^{**}_t+\gamma _0\mf{J}_n\mf{Y}^{**,-1}_{t-1}+\sum_{l=1}^p\delta_{l0}\mf{J}_n\mf{M}_l\mf{Y}^{**,-1}_{t-1}+\mf{J}_n\mf{X}^{*}_t\bs{\beta}_0+\mf{J}_n\mf{U}^{*}_t,
\end{align}
where we used the fact that $\mf{J}_n\mf{1}_n=\mf{0}_{n}$. Our GMM estimation approach is based on \eqref{3.3}. It is clear that we need to determine IVs for the following terms:  $\{\mf{M}_j\mf{Y}^{**}_t\}_{j=1}^p$, $\mf{Y}^{**,-1}_{t-1}$ and $\{\mf{M}_j\mf{Y}^{**,-1}_{t-1}\}_{j=1}^p$ for $t=1,\hdots,T-1$. That is, we need IVs for the following variables:
\begin{align}\label{3.4}
\mf{J}_n\left(\mb{M}\mf{Y}^{**}_t,\mf{Y}^{**,-1}_{t-1},\mb{M}\mf{Y}^{**,-1}_{t-1}\right),
\end{align}
where $\mb{M}\mf{Y}^{**}_{t}=\left(\mf{M}_1\mf{Y}^{**}_{t},\hdots,\mf{M}_p\mf{Y}^{**}_{t}\right)$  and $\mb{M}\mf{Y}^{**,-1}_{t-1}=\left(\mf{M}_1\mf{Y}^{**,-1}_{t-1},\hdots,\mf{M}_p\mf{Y}^{**,-1}_{t-1}\right)$. Let $\mathcal{F}_{t-1}$ be the $\sigma$-algebra generated by $\left(\mf{Y}_0,\hdots,\mf{Y}_{t-1}\right)$ conditional on $\left(\mf{X}_1,\hdots,\mf{X}_T,\bs{\mu}_0,\bs{\alpha}_0\right)$. Then, we can formulate the theoretical linear IVs based on the expectation of \eqref{3.4} conditional on $\mathcal{F}_{t-1}$.

We use $\bs{\rho}_0=\left(\rho_{10},\hdots,\rho_{p0}\right)^{'}$ and $\bs{\delta}_0=\left(\delta_{10},\hdots,\delta_{p0}\right)^{'}$ to denote the true parameter values, and $\bs{\rho}=\left(\rho_{1},\hdots,\rho_{p}\right)^{'}$ and $\bs{\delta}=\left(\delta_{1},\hdots,\delta_{p}\right)^{'}$ to denote arbitrary parameter values. Let $\mf{S}(\bs{\rho})=\left(\mf{I}_n-\sum_{l=1}^p\rho_{l}\mf{M}_l\right)$, $\mf{S}\equiv\mf{S}(\bs{\rho}_0)$, $\mf{A}(\bs{\rho},\bs{\delta},\gamma)=\mf{S}^{-1}(\bs{\rho})\left(\gamma\mf{I}_n+\sum_{l=1}^p\delta_l\mf{M}_l\right)$, and $\mf{A}\equiv\mf{A}(\bs{\rho}_0,\bs{\delta}_0,\gamma_0)$. Then, the reduced form of \eqref{3.2} can be expressed as 
 \begin{align}
 \mf{Y}^{**}_t&=\mf{A}\mf{Y}^{**,-1}_{t-1}+\mf{S}^{-1}\left(\mf{X}^{*}_t\bs{\beta}_0+\alpha^{*}_{t0}\mf{1}_n+\mf{U}^{*}_t\right).
 \end{align}
 Let $\mf{Z}^{**}_t=\left(\mf{Y}^{**,-1}_{t-1},\mb{M}\mf{Y}^{**,-1}_{t-1},\mf{X}^{*}_t\right)$ be the $n\times k_z$ matrix, where $k_z=p+k+1$, and $\mf{Z}_N=\left(\mf{Z}^{**'}_1,\hdots,\mf{Z}^{**'}_{T-1}\right)^{'}$. Then, using \eqref{3.2}, we have 
 \begin{align}
  &\mf{M}_r\mf{Y}^{**}_t= \mf{G}_{r}\left(\mf{Z}^{**}_t\bs{\eta}_0+\alpha^{*}_{t0}\mf{1}_n\right)+\mf{G}_{r}\mf{U}^{*}_t,\quad r=1,\hdots,p,\label{3.6}
 \end{align}
 where $\bs{\eta}_0=(\gamma_0,\bs{\delta}^{'}_0,\bs{\beta}^{'}_0)^{'}$, and $\mf{G}_{r}=\mf{M}_r\mf{S}^{-1}$. We can use \eqref{3.6} to determine IVs for  $\{\mf{M}_j\mf{Y}^{**}_t\}_{j=1}^p$. In the case of $\mf{Y}^{**,-1}_{t-1}$, we can use all strictly exogenous variables $\mf{X}^{*}_s$ for $s=1,\hdots,T-1$, and the time lag variables $\mf{Y}^{*}_{0},\hdots,\mf{Y}^{*}_{t-1}$ as IVs.  Similarly, we can use $\mf{M}_j\mf{X}^{*}_s$ for $s=1,\hdots,T-1$, and $\mf{M}_j\mf{Y}^{*}_s$ for $s=0,1,\hdots,t-1$ as IVs for $\mf{M}_j\mf{Y}^{**,-1}_{t-1}$. Let $\mf{Q}_t$ be the $n\times k_q$ matrix of IVs for $t=1,\hdots,T-1$, where $k_q\geq k+2p+1$. For example, we may choose $\mf{Q}_t$ as 
\begin{align}
\left(\Y^{*}_{t-1},\,\mb{M}\Y^{*}_{t-1},\mb{M}^2\Y^{*}_{t-1},\mf{X}^{*}_t,\mb{M}\mf{X}^{*}_t,\mb{M}^2\mf{X}^{*}_t\right),
\end{align}
where $\mb{M}^2\Y^{*}_{t-1}=\left(\mf{M}^2_1\Y^{*}_{t-1},\hdots,\mf{M}_1\mf{M}_p\Y^{*}_{t-1},\mf{M}_2\mf{M}_1\Y^{*}_{t-1},\hdots,\mf{M}_1\mf{M}_p\Y^{*}_{t-1},\hdots,\mf{M}^2_p\Y^{*}_{t-1}\right)$, and $\mb{M}^2\mf{X}^{*}_t$  is defined similarly.  Denote $\mf{Q}_N=(\mf{Q}^{'}_1,\hdots,\mf{Q}^{'}_{T-1})^{'}$, $\mf{J}_N=\mf{I}_{T-1}\otimes\mf{J}_n$ and $\mf{S}_N(\bs{\rho})=\mf{I}_{T-1}\otimes\mf{S}(\bs{\rho})$. Then, the linear moment conditions based on $\mf{Q}_N$ can be formulated as 
\begin{align}
\mf{Q}^{'}_N\mf{J}_N\mf{U}_N(\bs{\theta}),
\end{align}
where $\bs{\theta}=\left(\bs{\rho}^{'},\bs{\eta}^{'}\right)^{'}$, $\mf{U}_N(\bs{\theta})=\left(\mf{U}^{*'}_1(\bs{\theta}),\hdots,\mf{U}^{*'}_{T-1}(\bs{\theta})\right)^{'}$ and $\mf{U}^{*}_t(\bs{\theta})=\mf{S}(\bs{\rho})\mf{Y}^{**}_t-\mf{Z}^{**}_t\bs{\eta}-\alpha^{*}_{t}\mf{1}_n$. Note that the transformed time fixed effects $\bs{\alpha}^{*}=(\alpha^{*}_{1},\hdots,\alpha^{*}_{T-1})^{'}$ will be eliminated in the moment function because $\mf{U}_N(\bs{\theta})$ is pre-multiplied  by $\mf{J}_N$.

Following \citet{Lee:2007} and \citet{Lee:2014}, we also consider the quadratic moment functions for estimation. The quadratic moment functions are based on the idea that the vector $\mf{P}_{l}\mf{J}_n\mf{U}^{*}_t$ can be uncorrelated with  $\mf{J}_n\mf{U}^{*}_t$ for an $n\times n$ matrix $\mf{P}_l$ satisfying $\tr\left(\mf{J}_n\mf{P}_l\mf{J}_n\right)=0$, while it may be correlated with  $\mf{G}_{r}\mf{U}^{*}_t$ in \eqref{3.6}. Let $\mf{P}_{lN}=\mf{I}_{T-1}\otimes \mf{P}_l$, and assume that there are $m$ such quadratic moment matrices. Then, the quadratic moment functions can be expressed as 
\begin{align}\label{3.10}
\mf{U}^{'}_N(\bs{\theta})\mf{J}_N\mf{P}_{lN}\mf{J}_N\mf{U}_N(\bs{\theta}),
\end{align}
for $l=1,2,\hdots,m$. Combining the linear and quadratic moment functions, we obtain the following vector of moment functions,
\begin{align}
\mf{g}_N(\bs{\theta})=
\begin{pmatrix}
\mf{U}^{'}_N(\bs{\theta})\mf{J}_N\mf{P}_{1N}\mf{J}_N\mf{U}_N(\bs{\theta})\\
\vdots\\
\mf{U}^{'}_N(\bs{\theta})\mf{J}_N\mf{P}_{mN}\mf{J}_N\mf{U}_N(\bs{\theta})\\
\mf{Q}^{'}_N\mf{J}_N\mf{U}_N(\bs{\theta})
\end{pmatrix}.
\end{align}
Let $\vec(\mf{P})$ be the vectorization of the square matrix $\mf{P}$, $\vec_D(\mf{P})$ be the column vector formed from the diagonal elements of $\mf{P}$ and $\mf{P}^s=\mf{P}+\mf{P}^{'}$. Define $\bs{\Omega}_N=\frac{1}{N}\E\left(\mf{g}_N(\bs{\theta}_0)\mf{g}_N(\bs{\theta}_0)\right)$. Then, using Lemma~\ref{l1}, it can be shown that\footnote{In applying Lemma~\ref{l1}, we use the fact that $\tr\left(\mf{A}^{'}\mf{B}\right)=\vec^{'}\left(\mf{A}\right)\vec\left(\mf{B}\right)=\vec^{'}\left(\mf{B}\right)\vec\left(\mf{A}\right)$, where $\mf{A}$ and $\mf{B}$ are any two $N\times N$ matrices.} 
\begin{align}
\bs{\Omega}_N=\plim_{n\to\infty}\frac{\sigma^4_0}{N}
\begin{pmatrix}
\bs{\Delta}_{mN}&\mf{0}_{m\times q}\\
\mf{0}_{q\times m}&\frac{1}{\sigma^2_0}\mf{Q}^{'}_N\mf{J}_N\mf{Q}_N
\end{pmatrix}
+\lim_{n\to\infty}\frac{\mu_4-3\sigma^4_0}{N}
\begin{pmatrix}
\bs{\omega}^{'}_{mN}\bs{\omega}_{mN}&\mf{0}_{m\times q}\\
\mf{0}_{q\times m}&\mf{0}_{q\times q}
\end{pmatrix},
\end{align}
where $\mu_4$ is the fourth moment of $u_{it}$, $\bs{\omega}_{mN}=\left(\vec_D\left(\mf{J}_N\mf{P}^{'}_{1N}\mf{J}_N\right),\hdots,\vec_D\left(\mf{J}_N\mf{P}^{'}_{mN}\mf{J}_N\right)\right)$ and
\begin{align*}
\bs{\Delta}_{mN}&=\left(\vec\left(\mf{J}_N\mf{P}^{'}_{1N}\right),\hdots,\vec\left(\mf{J}_N\mf{P}^{'}_{mN}\mf{J}_N\right)\right)^{'}\\
&\times \left(\vec\left(\mf{J}_N\mf{P}^s_{1N}\mf{J}_N\right),\hdots,\vec\left(\mf{J}_N\mf{P}^s_{mN}\mf{J}_N\right)\right).
\end{align*}
 Let $\hat{\bs{\Omega}}_N$ be a consistent estimator of $\bs{\Omega}_N$, i.e., $\hat{\bs{\Omega}}_N-\bs{\Omega}_N=o_p(1)$. Then, the optimal GMM estimator is defined as 
\begin{align}\label{3.12}
\hat{\bs{\theta}}_N=\argmin_{\bs{\theta}\in\bs{\Theta}}\mf{g}^{'}_N(\bs{\theta})\hat{\bs{\Omega}}^{-1}_N\mf{g}_N(\bs{\theta}).
\end{align}
To investigate the asymptotic properties of $\hat{\bs{\theta}}_N$, we require the following assumptions.
\begin{assumption}\label{a1}
The disturbance terms $u_{it}$, for $i=1,2,\hdots,n$, and $t=1,2,\hdots,T$, are i.i.d. across $i$ and $t$ with mean zero, variance $\sigma^2_0$ and $\E\left(|u_{it}|^{4+\kappa}\right)$ for some $\kappa>0$.
\end{assumption}
\begin{assumption}\label{a2}
The spatial weights matrices $\{\mf{M}_l\}_{l=1}^q$ are uniformly bounded in both row and column sums in absolute value.
\end{assumption}
\begin{assumption}\label{a3}
(i) $\mf{S}(\bs{\rho})$ is invertible for all $\bs{\rho}\in\bs{\Lambda}$, where $\bs{\Lambda}$ is a compact parameter space, and $\bs{\rho}_0$ is in the interior of $\bs{\Lambda}$. (ii) $\mf{S}^{-1}(\bs{\rho})$ is uniformly bounded in both row and column sums in absolute value.
\end{assumption}
\begin{assumption}\label{a4}
(i) $\mf{X}_t$ is non-stochastic with $\sum_{t=1}^T\sum_{i=1}^n|x_{it,l}|^{2+\epsilon}<\infty$ for some $\epsilon>0$, where $x_{it,l}$ is the $(i,t)$th element of the $l$th regressor for $l=1,\hdots,k$. Moreover, \linebreak $\lim_{n\to\infty}\frac{1}{N}\mf{X}^{'}_N\mf{J}_N\mf{X}_N$ exists and is non-singular, where $\mf{X}_N=\left(\mf{X}^{*'}_1,\hdots,\mf{X}^{*'}_{T-1}\right)^{'}$. (ii) $\bs{\mu}_0$ and $\bs{\alpha}_0$ are non-stochastic with $\sup_n\frac{1}{n}\sum_{i=1}^n|\mu_{i0}|^{2+\epsilon}<\infty$ and $\sup_T\frac{1}{T}\sum_{t=1}^T|\alpha_{t0}|^{2+\epsilon}<\infty$. 
\end{assumption}
\begin{assumption}\label{a5}
(i) $\Y^{*}_0=\sum_{h=0}^{\bar{h}}\mf{A}^h\mf{S}^{-1}\left(\mf{X}_{-h}\bs{\beta}_0+\bs{\mu}_0+\alpha_{-h0}\mf{1}_n+\mu_{\e}\mf{1}_n+\mf{U}_{-h}\right)$, where $\bar{h}$ can be finite or infinite. (ii) $\sum_{h=0}^{\infty}\text{abs}(\mf{A}^h)$ is uniformly bounded in both row and column sums in absolute value, where the $(i,j)$th element of $\text{abs}(\mf{A})$ is given by $|A_{ij}|$ and $A_{ij}$ is the $(i,j)$th element of $\mf{A}$.
\end{assumption}
\begin{assumption}\label{a6}
$\E\left( \mf{Q}_t|\mathcal{F}_{t-1}\right)=\mf{Q}_t$ and $\E\left(|q_{it,l}|^{2+\epsilon}\right)<\infty$, where $q_{it,l}$ is the $(i,t)$th element of the $l$th column of $\mf{Q}_t$. Moreover, $\plim_{n\to\infty}\frac{1}{N}\mf{Q}^{'}_N\mf{J}_N\left(\mf{Z}_N,\,\mf{L}_{N}\right)$ and \linebreak $\plim_{n\to\infty}\frac{1}{N}\mf{Q}^{'}_N\mf{J}_N\mf{Q}_N$ have full column ranks, where  $\mf{L}_{N}=\left(\mf{L}^{'}_{1},\hdots,\mf{L}^{'}_{T-1}\right)^{'}$ with $\mf{L}_{t}=\left(\mf{L}_{1,t},\hdots,\mf{L}_{p,t}\right)$ and $\mf{L}_{r,t}= \mf{G}_{r}\left(\mf{Z}^{**}_t\bs{\eta}_0+\alpha^{*}_t\mf{1}_n\right)$ for $r=1,2,\hdots,p$.
\end{assumption}
Assumption~\ref{a1} specifies the distribution of the elements of $\mf{U}_t$ for $t=1,2,\hdots,T$. The moment condition in this assumption is required for showing the asymptotic distribution of our set of moment functions. Assumptions~\ref{a2} and ~\ref{a3} are standard assumptions adopted in the literature for limiting the degree of spatial correlation at a manageable degree, e.g., among others, see \citet{KP:2010, Lee:2004}.    Assumption~\ref{a4} provides the regularity conditions for $\mf{X}_N$, $\bs{\mu}_0$ and $\bs{\alpha}_0$. The first part of Assumption~\ref{a5} specifies $\Y^{*}_0$, and the remaining parts are required to limit dependence over time and across cross section units (see \citet{Lee:2014} for the details). The sufficient conditions for the first part of Assumption~\ref{a3}, and the second part of \ref{a5} can be determined. Let $\Vert\cdot\Vert$ be any matrix norm. Then, the following respective conditions will be sufficient for ensuring these parts: (i) $\Vert\sum_{l=1}^p\rho_{l}\mf{M}_l\Vert<1$ and (ii) $\Vert\mf{A}(\bs{\rho},\bs{\delta},\gamma)\Vert<1$.  Note that 
\begin{align*}
\left\Vert\sum_{j=1}^{p}\rho_{l}\mf{M}_l\right\Vert\leq|\rho_1|\cdot\Vert\mf{M}_{1}\Vert+\hdots+|\rho_{p}|\cdot\Vert \mf{M}_{p}\Vert\leq\left(\sum_{j=1}^{p}|\rho_{j}|\right)\times\max_{1\leq j\leq p}\Vert\mf{M}_{_j}\Vert,
\end{align*}
Thus, a relatively restrictive condition for (i) is $\left(\sum_{j=1}^{p}|\rho_{j}|\right)\times\max_{1\leq j\leq p}\Vert\mf{M}_{_j}\Vert<1$. Similarly, we have 
\begin{align*}
&\left\Vert\mf{A}(\bs{\rho},\bs{\delta},\gamma)\right\Vert\leq\left\Vert \mf{S}^{-1}(\bs{\rho})\right\Vert\times\left\Vert\gamma \mf{I}_n+\sum_{l=1}^{p}\delta_{l}\mf{M}_{l}\right\Vert\\
&=\left\Vert \mf{I}_n+\left(\sum_{l=1}^{p}\rho_{l} \mf{M}_{l}\right)+\left(\sum_{l=1}^{p}\rho_{l} \mf{M}_{l}\right)^2+\hdots \right\Vert\times\left\Vert\gamma \mf{I}_n+\sum_{l=1}^{p}\delta_{l}\mf{M}_{l}\right\Vert\\
&\leq\frac{1}{1-\tau_1}\times\left(|\gamma|+ \left(\sum_{l=1}^{p}|\delta_{l}|\right)\times\max_{1\leq l\leq p}\Vert\mf{M}_{l}\Vert\right),
\end{align*} 
where $\tau_1=\left(\sum_{l=1}^{p}|\rho_{l}|\right)\times\max_{1\leq l\leq p}\left\Vert \mf{M}_{l}\right\Vert<1$ is guaranteed by the first condition. This result suggests that a relatively restrictive condition for (ii) is $\frac{1}{1-\tau_1}\times\left(|\gamma|+ \left(\sum_{l=1}^{p}|\delta_{l}|\right)\times\max_{1\leq l\leq p}\Vert\mf{M}_{l}\Vert\right)<1$. When the spatial weights matrices are row normalized these relatively restrictive conditions can be further simplified. For example, if we use the matrix row sum norm, we will get the following sufficient conditions: (i) $\left(\sum_{j=1}^{p}|\rho_{j}|\right)<1$ and (ii) $\left(\sum_{j=1}^{p}|\rho_{j}|+|\gamma|+ \sum_{l=1}^{p}|\delta_{l}|\right)<1$.

Assumption~\ref{a6} provides the regularity conditions for the IV matrix $\mf{Q}_t$. The first part states that $\mf{Q}_t$ is pre-determined in the sense that $\E\left( \mf{Q}_t|\mathcal{F}_{t-1}\right)=\mf{Q}_t$. The moment condition in this assumption is required for the application of a CLT to the set of our moment functions (see the CLT given in Lemma~\ref{l3}). The full column rank condition in Assumption~\ref{a6} gives the identification condition based on the linear moment function in our setting. See Appendix~\ref{A.C} for the details on the identification condition in our setting.

Let $\frac{\partial \mf{g}_N(\bs{\theta})}{\partial\bs{\theta}^{'}}=\left(\frac{\partial \mf{g}_N(\bs{\theta})}{\partial\bs{\rho}^{'}},\frac{\partial \mf{g}_N(\bs{\theta})}{\partial\bs{\eta}^{'}}\right)$. In Section~\ref{pt1} of Appendix,  we show that $\frac{1}{N}\frac{\partial \mf{g}_N(\bs{\theta}_0)}{\partial\bs{\theta}^{'}}=\mf{D}_{1N}+\mf{D}_{2N}+O_p(N^{-1/2})$, where $\mf{D}_{1N}=O(1)$ and $\mf{D}_{2N}=O(T^{-1})$.\footnote{The explicit forms of $\mf{D}_{1N}$ and $\mf{D}_{2N}$ are given in Section~\ref{pt1} of Appendix.} The following result gives the limiting distribution of  $\hat{\bs{\theta}}_N$ under the large $T$ and  finite $T$ cases. 

\begin{thm}\label{t1}
Under Assumptions~\ref{a1}-\ref{a6}, we have the following results,
\begin{enumerate}
\item When $T$ is finite and $n\to\infty$, we have
\begin{align}
\sqrt{n}\left(\hat{\bs{\theta}}_N-\bs{\theta}_0\right)\xrightarrow{d}N\left(\mf{0}_{k_z+p},\,\plim_{n\to\infty}\frac{1}{T-1}\left(\left(\mf{D}_{1N}+\mf{D}_{2N}\right)^{'}_N\bs{\Omega}^{-1}_N\left(\mf{D}_{1N}+\mf{D}_{2N}\right)\right)^{-1}\right).
\end{align}
\item When $T\to\infty$ and $n\to\infty$, we have
\begin{align}\label{3.14}
\sqrt{N}\left(\hat{\bs{\theta}}_N-\bs{\theta}_0\right)\xrightarrow{d}N\left(\mf{0}_{k_z+p},\,\plim_{n,T\to\infty}\left(\mf{D}^{'}_{1N}\bs{\Omega}^{-1}_N\mf{D}_{1N}\right)^{-1}\right).
\end{align}
\end{enumerate}
\end{thm}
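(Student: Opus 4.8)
The plan is to establish Theorem~\ref{t1} as a standard GMM limit-theory argument built on three ingredients that I would verify in order: (a) consistency of $\hat{\bs\theta}_N$; (b) a central limit theorem for the normalized moment vector $\frac{1}{\sqrt N}\mf{g}_N(\bs\theta_0)$ (resp.\ $\frac{1}{\sqrt n}\mf{g}_N(\bs\theta_0)$ in the finite-$T$ case); and (c) a mean-value expansion of the first-order conditions combined with the stated expansion $\frac{1}{N}\frac{\partial\mf g_N(\bs\theta_0)}{\partial\bs\theta'}=\mf D_{1N}+\mf D_{2N}+O_p(N^{-1/2})$. For consistency, I would first show that $\frac{1}{N}\mf{g}_N(\bs\theta)$ converges uniformly over the compact $\bs\Theta$ to a nonstochastic limit $\bar{\mf g}(\bs\theta)$, using Assumptions~\ref{a1}--\ref{a6} to control the linear and quadratic forms (the quadratic forms are handled via the trace bounds in the footnote to Lemma~\ref{l1} and the uniform-boundedness Assumptions~\ref{a2},~\ref{a3},~\ref{a5}), and then argue that $\bar{\mf g}(\bs\theta)=\mf 0$ only at $\bs\theta_0$; this identification step relies on the full-column-rank conditions in Assumption~\ref{a6} (for the linear part) together with the quadratic moments distinguishing $\bs\rho$ from $\bs\rho_0$, as referenced in Appendix~\ref{A.C}. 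Standard GMM arguments (e.g., the extremum-estimator consistency theorem) then give $\hat{\bs\theta}_N\xrightarrow{p}\bs\theta_0$, provided $\hat{\bs\Omega}_N^{-1}$ converges to a positive-definite limit.

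Next I would establish the CLT for the moment vector. Writing $\mf U_N(\bs\theta_0)=\mf U_N$, the linear part $\mf Q_N'\mf J_N\mf U_N$ is a sum of martingale-difference arrays once we exploit $\E(\mf Q_t\mid\mathcal F_{t-1})=\mf Q_t$ from Assumption~\ref{a6} and the fact that the Helmert transform preserves the martingale structure appropriately; the quadratic parts $\mf U_N'\mf J_N\mf P_{lN}\mf J_N\mf U_N$ are handled by the central limit theorem for linear-quadratic forms (this is presumably the content of Lemma~\ref{l3}), using the $(4+\kappa)$-moment bound in Assumption~\ref{a1}. The variance of the stacked vector is exactly $N\bs\Omega_N$ (resp.\ $n(T-1)\bs\Omega_N$), already computed in the excerpt via Lemma~\ref{l1}; I would check that $\bs\Omega_N$ is bounded and bounded away from singularity so that $\frac{1}{\sqrt N}\mf g_N(\bs\theta_0)\xrightarrow{d}N(\mf 0,\bs\Omega)$ with $\bs\Omega=\plim\bs\Omega_N$.

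The delta-method step is routine given the two inputs above. From the first-order condition $\big(\frac{\partial\mf g_N(\hat{\bs\theta}_N)}{\partial\bs\theta'}\big)'\hat{\bs\Omega}_N^{-1}\mf g_N(\hat{\bs\theta}_N)=\mf 0$, a mean-value expansion of $\mf g_N$ around $\bs\theta_0$ together with consistency of $\hat{\bs\theta}_N$, $\hat{\bs\Omega}_N$, and the Jacobian expansion yields
\begin{align*}
\sqrt N(\hat{\bs\theta}_N-\bs\theta_0)=-\left((\mf D_{1N}+\mf D_{2N})'\bs\Omega_N^{-1}(\mf D_{1N}+\mf D_{2N})\right)^{-1}(\mf D_{1N}+\mf D_{2N})'\bs\Omega_N^{-1}\,\frac{1}{\sqrt N}\mf g_N(\bs\theta_0)+o_p(1),
\end{align*}
from which the sandwich simplifies to $\big((\mf D_{1N}+\mf D_{2N})'\bs\Omega_N^{-1}(\mf D_{1N}+\mf D_{2N})\big)^{-1}$. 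In the finite-$T$ case the rate is $\sqrt n$ and the normalization by $(T-1)$ appears because $N=n(T-1)$; in the large-$T$ case, since $\mf D_{2N}=O(T^{-1})\to\mf 0$, the asymptotic variance collapses to $\big(\mf D_{1N}'\bs\Omega_N^{-1}\mf D_{1N}\big)^{-1}$, giving \eqref{3.14}. I expect the main obstacle to be the joint $n,T\to\infty$ bookkeeping: verifying that the remainder terms in the Jacobian expansion and in the quadratic-form CLT are genuinely $o_p(1)$ uniformly, and that the Helmert/$\mf J_n$ double transformation does not introduce dependence across $t$ that breaks the martingale-difference CLT — this requires careful use of Assumption~\ref{a5} to bound the contribution of the initial condition $\mf Y_0^*$ and of the geometrically decaying $\mf A^h$ terms. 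The consistency of $\hat{\bs\Omega}_N$ itself also needs the fourth-moment term $\mu_4$ to be consistently estimable, which again leans on Assumption~\ref{a1}.
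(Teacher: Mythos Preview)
Your proposal is correct and follows essentially the same route as the paper. The paper's proof devotes most of its space to the explicit computation of the Jacobian expansion $\frac{1}{N}\partial\mf{g}_N(\bs\theta_0)/\partial\bs\theta'=\mf{D}_{1N}+\mf{D}_{2N}+O_p(N^{-1/2})$ via Lemma~\ref{l1}, then invokes Lemma~\ref{l3} together with the Cram\'er--Wold device for the CLT of $\frac{1}{\sqrt N}\mf{g}_N(\bs\theta_0)$, and delegates the consistency and mean-value-expansion steps you spell out to the standard GMM argument in \citet[Proposition~2]{Lee:2007}; your outline simply makes those delegated steps explicit while taking the Jacobian expansion as already stated in the text preceding the theorem.
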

\begin{proof}
See  Section~\ref{pt1} of Appendix.
\end{proof}
Our estimator defined in \eqref{3.12} requires a consistent estimator of $\Omega_N$. We can use a plug-in estimator of $\Omega_N$ based on an initial GMM estimator, or alternatively, we can formulate a 2SLS estimator based on $\mf{Q}_N$. Let $\mf{Y}_N=\left(\mf{Y}^{**'}_1,\hdots,\mf{Y}^{**'}_{T-1}\right)^{'}$ and  $\mf{Z}_N=\left(\mf{Z}^{**'}_1,\hdots,\mf{Z}^{**'}_{T-1}\right)^{'}$. Then, the 2SLS estimator is
\begin{align}
\tilde{\bs{\theta}}_N=\left(\left(\mb{M}_N\mf{Y}_{N},\mf{Z}_N\right)^{'}\mf{M}_{\mf{Q}}\left(\mb{M}_N\mf{Y}_{N},\mf{Z}_N\right)\right)^{-1}\left(\mb{M}_N\mf{Y}_{N},\mf{Z}_N\right)^{'}\mf{M}_{\mf{Q}}\mf{Y}_N,
\end{align}
where $\mf{M}_{\mf{Q}}=\mf{J}_N\mf{Q}_N\left(\mf{Q}^{'}_N\mf{J}_N\mf{Q}_N\right)^{-1}\mf{Q}^{'}_N\mf{J}_N$ and 
$\mb{M}_N\mf{Y}_{N}=\left(\mf{M}_{1N}\mf{Y}_N,\hdots,\mf{M}_{pN}\mf{Y}_N\right)$ with $\mf{M}_{jN}=\mf{I}_{T-1}\otimes\mf{M}_j$ for $j=1,2,\hdots,p$. Our Theorem~\ref{t1} suggests that 
\begin{align}
\sqrt{n}\left(\tilde{\bs{\theta}}_N-\bs{\theta}_0\right)\xrightarrow{d}N\left(\mf{0},\,\sigma^2_0\plim_{n\to\infty}\left(\frac{1}{N}\left(\mf{L}_N,\mf{Z}_N\right)^{'}\mf{M}_{\mf{Q}}\left(\mf{L}_N,\mf{Z}_N\right)\right)^{-1}\right).
\end{align}
In the case of both the initial GMM and 2SLS estimators, the linear IV matrix can be $\mf{Q}_t=\left(\mf{Y}^{*}_{t-1},\mb{M}\mf{Y}^{*}_{t-1},\mb{M}^2\mf{Y}^{*}_{t-1}, \mf{X}^{*}_t,\mb{M} \mf{X}^{*}_t,\mb{M}^2 \mf{X}^{*}_t\right)$ for $t=1,2,\hdots,T-1$. We may consider the following quadratic moment matrices for the initial GMM estimator: $\mf{P}_j=\left(\mf{M}_j-\frac{\tr(\mf{M}_j\mf{J}_n)}{n-1}\mf{J}_n\right)$ and  $\mf{P}_{j+p}=\left(\mf{M}^2_j-\frac{\tr(\mf{M}^2_j\mf{J}_n)}{n-1}\mf{J}_n\right)$ for $j=1,2,\hdots,p$. Then,  the initial GMM estimator is given by $\tilde{\bs{\theta}}_N=\argmin_{\bs{\theta}\in\bs{\Theta}}\mf{g}^{'}_N(\bs{\theta})\mf{g}_N(\bs{\theta})$ where 
\begin{align}
\mf{g}_N(\bs{\theta})=
\begin{pmatrix}
\mf{U}^{'}_N(\bs{\theta})\mf{J}_N\mf{P}_{1N}\mf{J}_N\mf{U}_N(\bs{\theta})\\
\vdots\\
\mf{U}^{'}_N(\bs{\theta})\mf{J}_N\mf{P}_{2pN}\mf{J}_N\mf{U}_N(\bs{\theta})\\
\mf{Q}^{'}_N\mf{J}_N\mf{U}_N(\bs{\theta})
\end{pmatrix}.
\end{align}

We can use $\tilde{\bs{\theta}}_N$ to formulate the plug-in estimator of  $\Omega_N$, which requires the estimators of $\sigma^2_0$ and $\mu_4$. Let $\tilde{\mf{V}}_t=\mf{S}(\tilde{\bs{\rho}}_N)\mf{Y}^{**}_t-\mf{Z}^{**}_t\tilde{\bs{\eta}}_N$. Then, we can estimate $\sigma^2_0$ by $\tilde{\sigma}^2_N=\frac{1}{N}\sum_{t=1}^{T-1}\tilde{\mf{V}}^{'}_t\mf{J}_n\tilde{\mf{V}}_t$. Let $\Delta\tilde{\mf{V}}_t=\mf{S}(\tilde{\bs{\rho}}_N)\Delta\mf{Y}^{**}_t-\Delta\mf{Z}^{**}_t\tilde{\bs{\eta}}_N$. Then, following \citet{Lee:2014}, we can estimate $\mu_4$ by $\tilde{\mu}_4=\frac{1}{2N}\sum_{i=1}^n\sum_{t=2}^{T}\left(\left[\mf{J}_n\Delta\tilde{\mf{V}}_t\right]_i\right)^4-3\tilde{\sigma}^4$, where $\left[\mf{J}_n\Delta\tilde{\mf{V}}_t\right]_i$ is the $i$th element of $\mf{J}_n\Delta\tilde{\mf{V}}_t$.

Our set of moment functions in \eqref{3.10} depends on the IV matrix $\mf{Q}_N$ and the quadratic moment matrices $\mf{P}_{lN}$ for $l=1,2,\hdots,m$. The asymptotic efficiency of $\hat{\bs{\theta}}_N$ should be considered in choosing the IV and quadratic moment matrices. The best set of IV and quadratic moment matrices is the set that leads to the most efficient GMM estimator. When $T$ is large, the precision matrix $\hat{\bs{\theta}}_N$ takes a simple form allowing for determining the best set of IV and quadratic moment matrices.  When $T$ is large, the proof of Theorem~\ref{t1} indicates that $\frac{1}{N}\frac{\partial \mf{g}_N(\bs{\theta}_0)}{\partial\bs{\theta}^{'}}=\mf{D}_{1N}+O_p(N^{-1/2})$, where
\begin{align}
\mf{D}_{1N}=-\frac{1}{N}
\begin{pmatrix}
\sigma^2_0\mf{C}_{N}&\mf{0}_{m\times k_z}\\
\mf{Q}^{'}_N\mf{J}_N\mf{L}_{N}&\mf{Q}^{'}_N\mf{J}_N\mf{Z}_{N}
\end{pmatrix}.
\end{align}
Then, \eqref{3.14} shows that the precision matrix of $\sqrt{N}\left(\hat{\bs{\theta}}_N-\bs{\theta}_0\right)$ is 
\begin{align}
\mf{D}^{'}_{1N}\bs{\Omega}^{-1}_N\mf{D}_{1N}&=\frac{1}{N}
\begin{pmatrix}
\mf{C}^{'}_N\left(\bs{\Delta}_{mN}+\frac{\mu_4-3\sigma^4_0}{\sigma^4_0}\bs{\omega}^{'}_{mN}\bs{\omega}_{mN}\right)^{-1}\mf{C}_N&\mf{0}_{p\times k_z}\\
\mf{0}_{k_z\times p}&\mf{0}_{k_z\times k_z}
\end{pmatrix}\nonumber\\
&+\frac{1}{N\sigma^2_0}\left(\mf{L}_N,\, \mf{Z}_N\right)^{'}\mf{M}_{\mf{Q}}\left(\mf{L}_N,\, \mf{Z}_N\right).
\end{align}
Since the above precision matrix has the same form as the one given in \citet{Lee:2014}, we use their  approach to determine the best set of quadratic moment matrices.  We will choose the best quadratic matrices by maximizing $\mf{C}^{'}_N\left(\bs{\Delta}_{mN}+\frac{\mu_4-3\sigma^4_0}{\sigma^4_0}\bs{\omega}^{'}_{mN}\bs{\omega}_{mN}\right)^{-1}\mf{C}_N$. As shown in \citet{Lee:2014}, these matrices are 
\begin{align}
\mf{P}^{*}_j=\left(\mf{G}_j-\frac{\tr\left(\mf{G}_j\mf{J}_n\right)}{n-1}\mf{J}_n\right)+c\left(\Diag\left(\mf{J}_n\mf{G}_j\mf{J}_n\right)-\frac{\tr\left(\mf{G}_j\mf{J}_n\right)}{n}\mf{I}_n\right),\quad j=1,2,\hdots,p,
\end{align}
where $c=\left(\frac{n}{n-2}\right)^2\left(\frac{1}{n/(n-2)+(\eta_4-3)/2}-\frac{n-2}{n}\right)$ and  $\eta_4=\mu_4/\sigma^4_0$. 

In the case of the best linear moment function, we should consider the conditional mean $\E\left(\mb{M}\mf{Y}^{**}_t,\mf{Z}^{**}_t|\mathcal{F}_{t-1}\right)$. The conditional mean of $\mf{Y}^{**}_{t-1}$ can be determined from
$\mf{Y}^{**,-1}_{t-1}=c_t\left(\mf{Y}^{*}_{t-1}-\frac{1}{T-t}\sum_{s=t}^{T-1}\mf{Y}^{*}_s\right)$, where $c_t=\left(\frac{T-t}{T-t+1}\right)^{1/2}$. Using Lemma~\ref{l4}, $\E\left(\mf{Y}^{**}_{t-1}|\mathcal{F}_{t-1}\right)$ can be approximated by 
\begin{align*}
\mf{H}_t&=c_t\left(\left(\mf{I}_n-\frac{1}{T-t}\sum_{h=1}^{T-t}\mf{A}^h\right)\mf{Y}^{*}_{t-1}- \frac{1}{T-t}\sum_{r=t}^{T-1}\left(\sum_{h=0}^{T-r-1}\mf{A}^{h}\right)\mf{S}^{-1}\left(\mf{X}_{r}\bs{\beta}_0+\alpha_{r,0}\mf{1}_n\right)\right)\nonumber\\
&-c_t\frac{1}{(T-t)(t-1)}\sum_{r=t}^{T-1}\left(\sum_{h=0}^{T-r-1}\mf{A}^{h}\right)\mf{S}^{-1}\sum_{s=1}^{t-1}\left(\mf{S}\mf{Y}^{*}_s-\mf{Z}^{*}_s\bs{\eta}_0-\alpha_{s0}\mf{1}_n\right),
\end{align*}
where $\mf{Z}^{*}_s=\left(\mf{Y}^{*}_{s-1},\mb{M}\mf{Y}^{*}_{s-1},\mf{X}_s\right)$. Thus, the best theoretical IV $\mf{J}_n\E\left(\mf{Y}^{**}_{t-1}|\mathcal{F}_{t-1}\right)$ can be approximated by $\mf{J}_n\mf{H}_t$.\footnote{Note that when $t=1$, we may simply use \linebreak $\mf{H}_1=c_1\left(\left(\mf{I}_n-\frac{1}{T-1}\sum_{h=1}^{T-1}\mf{A}^h\right)\mf{Y}^{*}_{0}- \frac{1}{T-1}\sum_{r=1}^{T-1}\left(\sum_{h=0}^{T-r-1}\mf{A}^{h}\right)\mf{S}^{-1}\left(\mf{X}_{r}\bs{\beta}_0+\alpha_{r,0}\mf{1}_n\right)\right)$.} Similarly, the best IVs for $\mf{J}\mf{Z}^{**}_t$ can be taken as $\mf{J}_n\mf{K}_t$, where $\mf{K}_t=\left(\mf{H}_t,\,\mb{M}\mf{H}_t,\,\mf{X}^{*}_t\right)$. Using \eqref{3.6}, the best IV for $\mf{J}_n\mf{M}_r\mf{Y}^{**}_t$ is $\mf{J}_n\mf{G}_r\left(\mf{K}_t\bs{\delta}_0+\alpha^{*}_{t0}\mf{1}_n\right)$ for $r=1,2,\hdots,p$. Overall, we may use $\mf{J}_n\mf{Q}^{*}_t$ as the IV matrix for $\mf{J}_n\left(\mb{M}\mf{Y}^{**}_t,\,\mf{Z}^{**}_t\right)$, where
\begin{align}\label{3.20}
\mf{Q}^{*}_t=\left(\mf{G}_1\left(\mf{K}_t\bs{\delta}_0+\alpha^{*}_{t0}\mf{1}_n\right),\hdots,\mf{G}_p\left(\mf{K}_t\bs{\delta}_0+\alpha^{*}_{t0}\mf{1}_n\right),\,\mf{K}_t\right),\quad t=1,2,\hdots,T-1.
\end{align}
The feasible version of $\mf{Q}^{*}_t$ can be obtained by substituting consistent estimators of the unknown parameters into \eqref{3.20}. Note that if $\mf{M}_r\mf{1}_n=\mf{1}_n$, i.e., when $\mf{M}_r$ is row-normalized, we have $\mf{J}_n\mf{M}_r=\mf{J}_n\mf{M}_r\mf{J}_n$. This property suggests that the time fixed effects will dropped from $\mf{J}_n\mf{Q}^{*}_t$ since $\mf{J}_n\mf{1}_n=\mf{0}_n$. However, if $\mf{M}_r$'s are not row normalized, then we also need an estimator of the time fixed effects to get a feasible version of $\mf{Q}^{*}_t$. Let $\hat{\bs{\vartheta}}_t=\mf{S}(\hat{\bs{\lambda}}_n)\mf{Y}^{*}_t-\mf{Z}^{*}_t\hat{\bs{\eta}}_n$ be an estimator of $\bs{\mu}_0+\mu_{\e}\mf{1}_n+\alpha_{t0}\mf{1}_n$. Under the normalization assumption of the form $\mf{1}^{'}_n\left(\bs{\mu}_0+\mu_{\e}\mf{1}_n\right)=\mf{0}_n$, we can estimate the time fixed effects by $\hat{\alpha}_t=\frac{1}{n}\mf{1}^{'}_n\hat{\bs{\vartheta}}_t$ for $t=1,2,\hdots,T$.\footnote{Note that when $T$ is large, $\tilde{\bs{\mu}}_0=\left(\bs{\mu}_0+\mu_{\e}\mf{1}_n\right)$ can be estimated by $\hat{\tilde{\bs{\mu}}}_n=\frac{1}{T}\sum_{t=1}^T\left(\hat{\bs{\vartheta}}_t-\frac{1}{n}\mf{1}^{'}_n\hat{\bs{\vartheta}}_t\mf{1}_n\right)$.} The following theorem provides our result on the best GMM estimator formulated with the feasible  versions of $\mf{Q}^{*}_t$ and $\mf{P}^{*}_j$ for $j=1,2,\hdots,p$.
\begin{thm}\label{t2}
Let $\hat{\mf{Q}}_t$ be the feasible version of $\mf{Q}^{*}_t$ for $t=1,2,\hdots,T-1$, and $\hat{\mf{P}}^{*}_j$ be the feasible version of $\mf{P}^{*}_j$ for $j=1,2,\hdots,p$. Consider the set of moment functions $\mf{g}_N(\bs{\theta})$ formulated with $\hat{\mf{Q}}_t$ and $\hat{\mf{P}}^{*}_j$. Then, the feasible best GMM estimator defined by $\hat{\bs{\theta}}^{*}_N=\argmin_{\bs{\theta}\in\bs{\Theta}}\mf{g}^{'}_N(\bs{\theta})\hat{\bs{\Omega}}^{-1}_N\mf{g}_N(\bs{\theta})$ has the following asymptotic distribution
\begin{align}\label{3.21}
\sqrt{N}\left(\hat{\bs{\theta}}^{*}_N-\bs{\theta}_0\right)\xrightarrow{d}N\left(\mf{0}_{k_z+p},\,\bs{\Sigma}^{*-1}_N\right),
\end{align}
where 
\begin{align}
\bs{\Sigma}^{*}_N=
\lim_{n,T\to\infty}
\begin{pmatrix}
\mf{C}^{*}_N/N&\mf{0}_{p\times k_z}\\
\mf{0}_{k_z\times p}&\mf{0}_{k_z\times k_z}
\end{pmatrix}
+\plim_{n,T\to\infty}\frac{1}{N\sigma^2_0}\left(\mf{L}_N,\, \mf{Z}_N\right)^{'}\mf{J}_N\left(\mf{L}_N,\, \mf{Z}_N\right),
\end{align}
with
\begin{align}
\mf{C}^{*}_{N}=
\begin{pmatrix}
\tr\left(\mf{G}^{'}_{1N}\mf{J}_N\mf{P}^{*s}_{1N}\mf{J}_N\right)&\hdots&\tr\left(\mf{G}^{'}_{pN}\mf{J}_N\mf{P}^{*s}_{1N}\mf{J}_N\right)\\
\vdots&\ddots&\vdots\\
\tr\left(\mf{G}^{'}_{1N}\mf{J}_N\mf{P}^{*s}_{pN}\mf{J}_N\right)&\hdots&\tr\left(\mf{G}^{'}_{pN}\mf{J}_N\mf{P}^{*s}_{pN}\mf{J}_N\right)
\end{pmatrix}.
\end{align}
\end{thm}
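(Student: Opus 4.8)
The plan is to obtain Theorem~\ref{t2} as a specialization of the large-$T$ case of Theorem~\ref{t1} to the moment vector $\mf{g}_N(\bs{\theta})$ constructed from the best IV matrices $\hat{\mf{Q}}_t$ and the best quadratic matrices $\hat{\mf{P}}^{*}_j$, followed by a simplification of the resulting precision matrix. I would first check that the infeasible choices $\mf{Q}^{*}_t$ and $\mf{P}^{*}_j$ satisfy the conditions imposed on the IV and quadratic moment matrices in Assumptions~\ref{a1}--\ref{a6}: the uniform boundedness in row and column sums of $\mf{G}_r=\mf{M}_r\mf{S}^{-1}$ and of $\sum_{h\ge0}\text{abs}(\mf{A}^h)$, guaranteed by Assumptions~\ref{a2}, \ref{a3} and \ref{a5}, makes $\mf{Q}^{*}_t$ have uniformly bounded moments, while the explicit form of $\mf{H}_t$ (a fixed linear combination of $\mf{Y}^{*}_0,\dots,\mf{Y}^{*}_{t-1}$ and nonstochastic terms) shows that $\E(\mf{Q}^{*}_t\mid\mathcal{F}_{t-1})=\mf{Q}^{*}_t$; the constraint $\tr(\mf{J}_n\mf{P}^{*}_j\mf{J}_n)=0$ is built into the definition of $\mf{P}^{*}_j$; and the full-column-rank requirement of Assumption~\ref{a6} is exactly the nonsingularity of $\bs{\Sigma}^{*}_N$. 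With these in place, the second conclusion of Theorem~\ref{t1} gives $\sqrt{N}(\hat{\bs{\theta}}^{*}_N-\bs{\theta}_0)\xrightarrow{d}N(\mf{0},(\plim\,\mf{D}^{'}_{1N}\bs{\Omega}^{-1}_N\mf{D}_{1N})^{-1})$, so it remains to (a) justify passing from the infeasible to the feasible construction and (b) evaluate $\plim\,\mf{D}^{'}_{1N}\bs{\Omega}^{-1}_N\mf{D}_{1N}$.

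For (a) I would start from a $\sqrt{N}$-consistent preliminary estimator $\tilde{\bs{\theta}}_N$ (the 2SLS or initial GMM estimator described after Theorem~\ref{t1}) and the associated estimators $\tilde{\sigma}^2_N$, $\tilde{\mu}_4$, and, when the $\mf{M}_r$ are not row normalized, the time-effect estimates $\hat{\alpha}_t$. Writing $\hat{\mf{Q}}_t-\mf{Q}^{*}_t$ and $\hat{\mf{P}}^{*}_j-\mf{P}^{*}_j$ as smooth functions of the estimation errors $\tilde{\bs{\theta}}_N-\bs{\theta}_0$, $\tilde{\sigma}^2_N-\sigma^2_0$, $\tilde{\mu}_4-\mu_4$ and $\hat{\alpha}_t-\alpha_{t0}$ and expanding by the mean value theorem, each term is $O_p(N^{-1/2})$ with coefficient matrices uniformly bounded in row and column sums by Assumptions~\ref{a2}, \ref{a3} and \ref{a5}(ii), the latter controlling the truncated sums $\sum_{h=0}^{T-r-1}\mf{A}^h$ that appear in $\mf{H}_t$. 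Together with the bound of Lemma~\ref{l4} on $\E(\mf{Y}^{**}_{t-1}\mid\mathcal{F}_{t-1})-\mf{H}_t$, this yields $\frac{1}{\sqrt{N}}(\hat{\mf{Q}}_N-\mf{Q}^{*}_N)^{'}\mf{J}_N\mf{U}_N(\bs{\theta}_0)=o_p(1)$, $\frac{1}{N}(\hat{\mf{Q}}_N-\mf{Q}^{*}_N)^{'}\mf{J}_N(\mf{L}_N,\mf{Z}_N)=o_p(1)$, and $\mf{U}^{'}_N(\bs{\theta}_0)\mf{J}_N(\hat{\mf{P}}^{*}_{jN}-\mf{P}^{*}_{jN})\mf{J}_N\mf{U}_N(\bs{\theta}_0)=o_p(\sqrt{N})$, so the feasible moment vector, its derivative, and $\hat{\bs{\Omega}}_N$ have the same probability limits as in the infeasible problem and the limiting distribution is unaffected.

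For (b) I would use the block form of $\plim\,\mf{D}^{'}_{1N}\bs{\Omega}^{-1}_N\mf{D}_{1N}$ displayed just before Theorem~\ref{t2}. In the linear block, the point of taking $\mf{Q}^{*}_t$ equal to the approximate conditional mean of $(\mb{M}\mf{Y}^{**}_t,\mf{Z}^{**}_t)$ given $\mathcal{F}_{t-1}$ is that $\mf{J}_n\mf{L}_{r,t}$ and $\mf{J}_n\mf{Z}^{**}_t$ lie in the column span of $\mf{J}_n\mf{Q}^{*}_t$ up to residuals of the form $\mf{G}_r\big(\mf{Z}^{**}_t-\E(\mf{Z}^{**}_t\mid\mathcal{F}_{t-1})\big)\bs{\eta}_0$ and $\mf{Z}^{**}_t-\E(\mf{Z}^{**}_t\mid\mathcal{F}_{t-1})$, which are orthogonal to $\mathcal{F}_{t-1}$-measurable quantities; a law of large numbers for martingale-difference arrays (as in the proof of Theorem~\ref{t1}) together with the boundedness assumptions then gives $\frac{1}{N}(\mf{L}_N,\mf{Z}_N)^{'}\mf{M}_{\mf{Q}^{*}}(\mf{L}_N,\mf{Z}_N)=\frac{1}{N}(\mf{L}_N,\mf{Z}_N)^{'}\mf{J}_N(\mf{L}_N,\mf{Z}_N)+o_p(1)$, which is the second summand of $\bs{\Sigma}^{*}_N$. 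In the quadratic block, the contribution is $\mf{C}^{'}_N(\bs{\Delta}_{mN}+\frac{\mu_4-3\sigma^4_0}{\sigma^4_0}\bs{\omega}^{'}_{mN}\bs{\omega}_{mN})^{-1}\mf{C}_N$, which has precisely the structure analyzed in \citet{Lee:2014}; I would invoke their argument that this matrix is maximized, in the positive semidefinite order over admissible quadratic matrices, by $\mf{P}^{*}_j$ with the stated constant $c$, and that at this maximizer it reduces to $\mf{C}^{*}_N$ with entries $\tr(\mf{G}^{'}_{iN}\mf{J}_N\mf{P}^{*s}_{jN}\mf{J}_N)$ --- essentially the identity that, when $\bs{\Delta}_{mN}$ and $\bs{\omega}_{mN}$ are formed from the $\mf{P}^{*}_j$, $\bs{\Delta}_{mN}+\frac{\mu_4-3\sigma^4_0}{\sigma^4_0}\bs{\omega}^{'}_{mN}\bs{\omega}_{mN}=\mf{C}^{*}_N$. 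Adding the two blocks gives $\bs{\Sigma}^{*}_N$ and hence \eqref{3.21}.

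I expect step (a), the passage to the feasible construction, to be the main obstacle: $\mf{Q}^{*}_t$ depends on the unknown $\mf{A}$ only through the truncated geometric sums inside $\mf{H}_t$ and through the conditional-mean approximation, so the parameter-estimation error must be controlled uniformly in $t$ and in the truncation length, which is where Assumption~\ref{a5}(ii) and Lemma~\ref{l4} are doing the real work; the $t=1$ boundary term and the handling of $\hat{\alpha}_t$ in the non-row-normalized case add further bookkeeping. Once the problem is reduced to the block expression preceding Theorem~\ref{t2}, the quadratic-block simplification is a direct appeal to the optimal-quadratic-moment computation in \citet{Lee:2014}, and the linear-block simplification is the familiar fact that the best instrument is the conditional mean.
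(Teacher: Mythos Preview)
Your proposal is correct and follows essentially the same route as the paper: start from the large-$T$ precision matrix $\mf{D}^{'}_{1N}\bs{\Omega}^{-1}_N\mf{D}_{1N}$ displayed before Theorem~\ref{t2}, invoke \citet{Lee:2014} to reduce the quadratic block to $\mf{C}^{*}_N$, show the linear block collapses to $\frac{1}{N\sigma^2_0}(\mf{L}_N,\mf{Z}_N)^{'}\mf{J}_N(\mf{L}_N,\mf{Z}_N)$ when $\mf{Q}^{*}_N$ is used, and handle feasibility by deferring to the argument in their Theorem~2. The paper is terser---it cites Lemma~\ref{l1}(4) and Lemma~5 of \citet{Lee:2014} for the linear-block identity rather than spelling out the martingale-difference reasoning, and it does not unpack the feasibility step (a) at all---but the logical skeleton is identical.
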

\begin{proof}
See  Section~\ref{pt2} of Appendix.
\end{proof}

\section{A Monte Carlo Study}\label{mc}
In this section, we investigate the finite sample properties of the best GMM estimator provided in Theorem~\ref{t2}. To that end, we consider $y_{it}=h_{it}^{1/2}\e_{it}$, and  the following cases for $h_{it}$:
\begin{align*}
&M_1:\, \log h_{it}=\sum_{j=1}^n\rho_{0}m_{ij}\log y^2_{jt}+\gamma_0\log y^2_{j,t-1}+\sum_{j=1}^n\delta_{0}m_{ij}\log y^2_{j,t-1} + \mathbf{x}^{'}_{it}\bs{\beta}_0 +\mu_{i0}+\alpha_{t0}\\
&M_2:\, \log h_{it}=\sum_{j=1}^n\rho_{0}m_{ij}\log y^2_{jt}+\gamma_0\log y^2_{j,t-1}+\sum_{j=1}^n\delta_{0}m_{ij}\log y^2_{j,t-1} +\mathbf{x}^{'}_{it}\bs{\beta}_0 + \mu_{i0},\\
&M_3:\, \log h_{it}=\sum_{l=1}^2\sum_{j=1}^n\rho_{l0}m_{l,ij}\log y^2_{jt}+\gamma_0\log y^2_{j,t-1}+\sum_{l=1}^2\sum_{j=1}^n\delta_{l0}m_{l,ij}\log y^2_{j,t-1}+\mathbf{x}^{'}_{it}\bs{\beta}_0+\mu_{i0}+\alpha_{t0},
\end{align*}
where $\mu_{i0}$'s and $\alpha_{t0}$'s are  i.i.d $N(0,1)$, and $\mf{x}_{it}\sim$ i.i.d $N(\mf{0}_{2\times1},\mf{I}_2)$ with $\bs{\beta}_0=(0.5,1)^{'}$. For the first two models, denoted $M_{1}$ and $M_{2}$, we consider two different temporal dependence structures -- a weakly temporal dependent model and a strongly persistent model. Specifically, we set $(\rho_0,\gamma_0,\delta_0)^{'}=\{(0.2,0.2, -0.2)^{'},\,(0.2,0.8,-0.2)^{'}\}$ in $M_{1}$ and $M_{2}$, respectively. Moreover, $M_2$ considers the case without temporal fixed effects, i.e., $\alpha_{t0} = 0$ for all $t$. In $M_3$, including higher-order spatial lags, we set $(\rho_{10},\rho_{20},\gamma_0,\delta_{10},\delta_{20})=(0.6,0.2,0.1,0.01,0.01)^{'}$. That is, we have very weak temporal and spatiotemporal effects. In all cases, we consider row-normalized queen contiguity spatial weights matrices, where $\mf{M}_1$ has positive weights for the first-lag neighbors and $\mf{M}_2$ for the second-lag neighbors. Furthermore, we consider two distributions to generate the disturbance terms: (i) $\e_{it}\sim$ i.i.d $N(0,1)$ and (ii) $\e_{it}\sim$ i.i.d $t_3$, where $t_3$ is the Student's $t$ distribution with $3$ degrees of freedom.  We set $(n,T)=\{(64,20),(100,40)\}$, and the number of repetitions to $1000$ in all cases. Thus, we considered 12 different model specifications in total.

The results of our Monte Carlo simulation study are reported in Tables \ref{table:mc1a} - \ref{table:mc2}. To evaluate the estimation performance, we report the average bias across all replications and the mean absolute errors (MAE). For all simulation settings, our theoretical findings are supported in the finite sample case. More precisely, when $n$ and $T$ increase, our suggested GMM estimator reports smaller bias and MAE in all cases. Comparing the performance with respect to the error distribution, we see slightly lower MAEs in the heavy-tailed case. These differences are insignificant in almost all cases ($\alpha = 0.05$). Overall, these results indicate that our suggested GMM estimator has good finite sample properties in terms of bias and MAE.

\begin{table}
\caption{Average bias and mean absolute errors (MAE) of the estimated parameters of Model $M_{1}$.}\label{table:mc1a}
\begin{center}
\spacingset{1} 
\begin{tabular}{l l c c c c}
\hline
		& & \multicolumn{2}{c}{$\e_{it}\sim$ i.i.d $N(0,1)$} & \multicolumn{2}{c}{$\e_{it}\sim$ i.i.d $t_3$} \\
		& & $n = 64$ & $n = 100$                             & $n = 64$ & $n = 100$                          \\
		& & $T = 20$ & $T = 40$                              & $T = 20$ & $T = 40$                           \\
	\hline
		& $\rho_0 = 0.2$     &  0.0034 &  0.0041 & -0.0001 &  0.0027 \\
		& $\gamma_0 = 0.2$   &  0.0001 & -0.0008 & -0.0004 &  0.0005 \\
Bias	        & $\delta_0 = -0.2$  & -0.0013 & -0.0004 &  0.0007 & -0.0009 \\
		& $\beta_{00} = 0.5$ & -0.0033 & -0.0023 & -0.0020 & -0.0003 \\
		& $\beta_{10} = 1$   & -0.0056 & -0.0007 & -0.0026 & -0.0035 \\[.1cm]
		& $\rho_0 = 0.2$     &  0.1142 &  0.0590 &  0.1183 &  0.0653 \\ 
		& $\gamma_0 = 0.2$   &  0.0266 &  0.0139 &  0.0253 &  0.0134 \\
MAE 	        & $\delta_0 = -0.2$  &  0.0612 &  0.0321 &  0.0606 &  0.0348 \\ 
		& $\beta_{00} = 0.5$ &  0.0527 &  0.0292 &  0.0581 &  0.0324 \\ 
		& $\beta_{10} = 1$   &  0.0517 &  0.0287 &  0.0573 &  0.0318 \\
	\hline
\end{tabular}
\spacingset{1.8} 
\end{center}
\end{table}

\begin{table}
\caption{Average bias and mean absolute errors (MAE) of the estimated parameters of Model $M_{2}$.}\label{table:mc1b}
\begin{center}
\spacingset{1} 
\begin{tabular}{l l c c c c}
\hline
		& & \multicolumn{2}{c}{$\e_{it}\sim$ i.i.d $N(0,1)$} & \multicolumn{2}{c}{$\e_{it}\sim$ i.i.d $t_3$} \\
		& & $n = 64$ & $n = 100$                             & $n = 64$ & $n = 100$                          \\
		& & $T = 20$ & $T = 40$                              & $T = 20$ & $T = 40$                           \\
	\hline
		& $\rho_0 = 0.2$     &  0.0214 &  0.0119 &  0.0204 &  0.0035 \\
		& $\gamma_0 = 0.8$   & -0.0014 & -0.0012 & -0.0025 & -0.0013 \\
Bias	        & $\delta_0 = -0.2$  &  0.0109 & -0.0019 & -0.0036 &  0.0033 \\
		& $\beta_{00} = 0.5$ & -0.0031 &  0.0001 & -0.0039 & -0.0014 \\
		& $\beta_{10} = 1$   & -0.0054 & -0.0018 & -0.0091 &  0.0002 \\[.1cm]
		& $\rho_0 = 0.2$     &  0.1096 &  0.0582 &  0.1161 &  0.0623 \\ 
		& $\gamma_0 = 0.8$   &  0.0372 &  0.0167 &  0.0383 &  0.0157 \\
MAE 	        & $\delta_0 = -0.2$  &  0.1184 &  0.0627 &  0.1238 &  0.0643 \\ 
		& $\beta_{00} = 0.5$ &  0.0526 &  0.0296 &  0.0553 &  0.0321 \\ 
		& $\beta_{10} = 1$   &  0.0532 &  0.0276 &  0.0581 &  0.0308 \\
	\hline
\end{tabular}
\spacingset{1.8} 
\end{center}
\end{table}

\begin{table}
\caption{Average bias and mean absolute errors (MAE) of the estimated parameters of Model $M_3$.}\label{table:mc2}
\begin{center}
\spacingset{1} 
\begin{tabular}{l l c c c c}
\hline
		& & \multicolumn{2}{c}{$\e_{it}\sim$ i.i.d $N(0,1)$} & \multicolumn{2}{c}{$\e_{it}\sim$ i.i.d $t_3$} \\
		& & $n = 49$ & $n = 100$                             & $n = 49$ & $n = 100$                          \\
		& & $T = 20$ & $T = 40$                              & $T = 20$ & $T = 40$                           \\
	\hline
		& $\rho_{10} = 0.6$     &  0.0112 &  0.0026 &  0.0150 &  0.0037 \\
		& $\rho_{20} = 0.2$     &  0.0122 &  0.0041 &  0.0134 &  0.0039 \\
		& $\gamma_0 = 0.1$      & -0.0007 & -0.0004 & -0.0008 &  0.0003 \\
Bias	        & $\delta_{10} = 0.01$  & -0.0066 & -0.0014 & -0.0076 & -0.0012 \\
		& $\delta_{20} = 0.01$  & -0.0059 & -0.0023 & -0.0070 & -0.0037 \\
		& $\beta_{00} = 0.5$    & -0.0006 &  0.0022 & -0.0043 & -0.0013 \\
		& $\beta_{10} = 1$      & -0.0064 & -0.0046 & -0.0059 & -0.0062 \\[.1cm]
		& $\rho_{10} = 0.6$     &  0.0794 &  0.0458 &  0.0864 &  0.0479 \\
		& $\rho_{20} = 0.2$     &  0.1183 &  0.0650 &  0.1271 &  0.0659 \\
		& $\gamma_0 = 0.1$      &  0.0254 &  0.0136 &  0.0261 &  0.0135 \\
MAE 	        & $\delta_{10} = 0.01$  &  0.0514 &  0.0277 &  0.0525 &  0.0279 \\
		& $\delta_{10} = 0.01$  &  0.0646 &  0.0338 &  0.0648 &  0.0351 \\
		& $\beta_{00} = 0.5$    &  0.0559 &  0.0293 &  0.0556 &  0.0313 \\ 
		& $\beta_{10} = 1$      &  0.0534 &  0.0306 &  0.0599 &  0.0337 \\
	\hline
\end{tabular}
\spacingset{1.8} 
\end{center}
\end{table}

\section{Real-World Example: Intra-city housing market risk}\label{emp}

The real-estate market is undoubtedly a financial market with the most apparent spatial and temporal dependence. The location of a property, along with size and condition, is an important price-determining influence. Hence, there are pronounced spatial spillover effects in real-estate prices, in addition to the natural temporal dependence. Furthermore, taxes may significantly affect the market, such as property taxes or real-estate transfer taxes. In general, however, taxes appear to play a subordinate role in purchasing decisions -- with one exception, namely, if the transfer taxes change, some sales could be shifted for a certain period. If, for example, the land transfer tax increases by one percentage point and one wants to buy a property in January, it is profitable to conclude the purchase contract already in December. This results in a shift of property sales from January to December, and thus more sales in December and fewer sales in January than expected. However, does this also impact the risk of the real-estate market?

\begin{figure}
	\begin{center}
		\includegraphics[width = 0.40\textwidth]{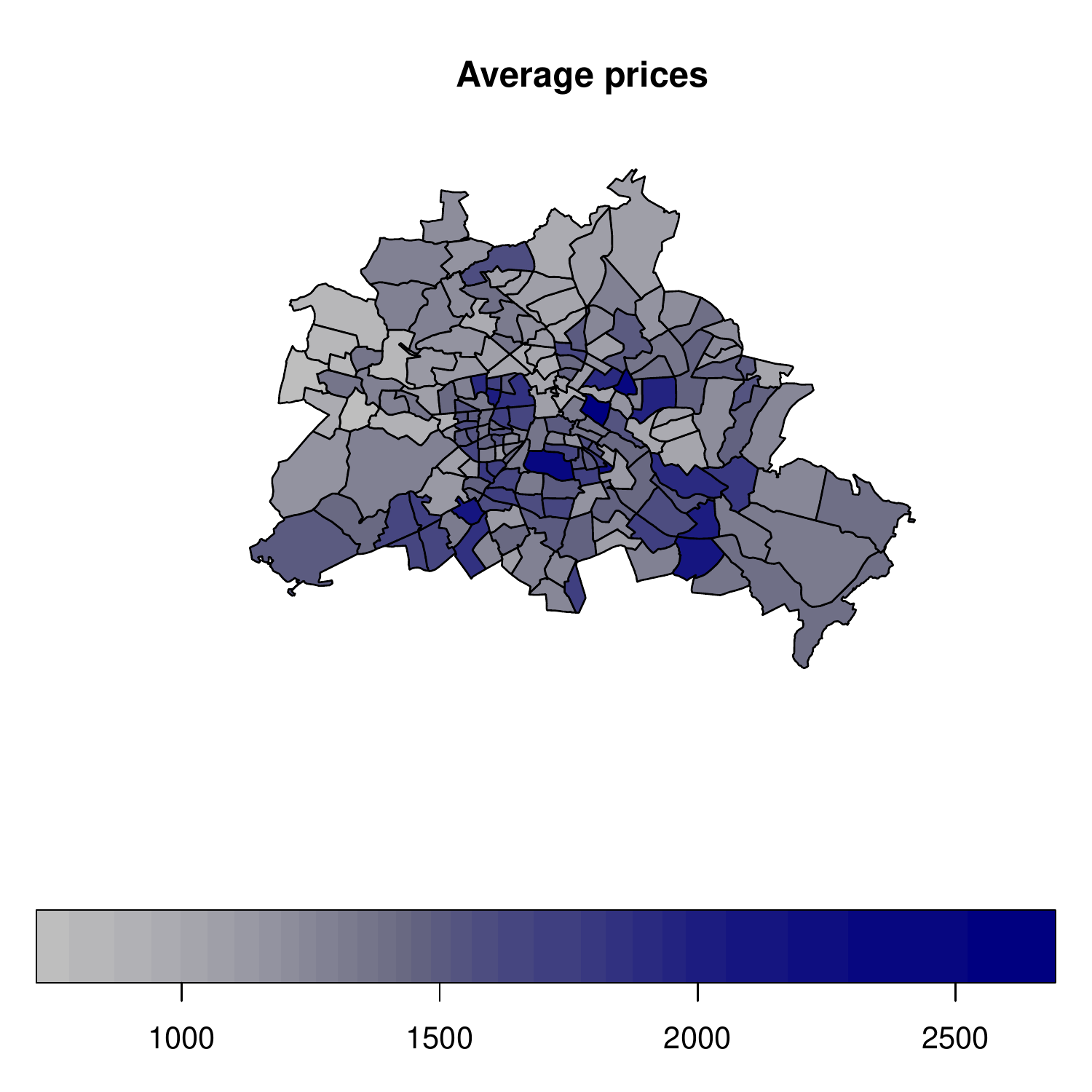}
		\includegraphics[width = 0.56\textwidth]{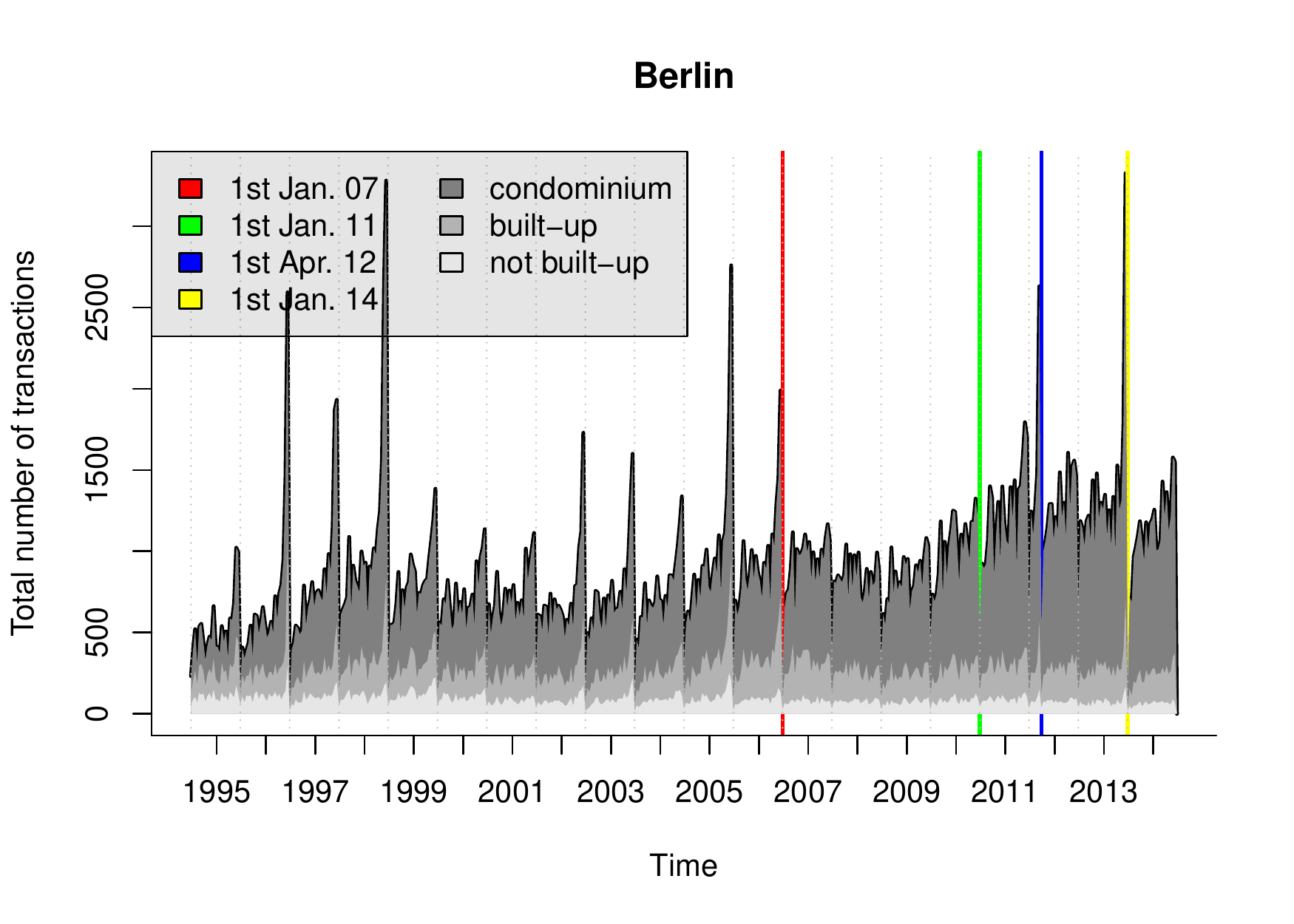}
	\end{center}
	\caption{Overview of the data set. Left: Average house prices for each of the 190 zip-code areas over the period from January 1995 to December 2015. Right: Total number of real-estate transactions in all zip-code areas of Berlin. The vertical bars indicate the time points of changes in the real-estate transfer taxes in Berlin or the surrounding state of Brandenburg (red: increase from 3.5 to 4.5 \% in Berlin; green: increase from 3.5 to 5 \% in Brandenburg; blue: increase from 4.5 to 5 \% in Berlin; yellow: increase from 5 to 6 \% in Berlin)}\label{fig:total}	
\end{figure}

For the empirical analysis, we use monthly log-returns of the average sales prices of all condominium sales in all postcode regions of Berlin from January 1995 to December 2015 (see Figure \ref{fig:total}, left). The relative price per square meter is determined for each zip-code area from an average of 6.31 sales per month. The specific location of the German capital Berlin in the centre of another federal state, Brandenburg, makes it a very intriguing example. The surrounding area of Berlin is very well connected to the city center by public transport and infrastructure, such that exogenous effects such as tax changes in Brandenburg may have an impact on Berlin and vice versa. Every real estate purchase in Germany is subject to the real estate transfer tax, which must be paid once at the time of purchase. The amount of tax depends on the purchase price. Until 1.9.2006, a unified tax rate of 3.5 per cent was applied in Germany as a whole. Afterwards, each federal state could set its tax rate, and there were gradual increases in all federal states. Specifically, Berlin increased the tax rates from 3.5 to 4.5 per cent on 1.1.2007, from 4.5 to 5 per cent on 1.4.2012, and from 5 to 6 per cent on 1.1.2014. The shifting effects described above can also be observed for Berlin, as it is shown in Figure \ref{fig:total}. In addition, an end-of-year effect is clearly visible due to other accounting and tax reasons. This motivates why we would expect different market risks at the end and beginning of a year. To estimate these temporal effects, we consider a model without temporal fixed effects (i.e., $\alpha_{t0} = 0$ for all $t$ in \eqref{2.2}) and model the temporal effects by including yearly and monthly indicator variables as regressors.

\begin{table}
\caption{Estimated parameters of the dynamic ARCH process and diagnostic measures of the residuals.}\label{table:results_berlin}
\begin{center}
\spacingset{1} 
\begin{tabular}{c p{5.5cm}  c r r r}
&    & Parameter & Estimate & Standard error & T-statistics \\
	\hline
\multicolumn{6}{l}{\emph{Regressive effects}} \\
& Total number of transactions                          & $\beta_1$    &       -0.1356   &      0.0184  &  -7.3509  \\
\multicolumn{6}{l}{\emph{End-of-year effects}} \\
& December                                              & $\beta_2$    &       -0.0516   &      0.0465  &  -1.1098  \\
& January                                               & $\beta_3$    &        0.2152   &      0.0729  &   2.9527  \\
& February                                              & $\beta_4$    &        0.1465   &      0.0555  &   2.6379  \\
\multicolumn{6}{l}{\emph{Yearly effects}} \\
& 1997                                                  & $\beta_5$    &       -0.1447   &      0.0603  &  -2.4002  \\
& 1998                                                  & $\beta_6$    &       -0.2112   &      0.0735  &  -2.8739  \\
& 1999                                                  & $\beta_7$    &       -0.2003   &      0.0708  &  -2.8292  \\
& 2000                                                  & $\beta_8$    &       -0.1136   &      0.0585  &  -1.9413  \\
& 2001                                                  & $\beta_9$    &       -0.1194   &      0.0600  &  -1.9897  \\
& 2011                                                  & $\beta_{10}$ &       -0.1244   &      0.0674  &  -1.8466  \\
& 2012                                                  & $\beta_{11}$ &       -0.1161   &      0.0684  &  -1.6974  \\
& 2013                                                  & $\beta_{12}$ &       -0.1204   &      0.0684  &  -1.7606  \\
\multicolumn{6}{l}{\emph{Tax effects}} \\
& Month before tax increase                             & $\beta_{13}$ &       -0.0559   &      0.1120  &  -0.4988  \\
\multicolumn{6}{l}{\emph{Spatiotemporal effects}} \\
& Spatial interaction (contiguity-based)                & $\rho$       &        0.4032   &      0.1450  &  2.7801  \\
& Temporal interaction (first time lag)                 & $\gamma$     &        0.1913   &      0.0051  & 37.4971  \\
& Spatiotemporal interaction                            & $\delta$     &       -0.0737   &      0.0313  & -2.3547  \\
\multicolumn{6}{l}{\emph{Model diagnostics}} \\
& BIC                                                                                    & & 82572.28  & &  \\
& n                                                                                      & & 190  & &  \\
& T                                                                                      & & 239  & &  \\
& {Percentage of locations with significantly (temp.) autocorrelated errors ($\alpha = 5$ \%)}             & & 18 & &  \\
& {Percentage of time points with significantly (spatially) positive autocorrelated errors (Moran's $I$, $\alpha = 5$ \%)} & & 0 & &  \\
	\hline
\end{tabular}
\spacingset{1.8} 
\end{center}
\end{table}

We estimate a first-order version of our model in \eqref{2.2}. We specify the spatial weights matrix (row-standardized) based on the queen contiguity scheme, where all adjacent neighbors are equally weighted. In Table \ref{table:results_berlin} and Figure \ref{fig:est_h}, we present the estimated parameters of our dynamic spatiotemporal ARCH model and the estimated volatility, respectively.  The included covariates were selected by stepwise excluding regressors, such that the Bayesian information criterion is minimized. The overall market dynamic measured by the total numbers of real-estate transactions has the most significant effect on the volatility (see Table \ref{table:results_berlin}). The more transactions, the lower the log-volatility. In addition, we observe lower volatilities at the end of each year, and significantly higher volatilities in January and February. These effects decrease from January to February ($0.2152$ to $0.1465$). The March effects were already insignificant. Further, we observe two periods of significantly lower risks compared to the remaining periods, namely 1997-2001 and 2011-2013. The anticipated tax effects are not significant though. Thus, we could not find evidence that the legal changes in the taxation framework affect the volatility of log-returns.

As expected, we also see significant spatial and spatiotemporal spill-over effects. The spatial ARCH parameter $\hat{\rho} = 0.4032$ is on moderate level.  That is, an increase in the log-squared return in one location instantaneously increases the log-volatility in the adjacent regions. Further, the temporal dependence is composed of a purely temporal lag ($\hat{\gamma} = 0.1913$) and a spatiotemporal lag ($\hat{\delta} = -0.0737$), which in total indicate a moderate temporal persistence. 

When analyzing the estimated volatility, we clearly see temporal patterns with reduced risks during the two above-mentioned periods. More interestingly, there are several regions of higher volatility, mostly located at the outer zip-code regions in the North and North-West, as shown in the top panels of Figure \ref{fig:est_h}, where the averaged volatility estimates are depicted. Moreover, the average volatility over all zip-codes  changes over time as shown in the first figure of the top panels of Figure \ref{fig:est_h}. Finally, to illustrate the estimated $h_{it}$'s for one selected location, we show the results for Berlin-Tempelhof (zip-code of the closed airport Berlin-Tempelhof) in April 2012, the month after the increase of the real-estate transfer taxes from 4.5 to 5 \%. We do not see different patterns in the volatility estimates after the airport was closed, marked by the dashed black line.

\begin{figure}
	\begin{center}
		\includegraphics[width = 0.45\textwidth]{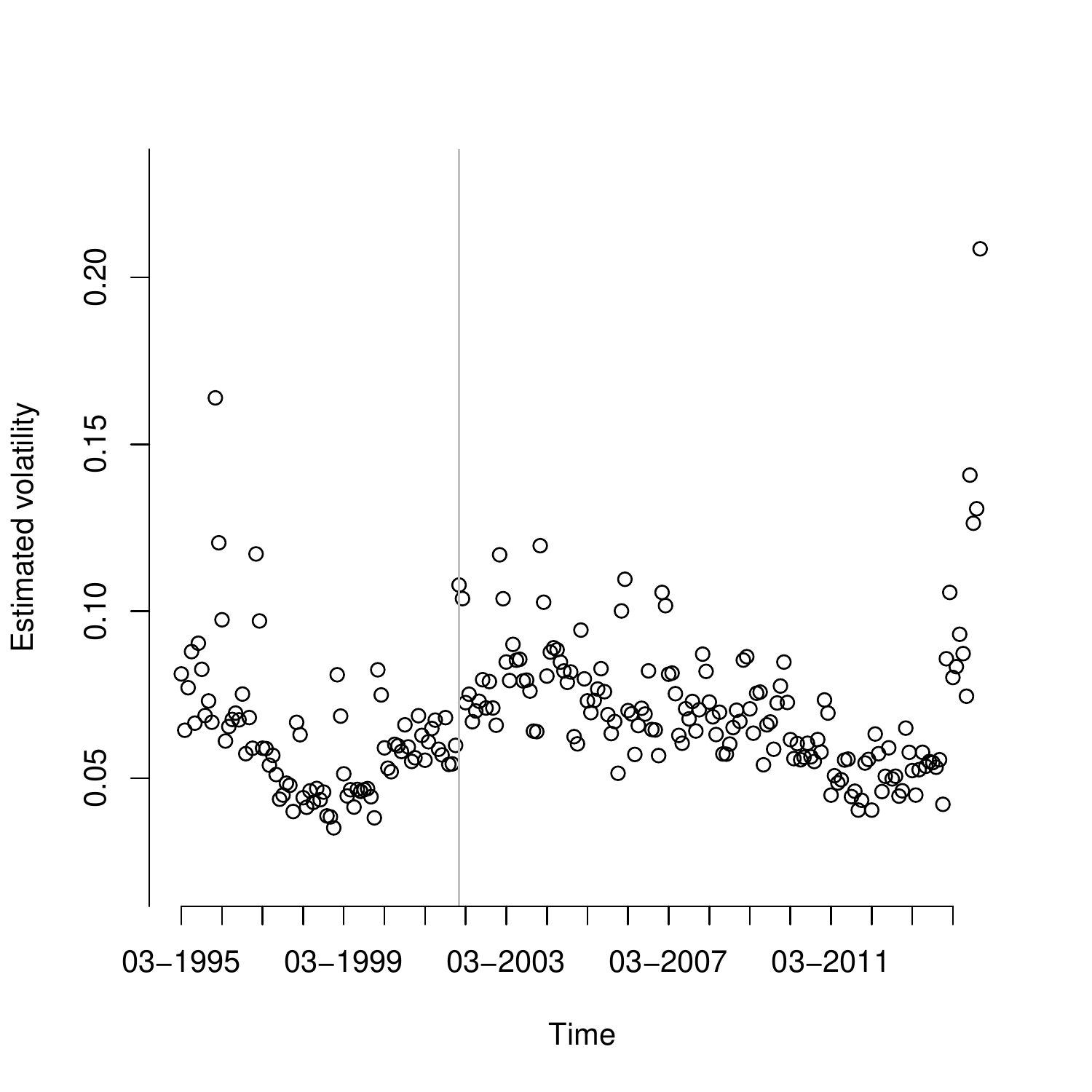}
		\includegraphics[width = 0.45\textwidth]{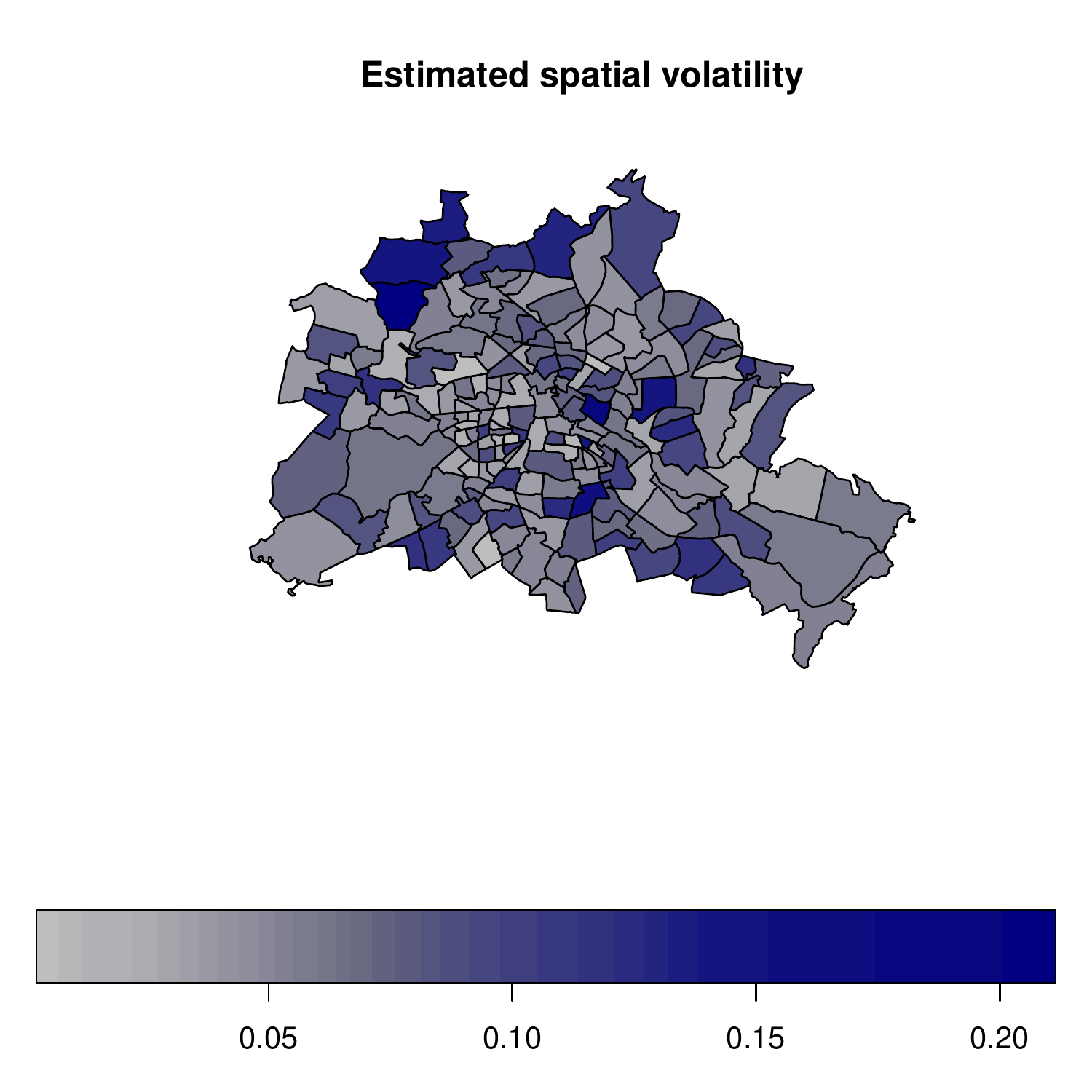}\\
		\includegraphics[width = 0.45\textwidth]{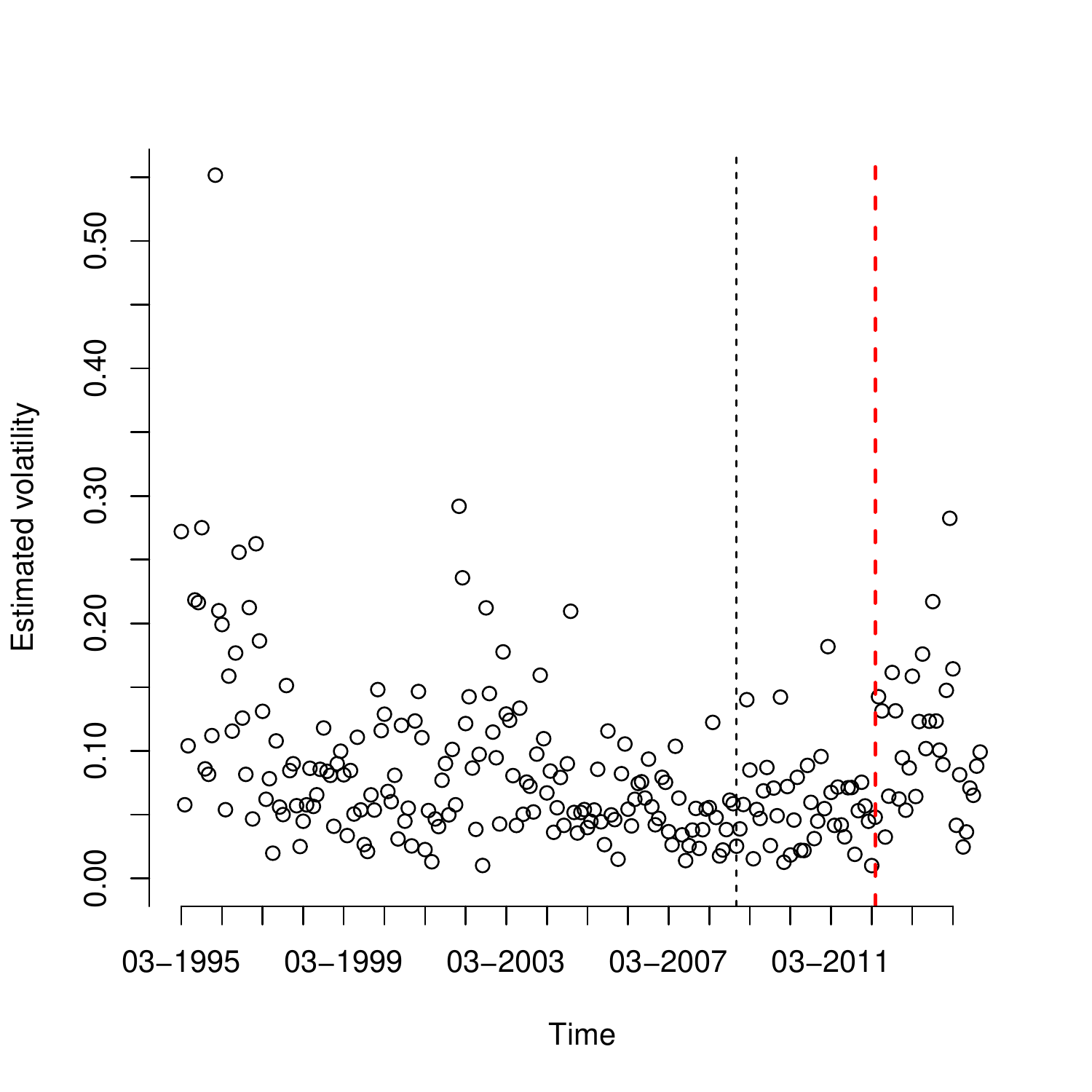}
		\includegraphics[width = 0.45\textwidth]{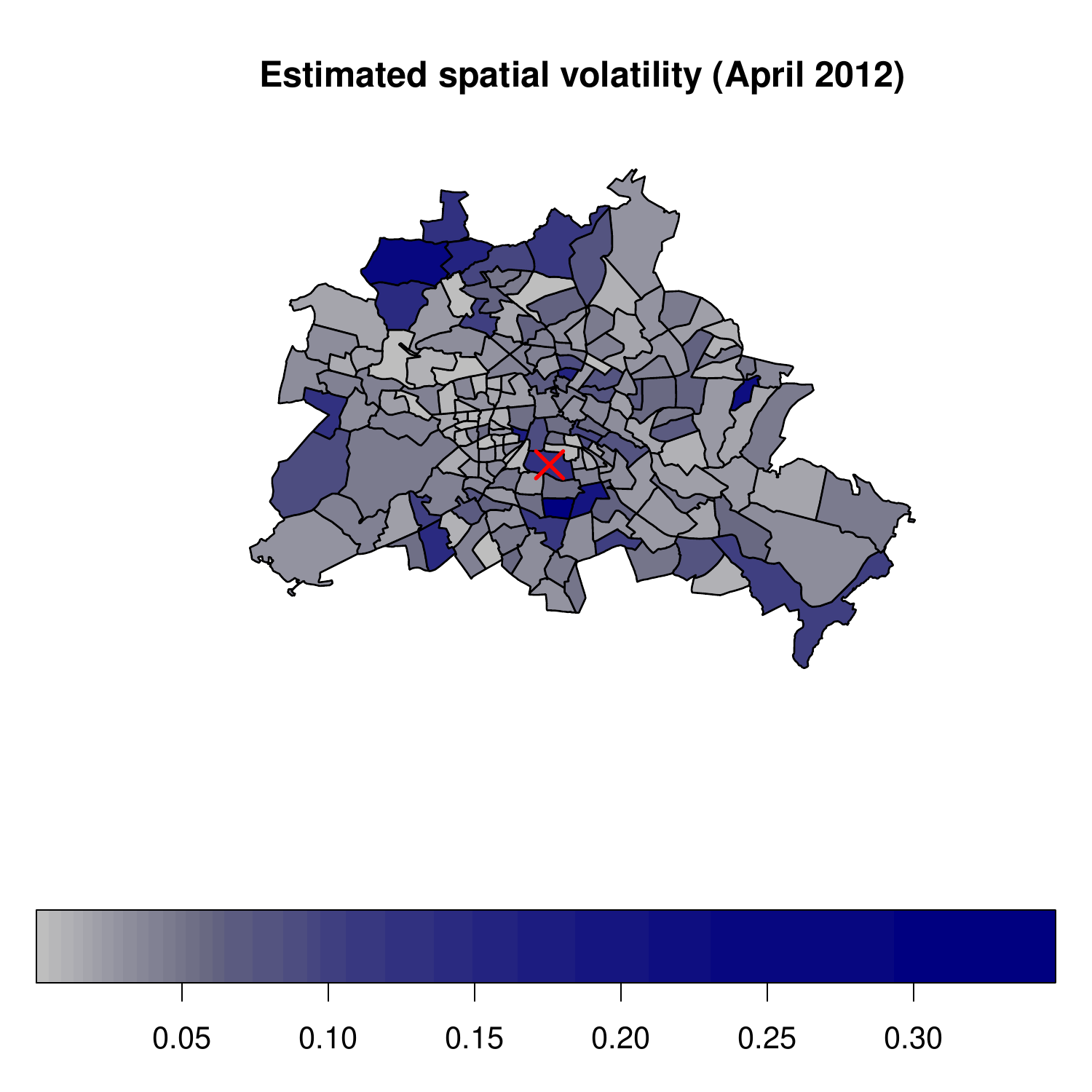}\\
	\end{center}
	\caption{Estimated conditional volatility $\hat{h}_{it}$. Top left: The spatially averaged volatilities levels are displayed over time $\frac{1}{n}\sum_{i = 1}^{n}\hat{h}_{it}$. The vertical grey line marks the Euro introduction in January 2002. Top right: The temporally averaged volatilities levels are displayed on the map $\frac{1}{T}\sum_{t = 1}^{T}\hat{h}_{it}$. Bottom left: $\hat{h}_{it}$ of one selected region over time (Berlin Tempelhof, zip-code of the former airport, marked by red cross in the bottom right plot). Bottom right: $\hat{h}_{it}$ at one selected time point (April 2012, first month after increase of real-estate transfer taxes from 4.5 to 5 \%, shown by dashed red line in the bottom-left plot).}\label{fig:est_h}
\end{figure}

\section{Conclusion}\label{conc}

In this paper, we introduced a dynamic spatiotemporal ARCH model that allows for unobserved heterogeneity over time and space. The model can be used to describe the spatiotemporal clustering effect in the volatility of a random process. As typically observed for spatial data, the model allows for instantaneous spill-over effects across space, which is the main difference to multivariate time-series GARCH models. In the latter case, spatial interactions would only occur after one time lag. In addition to these instantaneous spatial effects, the model includes temporal and spatiotemporal autoregressive effects of the log-squared returns in the log-volatility equation. While the temporal effect measures the dependence between the current and past observation of the same spatial unit, the spatiotemporal coefficients describe the dependence between the observation in one location and its past observations at neighboring locations.

For our suggested dynamic spatiotemporal ARCH model, we obtain an estimation equation by applying a log-square transformation together with an orthonormal and a deviation from group-mean operator to eliminate the fixed effects. We introduced a GMM  estimation approach based on a set of linear and quadratic moment functions of the transformed process. We establish the consistency and asymptotic normality of our suggested GMM estimator under fairly general assumptions for large and finite $T$ cases. Moreover, when the number of time periods is large, we present an optimal set of moment functions that leads to an efficient estimator.

We investigated the finite-sample performance of our suggested estimator in a series of Monte-Carlo simulations under different model settings and error distributions. Overall, the simulation results are in line with our theoretical claims. In an empirical application, we illustrated the use of our model for the log-returns of the intra-city real-estate prices in Berlin over the period 1995 - 2015. Our estimation results  show that the spatial, temporal and spatiotemporal lags of the log-squared returns have statistically significant effect on the log-volatility. This leads to temporal and spatial spill-over effects. We showed that the average volatility of log-returns over space and time varies significantly. Finally, our model allows us to estimate the market risk in terms of the volatility in each location and time point.

In future studies, our model can be extended in a number of ways. First, we considered additive time and space fixed effects in the log-volatility equation. Instead of this additive structure, a log-volatility equation that includes interactive fixed effects can be studied. Second, the spatial and spatiotemporal lags in the log-volatility equation can be formulated with time-varying spatial weights matrices. Finally, we can also allow for potential endogeneity in the spatial weights instead of exogenous spatial weights. All of these extensions can be explored in future studies.

\newpage
\appendix
\noindent{\textbf{\Large{Appendix}}}
\renewcommand\thefigure{\thesection.\arabic{figure}}   
\renewcommand\thetable{\thesection.\arabic{table}}   
\section{Some Useful Lemmas}\label{app}
In this section, we provide four lemmas that are essential for our main results. 
\begin{lemma}\label{l1}
Let $\mathcal{F}_{t-1}$ be the $\sigma$-algebra generated by $\left(\mf{Y}_0,\hdots,\mf{Y}_{t-1}\right)$ conditional on $\left(\mf{X}_1,\hdots,\mf{X}_T,\bs{\mu}_0,\bs{\alpha}_0\right)$, and $\mf{B}$ be a non-stochastic $n\times n$ matrix that has row and column sums uniformly bounded in absolute value. Then, under Assumption~\ref{a1}, we have the following results.
\begin{enumerate}
\item $\E\left(\mf{U}^{*'}_t\mf{B}\mf{U}^{*}_s|\mathcal{F}_{t-1}\right)=0$ for $t\ne s$.
\item $\frac{1}{N}\mf{U}^{'}_N\mf{B}_N\mf{U}_N=\frac{\sigma^2_0}{n}\tr\left(\mf{B}\right)+O_p\left(N^{-1/2}\right)=O_p(1)$, where $\mf{B}_N=\left(\mf{I}_{T-1}\otimes\mf{B}\right)$.
\item $\E\left(\left(\mf{U}^{*'}_t\mf{B}\mf{U}^{*}_t\right)^2\right)=\left(\mu_4-3\sigma^4_0\right)c^4_{t}\left(1+\frac{1}{(T-t)^3}\right)\vec^{'}_D(\mf{B})\vec_D(\mf{B})+\sigma^4_0\left(\tr^2(\mf{B})+\tr\left(\mf{B}\mf{B}^s\right)\right)$, where $c_{t}=\left(\frac{T-t}{T-t+1}\right)^{1/2}$.
\item Under Assumption~\ref{a6}, we have $\plim_{n\to\infty}\frac{1}{N}\sum_{t=1}^{T-1}\mf{Q}^{'}_t\mf{B}\mf{U}^{*}_t=\mf{0}$, where $\mf{Q}_t$ is the IV matrix.
\item Under Assumption~\ref{a5}, we have $\frac{1}{N}\mf{Y}^{'}_{N,-1}\mf{B}_N\mf{U}_N-\E\left(\frac{1}{N}\mf{Y}^{'}_{N,-1}\mf{B}_N\mf{U}_N\right)=O_p\left(N^{-1/2}\right)$, where $\E\left(\frac{1}{N}\mf{Y}^{'}_{N,-1}\mf{B}_N\mf{U}_N\right)=\frac{\sigma^2_0}{N}\tr\left(\mf{B}^{'}\left(\sum_{h=1}^{T-1}\left(1-\frac{h}{T}\right)\mf{A}^{h-1}\right)\mf{S}^{-1}\right)$ is $O(T^{-1})$.
\end{enumerate}
\end{lemma}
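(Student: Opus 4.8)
The strategy is to reduce every statement to a moment computation for quadratic forms in the i.i.d.\ array $\{u_{it}\}$, exploiting the explicit forward-orthogonal-deviation (Helmert) weights. Write $\mf{F}\equiv\mf{F}_{T,T-1}$ with columns $\mf{f}_1,\dots,\mf{f}_{T-1}$, so that $\mf{U}^{*}_t=\sum_{h=1}^{T}f_{ht}\mf{U}_h$ with $f_{tt}=c_t$, $f_{ht}=-c_t/(T-t)$ for $t<h\le T$, and $f_{ht}=0$ otherwise, where $c_t=\left(\frac{T-t}{T-t+1}\right)^{1/2}$; in particular $\mf{f}_t'\mf{f}_s=1$ if $t=s$ and $0$ if $t\ne s$ (orthonormality of $\mf{F}$), and $\mf{F}\mf{F}'=\mf{J}_T$. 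Two facts get used repeatedly: (a) since $f_{ht}=0$ for $h<t$, the vector $\mf{U}^{*}_t$ is a function of $\{\mf{U}_h:h\ge t\}$ only, which are independent of $\mathcal{F}_{t-1}$ and mean zero, so $\E(\mf{U}^{*}_t\mid\mathcal{F}_{t-1})=\mf{0}$; and (b) $\E(\mf{U}^{*}_t\mf{U}^{*\prime}_s)=\sigma^2_0\mf{I}_n$ for $t=s$ and $\mf{0}_{n\times n}$ otherwise, which is just the block form of the identity $\E(\mf{U}_N\mf{U}^{'}_N)=\sigma^2_0\mf{I}_N$ established in Section~\ref{gmme}. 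Part~(1) is then immediate: for $t\ne s$, say $t<s$, both $\mf{U}^{*}_t$ and $\mf{U}^{*}_s$ are functions of $\{\mf{U}_h:h\ge t\}$, hence jointly independent of $\mathcal{F}_{t-1}$, so $\E(\mf{U}^{*\prime}_t\mf{B}\mf{U}^{*}_s\mid\mathcal{F}_{t-1})=\E(\mf{U}^{*\prime}_t\mf{B}\mf{U}^{*}_s)=\tr\bigl(\mf{B}\,\E(\mf{U}^{*}_s\mf{U}^{*\prime}_t)\bigr)=0$.

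For parts~(2) and (3) I would invoke the elementary identity, valid for any square matrix $\mf{C}$ and a vector $\mf{a}$ with i.i.d.\ mean-zero entries of variance $\sigma^2$ and fourth moment $\mu_4$, namely $\E(\mf{a}'\mf{C}\mf{a})=\sigma^2\tr(\mf{C})$ and $\E\bigl((\mf{a}'\mf{C}\mf{a})^2\bigr)=\sigma^4\tr^2(\mf{C})+(\mu_4-3\sigma^4)\sum_i C_{ii}^2+\sigma^4\bigl(\tr(\mf{C}^2)+\tr(\mf{C}\mf{C}')\bigr)$, applied to $\mf{a}=\tilde{\mf{U}}\equiv(\mf{U}^{'}_1,\dots,\mf{U}^{'}_T)'$. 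Since $\mf{F}\mf{F}'=\mf{J}_T$, the left-hand side of part~(2) equals $N^{-1}\tilde{\mf{U}}'(\mf{J}_T\otimes\mf{B})\tilde{\mf{U}}$; its mean is $N^{-1}\sigma^2_0\tr(\mf{J}_T)\tr(\mf{B})=\sigma^2_0\tr(\mf{B})/n=O(1)$ (as $\tr(\mf{B})=O(n)$ under Assumption~\ref{a2}), and, using $\tr(\mf{J}_T^2)=T-1$ together with Assumption~\ref{a2} (so that $\sum_i b_{ii}^2$, $\tr(\mf{B}^2)$ and $\tr(\mf{B}\mf{B}')$ are each $O(n)$), the variance is $O(N)$, whence the $O_p(N^{-1/2})$ remainder follows by Chebyshev. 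For part~(3), $\mf{U}^{*\prime}_t\mf{B}\mf{U}^{*}_t=\tilde{\mf{U}}'(\mf{f}_t\mf{f}_t'\otimes\mf{B})\tilde{\mf{U}}$, and because $(\mf{f}_t\mf{f}_t')^2=\mf{f}_t\mf{f}_t'$ and $\tr(\mf{f}_t\mf{f}_t')=1$ the identity collapses to $\sigma^4_0\tr^2(\mf{B})+(\mu_4-3\sigma^4_0)\bigl(\sum_h f_{ht}^4\bigr)\vec_D^{'}(\mf{B})\vec_D(\mf{B})+\sigma^4_0\bigl(\tr(\mf{B}^2)+\tr(\mf{B}\mf{B}')\bigr)$; substituting $\sum_h f_{ht}^4=c^4_t+(T-t)\bigl(c_t/(T-t)\bigr)^4=c^4_t\bigl(1+(T-t)^{-3}\bigr)$ and $\tr(\mf{B}^2)+\tr(\mf{B}\mf{B}')=\tr(\mf{B}\mf{B}^s)$ yields the stated formula.

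For part~(4), I would show the mean is exactly zero and the covariance matrix is $o(1)$. Predeterminedness $\E(\mf{Q}_t\mid\mathcal{F}_{t-1})=\mf{Q}_t$ (Assumption~\ref{a6}) together with fact~(a) gives $\E(\mf{Q}^{'}_t\mf{B}\mf{U}^{*}_t)=\E\bigl(\mf{Q}^{'}_t\mf{B}\,\E(\mf{U}^{*}_t\mid\mathcal{F}_{t-1})\bigr)=\mf{0}$; for $t<s$, conditioning on $\mathcal{F}_{s-1}$ and splitting $\mf{U}^{*}_t$ into its $\mathcal{F}_{s-1}$-measurable part plus its future contamination shows $\E(\mf{U}^{*}_t\mf{U}^{*\prime}_s\mid\mathcal{F}_{s-1})=\mf{0}$, so all off-diagonal blocks of the covariance matrix of $N^{-1}\sum_t\mf{Q}^{'}_t\mf{B}\mf{U}^{*}_t$ vanish; the diagonal blocks are $\sigma^2_0\,\E(\mf{Q}^{'}_t\mf{B}\mf{B}'\mf{Q}_t)$, each entry of which is $O(n)$ by Assumptions~\ref{a2} and \ref{a6}, so the covariance matrix is $O(N^{-1})$ and Chebyshev gives the $\plim$. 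For part~(5), substituting the reduced form $\mf{Y}^{*}_s=\sum_{h\ge0}\mf{A}^{h}\mf{S}^{-1}(\cdots+\mf{U}_{s-h})$ from Assumption~\ref{a5} into $\mf{Y}^{**,-1}_{t-1}$ and computing $\E(\mf{Y}^{**,-1}_{t-1}\mf{U}^{*\prime}_t)$ term by term against the Helmert weights produces the claimed bias $\sigma^2_0 N^{-1}\tr\bigl(\mf{B}'\bigl(\sum_{h=1}^{T-1}(1-h/T)\mf{A}^{h-1}\bigr)\mf{S}^{-1}\bigr)$; this is $O(T^{-1})$ because Assumption~\ref{a5}(ii) makes $\sum_h\text{abs}(\mf{A}^h)$ uniformly bounded in row and column sums, so the trace is $O(n)=O(N/(T-1))$. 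The centered quantity is treated as in part~(4): decompose $\mf{Y}^{**,-1}_{t-1}=\E(\mf{Y}^{**,-1}_{t-1}\mid\mathcal{F}_{t-1})+(\text{contamination})$, so that its product with $\mf{U}^{*}_t$ splits into summands that are conditionally mean-zero and uncorrelated across $t$ plus a remainder controlled by the summability in Assumption~\ref{a5}; a variance bound of order $O(N^{-1})$ then gives the $O_p(N^{-1/2})$ rate, exactly as in \citet{Lee:2014}.

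The main obstacle is part~(5): pinning down the bias constant requires careful bookkeeping of the forward-orthogonal-deviation weights against the geometric reduced-form coefficients (the key point being that the ``own-period'' correlation between $\mf{Y}^{**,-1}_{t-1}$ and $\mf{U}^{*}_t$ is only of order $(T-t)^{-1}$, which is precisely why the aggregate bias is $O(T^{-1})$ rather than $O(1)$), and the accompanying variance bound must track the contamination term uniformly in $t$ using Assumption~\ref{a5}. Part~(3) is algebraically fiddly but entirely mechanical once the Kronecker structure of $\mf{f}_t\mf{f}_t'\otimes\mf{B}$ is exploited.
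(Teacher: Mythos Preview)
Your proposal is essentially correct and, in fact, considerably more detailed than what the paper does: the paper's entire proof of this lemma is ``See Lemma 2 in \citet{Lee:2014}.'' So there is no original argument to compare against; your sketch is the kind of verification that the cited reference contains, and you even acknowledge this at the end of part~(5).

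One small point on part~(1): your argument explicitly treats only the case $t<s$, where both $\mf{U}^{*}_t$ and $\mf{U}^{*}_s$ are measurable with respect to $\{\mf{U}_h:h\ge t\}$ and hence jointly independent of $\mathcal{F}_{t-1}$. Because the conditioning is on $\mathcal{F}_{t-1}$ (not $\mathcal{F}_{\min(t,s)-1}$), the case $s<t$ is not symmetric and needs a separate line: there $\mf{U}^{*}_s$ has a $\mathcal{F}_{t-1}$-measurable component (its dependence on $\mf{U}_s,\dots,\mf{U}_{t-1}$) plus a future component; the former pairs with $\E(\mf{U}^{*}_t\mid\mathcal{F}_{t-1})=\mf{0}$, and the latter has zero unconditional covariance with $\mf{U}^{*}_t$ since $f_{ht}=0$ for $h<t$ forces $\sum_{h\ge t}f_{ht}f_{hs}=\sum_{h=1}^{T}f_{ht}f_{hs}=0$. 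This is the same mechanism you already invoke for the off-diagonal blocks in part~(4), so the fix is immediate.
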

\begin{proof}
See Lemma 2 in \citet{Lee:2014}. 
\end{proof}
\begin{lemma}\label{l2}
Under Assumption~\ref{a1}, we have the following results.
\begin{enumerate}
\item For $l,j=1,2,\hdots,m$, we have
\begin{align*}
&\E\left(\mf{U}^{'}_N\mf{J}_N\mf{P}_{lN}\mf{J}_N\mf{U}_N\cdot\mf{U}^{'}_N\mf{J}_N\mf{P}_{jN}\mf{J}_N\mf{U}_N\right)\\
&=\sigma^4_0\tr\left(\mf{J}_N\mf{P}_{lN}\mf{J}_N\left(\mf{J}_N\mf{P}_{jN}\mf{J}_N\right)^s\right)+(\mu_4-3\sigma^4_0)\vec^{'}_D\left(\mf{J}_N\mf{P}_{lN}\mf{J}_N\right)\vec_D\left(\mf{J}_N\mf{P}_{jN}\mf{J}_N\right).
\end{align*}

\item For $l=1,2,\hdots,m$, we have
\begin{align*}
\E\left(\mf{Q}^{'}_N\mf{J}_N\mf{U}_N\cdot \mf{U}^{'}_N\mf{J}_N\mf{P}_{lN}\mf{J}_N\mf{U}_N\right)=\mf{0}_{q\times1}.
\end{align*}
\end{enumerate}
\end{lemma}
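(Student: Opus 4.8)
The plan is to push both products down to the underlying i.i.d.\ innovations and then apply the classical moment identities for linear/quadratic forms. Write $\mf{F}\equiv\mf{F}_{T,T-1}$ and $\tilde{\mf{U}}=(\mf{U}_1^{'},\hdots,\mf{U}_T^{'})^{'}$, so that (as noted in the text) $\mf{U}_N=(\mf{F}^{'}\otimes\mf{I}_n)\tilde{\mf{U}}$, with $\tilde{\mf{U}}$ having i.i.d.\ entries of mean $0$, variance $\sigma_0^2$, third moment $\mu_3$ and fourth moment $\mu_4$ under Assumption~\ref{a1}; recall also $\mf{F}^{'}\mf{F}=\mf{I}_{T-1}$, $\mf{F}\mf{F}^{'}=\mf{J}_T$ (all diagonal entries equal to $1-1/T$), and $\mf{1}_T^{'}\mf{F}=\mf{0}$. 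Since $\mf{J}_N\mf{P}_{lN}\mf{J}_N=\mf{I}_{T-1}\otimes\mf{R}_l$ with $\mf{R}_l:=\mf{J}_n\mf{P}_l\mf{J}_n$, the mixed-product rule gives $\mf{U}_N^{'}\mf{J}_N\mf{P}_{lN}\mf{J}_N\mf{U}_N=\tilde{\mf{U}}^{'}(\mf{J}_T\otimes\mf{R}_l)\tilde{\mf{U}}$ and $\mf{Q}_N^{'}\mf{J}_N\mf{U}_N=\mf{Q}_N^{'}(\mf{F}^{'}\otimes\mf{J}_n)\tilde{\mf{U}}$.

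For part~(i) I would invoke the identity $\E[(\tilde{\mf{U}}^{'}\mf{C}_1\tilde{\mf{U}})(\tilde{\mf{U}}^{'}\mf{C}_2\tilde{\mf{U}})]=\sigma_0^4[\tr(\mf{C}_1)\tr(\mf{C}_2)+\tr(\mf{C}_1\mf{C}_2^s)]+(\mu_4-3\sigma_0^4)\vec^{'}_D(\mf{C}_1)\vec_D(\mf{C}_2)$ (valid for i.i.d.\ zero-mean entries) with $\mf{C}_1=\mf{J}_T\otimes\mf{R}_l$, $\mf{C}_2=\mf{J}_T\otimes\mf{R}_j$. The term $\tr(\mf{C}_1)\tr(\mf{C}_2)$ vanishes because $\tr(\mf{J}_T\otimes\mf{R}_l)=(T-1)\tr(\mf{J}_n\mf{P}_l\mf{J}_n)=0$, the quadratic moment matrices being required to satisfy $\tr(\mf{J}_n\mf{P}_l\mf{J}_n)=0$; the remaining two terms are rewritten using $\mf{J}_T^2=\mf{J}_T$, $\tr(\mf{A}\otimes\mf{B})=\tr(\mf{A})\tr(\mf{B})$ and $\vec_D(\mf{A}\otimes\mf{B})=\vec_D(\mf{A})\otimes\vec_D(\mf{B})$ to match the traces and diagonal-vector inner products of $\mf{J}_N\mf{P}_{lN}\mf{J}_N$ and $\mf{J}_N\mf{P}_{jN}\mf{J}_N$ on the stated right-hand side. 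An equivalent route avoids the $\tilde{\mf{U}}$-reduction: write $\mf{U}_N^{'}\mf{J}_N\mf{P}_{lN}\mf{J}_N\mf{U}_N=\sum_t\mf{U}_t^{*'}\mf{R}_l\mf{U}_t^{*}$ and combine Lemma~\ref{l1}(3) (via polarization, for the $s=t$ terms) with a direct evaluation of $\E[(\mf{U}_s^{*'}\mf{R}_l\mf{U}_s^{*})(\mf{U}_t^{*'}\mf{R}_j\mf{U}_t^{*})]$ for $s\ne t$, whose only surviving contribution is the fourth-cumulant one, since its trace coefficient is $(\mf{F}^{'}\mf{F})_{st}=0$.

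For part~(ii), the product $\mf{Q}_N^{'}\mf{J}_N\mf{U}_N\cdot\mf{U}_N^{'}\mf{J}_N\mf{P}_{lN}\mf{J}_N\mf{U}_N$ is a linear form times a quadratic form in $\tilde{\mf{U}}$, so its expectation is governed solely by $\mu_3$: the identity $\E[(\mf{a}^{'}\tilde{\mf{U}})(\tilde{\mf{U}}^{'}\mf{C}\tilde{\mf{U}})]=\mu_3\,\mf{a}^{'}\vec_D(\mf{C})$ yields $\mu_3\,\mf{Q}_N^{'}(\mf{F}^{'}\otimes\mf{J}_n)\vec_D(\mf{J}_T\otimes\mf{R}_l)$. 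Since $\vec_D(\mf{J}_T\otimes\mf{R}_l)=(1-1/T)(\mf{1}_T\otimes\vec_D(\mf{R}_l))$, the mixed-product rule turns this into $\mu_3(1-1/T)\,\mf{Q}_N^{'}\big((\mf{F}^{'}\mf{1}_T)\otimes(\mf{J}_n\vec_D(\mf{R}_l))\big)=\mf{0}_{q\times1}$ because $\mf{F}^{'}\mf{1}_T=\mf{0}$. When the instruments are only predetermined (Assumption~\ref{a6}) rather than non-stochastic, I would instead argue blockwise: expand the Helmert combinations in the $\mf{U}_h$'s and condition on the relevant past $\sigma$-algebra so each $\mf{Q}_s$ acts as a constant; the surviving $\mu_3$-terms again assemble into a factor $\mf{1}_T^{'}\mf{F}=\mf{0}$, while every cross-term in which $\mf{Q}_s$ is correlated with an innovation entering $\mf{U}_t^{*'}\mf{J}_n\mf{P}_l\mf{J}_n\mf{U}_t^{*}$ dies by column orthogonality $(\mf{F}^{'}\mf{F})_{ss'}=0$ for $s\ne s'$.

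The routine part is the algebra of the quadratic-form identities and the Kronecker bookkeeping. The step that needs genuine care is the vanishing of the odd-moment contribution in part~(ii) — in particular making it rigorous under stochastic instruments, where one must carefully separate the components of $\mf{Q}_s$ that are, and are not, independent of the innovations showing up in the quadratic form. Everything rests on two facts about the orthonormal time transformation: $\mf{1}_T^{'}\mf{F}_{T,T-1}=\mf{0}$, which annihilates the $\mu_3$ term, and $\mf{F}_{T,T-1}^{'}\mf{F}_{T,T-1}=\mf{I}_{T-1}$ together with $\mf{F}_{T,T-1}\mf{F}_{T,T-1}^{'}=\mf{J}_T$, which control the cross-time structure and deliver the trace identities in part~(i).
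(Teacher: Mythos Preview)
The paper gives no self-contained proof here; it simply refers to Lemma~2 of \citet{Lee:2014}. Your proposal is therefore substantially more detailed than what the paper offers, and the strategy you outline---pull back to the i.i.d.\ vector $\tilde{\mf U}$ via $\mf U_N=(\mf F^{'}\otimes\mf I_n)\tilde{\mf U}$ and apply the classical identities for products of linear and quadratic forms---is the standard route and essentially what \citet{Lee:2014} do.

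There is, however, one genuine gap in your part~(i). After the reduction you obtain $\mf C_l=\mf J_T\otimes\mf R_l$ with $\mf R_l=\mf J_n\mf P_l\mf J_n$, and the $(\mu_4-3\sigma_0^4)$ coefficient becomes
\[
\vec_D^{'}(\mf J_T\otimes\mf R_l)\,\vec_D(\mf J_T\otimes\mf R_j)
=\Bigl(1-\tfrac1T\Bigr)^{2}T\,\vec_D^{'}(\mf R_l)\vec_D(\mf R_j)
=\tfrac{(T-1)^2}{T}\,\vec_D^{'}(\mf R_l)\vec_D(\mf R_j),
\]
whereas the right-hand side of the stated formula carries
\[
\vec_D^{'}(\mf I_{T-1}\otimes\mf R_l)\,\vec_D(\mf I_{T-1}\otimes\mf R_j)
=(T-1)\,\vec_D^{'}(\mf R_l)\vec_D(\mf R_j).
\]
These differ by the factor $(T-1)/T$, so your assertion that the Kronecker bookkeeping ``matches'' the stated diagonal inner product is not correct as an exact identity. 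The discrepancy is $O(1/T)$ and immaterial for the limiting variance $\bs\Omega_N$ (which is all the paper ultimately uses), but as an exact equality Lemma~\ref{l2}(1) does not follow from your reduction; you should either flag that the fourth-cumulant piece holds only up to an $O(1/T)$ term, or supply a separate argument for it. Your alternative ``polarize Lemma~\ref{l1}(3)'' route runs into the same obstacle: the $t$-dependent factors $c_t^4\bigl(1+(T-t)^{-3}\bigr)$ do not sum to $T-1$ either.

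For part~(ii), your first argument (treating $\mf Q_N$ as nonstochastic) is clean and correct, resting on $\mf F^{'}\mf 1_T=\mf 0$. For predetermined stochastic $\mf Q_t$, the structural fact you should make explicit is that under the forward orthogonal difference each $\mf U_t^{*}$ is a function of $\mf U_t,\ldots,\mf U_T$ only, hence independent of $\mf Q_t\in\mathcal F_{t-1}$; this is precisely what makes the blockwise conditioning go through and lets you again assemble the vanishing $\mf 1_T^{'}\mf F$ factor. Your sketch points in this direction but would benefit from stating it plainly.
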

\begin{proof}
This lemma is a simple extension of Lemma 2 of \citet{Lee:2014} to our setting. Therefore, we omit its proof.
\end{proof}
\begin{lemma}\label{l3}
Let $\mf{p}_t$ be $n\times1$ column vector from  the IV matrix $\mf{Q}_t$ given in Assumption~\ref{a6}. Consider $a_t=\mf{q}^{'}_t\mf{U}^{*}_t+\mf{U}^{*'}_t\mf{B}\mf{U}^{*}_t-\sigma^2_0\tr(\mf{B})$, where $\mf{B}$ is a non-stochastic $n\times n$ matrix that has row and column sums uniformly bounded in absolute value. Using Lemma~\ref{l2}, we have $\var\left(\sum_{t=1}^{T-1}a_t\right)=\sigma^2_0\sum_{t=1}^{T-1}\mf{q}^{'}_t\mf{q}_t+T\left(\mu_4-3\sigma^4_0\right)\sum_{i=1}^nb^2_{ii}+T\sigma^4_0\tr\left(\mf{B}\mf{B}^s\right)$. If $\left\{\frac{1}{N}\var\left(\sum_{t=1}^{T-1}a_t\right)\right\}$ is bounded away from zero, then 
$$
\frac{\sum_{t=1}^{T-1}a_t}{\var^{1/2}\left(\sum_{t=1}^{T-1}a_t\right)}\xrightarrow{d}N(0,\,1).
$$
\end{lemma}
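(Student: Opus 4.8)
The variance of $\sum_{t=1}^{T-1}a_t$ is already delivered by Lemma~\ref{l2}, so what remains is purely a central limit theorem for the linear--quadratic form $S_N:=\sum_{t=1}^{T-1}a_t$. My plan is to re-express $S_N$ as a centered linear--quadratic form in the \emph{original} i.i.d.\ innovations $\{u_{is}:1\le i\le n,\,1\le s\le T\}$, in which the quadratic coefficient matrix is non-stochastic and the linear coefficient vector is predetermined, and then to invoke a martingale-array central limit theorem of the kind used in the proof of Lemma~2 of \citet{Lee:2014}; an equivalent route is to reduce the reshaped form to the linear--quadratic-form CLT of \citet{KP:2010}.

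For the reshaping, stack $\mf{u}=(\mf{U}_1',\dots,\mf{U}_T')'$ and recall $\mf{U}_N=(\mf{F}_{T,T-1}'\otimes\mf{I}_n)\mf{u}$ with $\mf{F}_{T,T-1}\mf{F}_{T,T-1}'=\mf{J}_T$ and $\mf{F}_{T,T-1}'\mf{F}_{T,T-1}=\mf{I}_{T-1}$. The quadratic part becomes $\sum_{t=1}^{T-1}\big(\mf{U}^{*\prime}_t\mf{B}\mf{U}^{*}_t-\sigma^2_0\tr(\mf{B})\big)=\mf{u}'(\mf{J}_T\otimes\mf{B})\mf{u}-\sigma^2_0\tr(\mf{J}_T\otimes\mf{B})$, a centered quadratic form whose coefficient matrix $\mf{A}^\ast:=\mf{J}_T\otimes\mf{B}$ is non-stochastic and uniformly bounded in row and column sums (the absolute row/column sums of $\mf{J}_T$ are below $2$, and $\mf{B}$ is bounded by hypothesis). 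For the linear part, substituting $\mf{U}^{*}_t=c_t\big(\mf{U}_t-\tfrac{1}{T-t}\sum_{h=t+1}^{T}\mf{U}_h\big)$ and interchanging the order of summation in $\sum_t\sum_{h>t}$ gives $\sum_{t=1}^{T-1}\mf{q}'_t\mf{U}^{*}_t=\sum_{s=1}^{T}\mf{r}'_s\mf{U}_s$, where $\mf{r}_s=\sum_{t=1}^{T-1}F_{st}\mf{q}_t=c_s\mf{q}_s\,\indicator\{s\le T-1\}-\big(\sum_{t=1}^{s-1}\tfrac{c_t}{T-t}\mf{q}_t\big)\indicator\{s\ge2\}$ and $F_{st}$ is the $(s,t)$ entry of $\mf{F}_{T,T-1}$. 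Two features matter: (i) each $\mf{q}_t$ is $\mathcal{F}_{t-1}$-measurable by Assumption~\ref{a6} and the exogenous terms are non-stochastic (Assumptions~\ref{a4}--\ref{a5}), so $\mf{r}_s$ is measurable with respect to $\sigma(\mathcal{F}_0,\mf{U}_1,\dots,\mf{U}_{s-1})$, i.e.\ predetermined; (ii) since the columns of $\mf{F}_{T,T-1}$ are orthonormal, $\sum_{s=1}^{T}\Vert\mf{r}_s\Vert^2=\sum_{t=1}^{T-1}\Vert\mf{q}_t\Vert^2$ path-by-path, so the variance of the linear part is exactly $\sigma^2_0\sum_t\mf{q}'_t\mf{q}_t$ as in Lemma~\ref{l2}, while Minkowski's inequality with Assumption~\ref{a6} yields $\E|r_{is}|^{2+\epsilon}=O\big((\log T)^{2+\epsilon}\big)$. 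Thus $S_N=\mf{u}'\mf{A}^\ast\mf{u}-\sigma^2_0\tr(\mf{A}^\ast)+\mf{r}'\mf{u}$ with $\mf{A}^\ast$ non-stochastic and $\mf{r}$ predetermined.

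Next I would form a martingale-difference array. Order the $nT$ scalar innovations $u_{is}$ lexicographically by $(s,i)$, set $\mathcal{G}_0=\mathcal{F}_0$ and let $\mathcal{G}_k=\sigma(\mathcal{G}_0,\text{first }k\text{ of them})$; replace $\mf{A}^\ast$ by its symmetrization and set $g_k=a^\ast_{kk}(u_k^2-\sigma^2_0)+2u_k\sum_{l<k}a^\ast_{kl}u_l+r_k u_k$, so that $S_N=\sum_k g_k$. Because $a^\ast_{kk}$ and $a^\ast_{kl}$ are non-stochastic, $r_k$ is $\mathcal{G}_{k-1}$-measurable, and $u_k$ is independent of $\mathcal{G}_{k-1}$ with mean zero, $\E(g_k\mid\mathcal{G}_{k-1})=0$, so $\{g_k,\mathcal{G}_k\}$ is a martingale-difference array. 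Then $S_N$ is asymptotically normal provided (a) a Lyapunov condition $\sum_k\E|g_k|^{2+\kappa'}\big/\big(\var S_N\big)^{1+\kappa'/2}\to0$ holds, and (b) the normalized conditional variance satisfies $\big(\var S_N\big)^{-1}\sum_k\E(g_k^2\mid\mathcal{G}_{k-1})\xrightarrow{p}1$. Condition (a) follows from $\E|u_{it}|^{4+\kappa}<\infty$ (Assumption~\ref{a1}), the $O\big((\log T)^{2+\epsilon}\big)$ moment bound on $\mf{r}$, the uniform boundedness of the row/column sums of $\mf{A}^\ast$ (Assumptions~\ref{a2}--\ref{a3}), and the hypothesis that $\tfrac1N\var S_N$ is bounded away from zero (with $\var S_N=O(N)$ by Lemma~\ref{l2}); taking $\kappa'=\min(\kappa,\epsilon)$ makes the ratio $O\big((\log T)^{2+\kappa'}N^{-\kappa'/2}\big)\to0$.

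The main obstacle is condition (b). Expanding $\E(g_k^2\mid\mathcal{G}_{k-1})$ produces, besides terms that recombine into the closed-form $\var S_N$ of Lemma~\ref{l2}, the random pieces $\sum_k\big(\sum_{l<k}a^\ast_{kl}u_l\big)^2$, $\sum_k a^\ast_{kk}\sum_{l<k}a^\ast_{kl}u_l$ and $\sum_k r_k\sum_{l<k}a^\ast_{kl}u_l$; I would show each concentrates on its mean at rate $O_p(N^{-1/2})$ by a direct second-moment computation, exploiting the uniform boundedness of the row and column sums of $\mf{A}^\ast$ (so that, e.g., $\sum_k\big(\sum_l|a^\ast_{kl}|\big)^2=O(N)$) and, for the last sum, the predetermined structure of $\mf{r}$ together with its moment bounds from Assumption~\ref{a6}. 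Combined with $\var S_N\asymp N$, this delivers (b), and the conclusion follows from the martingale-array CLT. This argument is precisely the linear--quadratic-form CLT of \citet{KP:2010} adapted to predetermined (rather than non-stochastic) linear weights, so an alternative write-up just reduces the reshaped $S_N$ to that theorem after checking its hypotheses; either way the substantive step is the $O_p(N^{-1/2})$ concentration of the bilinear terms, the remainder being routine bookkeeping.
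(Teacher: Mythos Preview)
Your proposal is correct and, in fact, goes well beyond what the paper itself provides: the paper's proof of this lemma is simply the one-line citation ``See the CLT results in \citet{Lee:2014,Yu:2008},'' with no argument given. What you have written is precisely the kind of argument those references contain---reshaping $S_N$ into a linear--quadratic form in the untransformed i.i.d.\ innovations via $\mf{F}_{T,T-1}\mf{F}_{T,T-1}'=\mf{J}_T$, verifying that the quadratic coefficient matrix $\mf{J}_T\otimes\mf{B}$ is uniformly bounded in row/column sums and that the linear weights $\mf{r}_s$ are predetermined, and then running the martingale-difference CLT with a Lyapunov bound and a conditional-variance stability check. So your approach is not merely consistent with the paper's; it \emph{is} the deferred proof, spelled out.

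One small remark: your moment bound $\E|r_{is}|^{2+\epsilon}=O\big((\log T)^{2+\epsilon}\big)$ from the harmonic-type sum $\sum_{t<s}c_t/(T-t)$ is the only place where a little extra care is needed compared with the purely non-stochastic linear-weight case of \citet{KP:2010}, and you have handled it correctly; this is exactly the adaptation that \citet{Lee:2014} make to accommodate predetermined IVs. Everything else is standard bookkeeping, as you note.
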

\begin{proof}
See the CLT results in \citet{Lee:2014,Yu:2008}.
\end{proof}
\begin{lemma}\label{l4}
Let  $\mf{Z}^{*}_s=\left(\mf{Y}^{*}_{s-1},\mb{M}\mf{Y}^{*}_{s-1},\mf{X}_s\right)$ and $c_t=\left(\frac{T-t}{T-t+1}\right)^{1/2}$. Then,
\begin{align} 
\mf{Y}^{**,-1}_{t-1}&=c_t\left(\left(\mf{I}_n-\frac{1}{T-t}\sum_{h=1}^{T-t}\mf{A}^h\right)\mf{Y}^{*}_{t-1}- \frac{1}{T-t}\sum_{r=t}^{T-1}\left(\sum_{h=0}^{T-r-1}\mf{A}^{h}\right)\mf{S}^{-1}\left(\mf{X}_{r}\bs{\beta}_0+\alpha_{r,0}\mf{1}_n\right)\right)\nonumber\\
&-c_t\frac{1}{(T-t)(t-1)}\sum_{r=t}^{T-1}\left(\sum_{h=0}^{T-r-1}\mf{A}^{h}\right)\mf{S}^{-1}\sum_{s=1}^{t-1}\left(\mf{S}\mf{Y}^{*}_s-\mf{Z}^{*}_s\bs{\eta}_0-\alpha_{s0}\mf{1}_n\right)\\
&-c_t\frac{1}{(T-t)(t-1)}\sum_{r=t}^{T-1}\left(\sum_{h=0}^{T-r-1}\mf{A}^{h}\right)\mf{S}^{-1}\sum_{s=1}^{t-1}\mf{U}_s-c_t\frac{1}{T-t}\sum_{r=t}^{T-1}\left(\sum_{h=0}^{T-r-1}\mf{A}^{h}\right)\mf{S}^{-1}\mf{U}_{r}.\nonumber
\end{align}
\end{lemma}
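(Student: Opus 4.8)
The plan is to derive the stated identity by first obtaining a closed-form expression for $\mf{Y}^{**,-1}_{t-1}$ in terms of the untransformed variables $\mf{Y}^{*}_0,\dots,\mf{Y}^{*}_{T-1}$ via the Helmert transformation, and then to use the reduced form of \eqref{3.2} to re-express the appearing lagged values. Recall from the explicit form of the forward orthogonal difference given before \eqref{3.3} that
$$
\mf{Y}^{**,-1}_{t-1}=c_t\left(\mf{Y}^{*}_{t-1}-\frac{1}{T-t}\sum_{s=t}^{T-1}\mf{Y}^{*}_s\right),\qquad c_t=\left(\tfrac{T-t}{T-t+1}\right)^{1/2}.
$$
So the first step is purely algebraic: isolate the ``$\mf{Y}^{*}_{t-1}$ term'' and the ``future average'' term. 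The $\mf{Y}^{*}_{t-1}$ piece will eventually be hit by $\big(\mf{I}_n-\tfrac{1}{T-t}\sum_{h=1}^{T-t}\mf{A}^h\big)$ once we track how the future values $\mf{Y}^{*}_s$, $s\ge t$, depend on $\mf{Y}^{*}_{t-1}$.

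Second, I would iterate the reduced form $\mf{Y}^{*}_t=\mf{A}\mf{Y}^{*}_{t-1}+\mf{S}^{-1}\big(\mf{X}_t\bs{\beta}_0+\alpha_{t0}\mf{1}_n+\mf{U}_t\big)$ (the untransformed analogue of the displayed reduced form, valid for $t\ge 1$ by \eqref{3.1} after absorbing $\mu_{\e}\mf{1}_n+\bs{\mu}_0$ into the intercept; note these fixed-effect terms survive here because we have not yet applied $\mf{F}_{T,T-1}$, but they will be handled implicitly through $\mf{Z}^{*}_s$ in the bookkeeping). Writing, for $r\ge t$,
$$
\mf{Y}^{*}_r=\mf{A}^{\,r-t+1}\mf{Y}^{*}_{t-1}+\sum_{h=0}^{r-t}\mf{A}^{h}\mf{S}^{-1}\big(\mf{X}_{r-h}\bs{\beta}_0+\alpha_{r-h,0}\mf{1}_n+\mf{U}_{r-h}\big),
$$
substitute this into $\tfrac{1}{T-t}\sum_{s=t}^{T-1}\mf{Y}^{*}_s$ and reindex the double sums. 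The coefficient of $\mf{Y}^{*}_{t-1}$ collects into $\tfrac{1}{T-t}\sum_{h=1}^{T-t}\mf{A}^h$, producing the leading term $c_t\big(\mf{I}_n-\tfrac{1}{T-t}\sum_{h=1}^{T-t}\mf{A}^h\big)\mf{Y}^{*}_{t-1}$. The exogenous/innovation contributions from the indices $r\ge t$ reorganize — after swapping the order of summation between $s$ and $h$ — into $\tfrac{1}{T-t}\sum_{r=t}^{T-1}\big(\sum_{h=0}^{T-r-1}\mf{A}^h\big)\mf{S}^{-1}(\cdots)$, matching the second half of line one of the claimed identity and the last term of line three.

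Third, for the remaining ``past'' contributions — those indexed by $s\le t-1$ that enter because $\mf{Y}^{*}_{t-1}$ itself must be expressed relative to earlier periods in order to produce the $\mf{Z}^{*}_s\bs{\eta}_0$ structure — I would use the defining relation $\mf{S}\mf{Y}^{*}_s=\mf{Z}^{*}_s\bs{\eta}_0+\alpha_{s0}\mf{1}_n+(\bs{\mu}_0+\mu_{\e}\mf{1}_n)+\mf{U}_s$ together with the averaging over $s=1,\dots,t-1$ implicit in the conditional-expectation projection; the factor $\tfrac{1}{(T-t)(t-1)}$ arises from combining the $\tfrac{1}{T-t}$ weight already present with the $\tfrac{1}{t-1}$ from averaging the information set at time $t-1$. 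This step reproduces lines two and three of the identity, with the $\mf{U}_s$ terms split off exactly as written. The bookkeeping of the telescoping fixed-effect terms $\bs{\mu}_0+\mu_{\e}\mf{1}_n$ is handled by noting they cancel in the within-group differences $\mf{S}\mf{Y}^{*}_s-\mf{Z}^{*}_s\bs{\eta}_0-\alpha_{s0}\mf{1}_n=\mf{U}_s+(\bs{\mu}_0+\mu_{\e}\mf{1}_n)$; collecting consistently yields the stated decomposition.

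The main obstacle is the combinatorial bookkeeping in the second and third steps: one must carefully track which instances of $\mf{A}^h\mf{S}^{-1}$ originate from iterating forward (the $r\ge t$ block) versus from re-expressing $\mf{Y}^{*}_{t-1}$ backward (the $s\le t-1$ block), and verify that after interchanging summation orders the weights collapse precisely to $\sum_{h=0}^{T-r-1}\mf{A}^h$ with the claimed prefactors $\tfrac{1}{T-t}$ and $\tfrac{1}{(T-t)(t-1)}$. I expect no conceptual difficulty — only the need for a disciplined reindexing — and the uniform boundedness of $\sum_h\abs(\mf{A}^h)$ from Assumption~\ref{a5} guarantees all the rearranged series are absolutely convergent so the manipulations are legitimate.
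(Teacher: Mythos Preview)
Your overall strategy matches the paper's: start from the Helmert expression $\mf{Y}^{**,-1}_{t-1}=c_t\big(\mf{Y}^{*}_{t-1}-\tfrac{1}{T-t}\sum_{s=t}^{T-1}\mf{Y}^{*}_s\big)$, iterate the reduced form forward to expand each $\mf{Y}^{*}_s$ with $s\ge t$, reindex, and then substitute for the fixed effect via the past-average identity. That is exactly what the paper does.

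However, your description of step three contains a misconception that would cause trouble if followed literally. The ``past'' block indexed by $s\le t-1$ does \emph{not} arise from re-expressing $\mf{Y}^{*}_{t-1}$ relative to earlier periods---there is no backward iteration of $\mf{Y}^{*}_{t-1}$ in the argument. Rather, your forward-iterated display for $\mf{Y}^{*}_r$ is missing the fixed-effect term: the correct version is
\[
\mf{Y}^{*}_r=\mf{A}^{\,r-t+1}\mf{Y}^{*}_{t-1}+\sum_{h=0}^{r-t}\mf{A}^{h}\mf{S}^{-1}\big(\mf{X}_{r-h}\bs{\beta}_0+\tilde{\bs{\mu}}_0+\alpha_{r-h,0}\mf{1}_n+\mf{U}_{r-h}\big),
\]
with $\tilde{\bs{\mu}}_0=\bs{\mu}_0+\mu_{\e}\mf{1}_n$. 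After reindexing, this produces an extra term $-c_t\tfrac{1}{T-t}\sum_{r=t}^{T-1}\big(\sum_{h=0}^{T-r-1}\mf{A}^h\big)\mf{S}^{-1}\tilde{\bs{\mu}}_0$. It is \emph{this} term---not anything coming from $\mf{Y}^{*}_{t-1}$---that is converted into lines two and three of the lemma by the substitution
\[
\tilde{\bs{\mu}}_0=\frac{1}{t-1}\sum_{s=1}^{t-1}\big(\mf{S}\mf{Y}^{*}_s-\mf{Z}^{*}_s\bs{\eta}_0-\alpha_{s0}\mf{1}_n-\mf{U}_s\big).
\]
The fixed effects do not ``cancel in the within-group differences''; they are explicitly substituted out by this identity, which is where the factor $\tfrac{1}{(T-t)(t-1)}$ comes from. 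Once you keep $\tilde{\bs{\mu}}_0$ in the forward iteration and apply this substitution, the claimed decomposition falls out directly with no further bookkeeping required.
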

\begin{proof}
Using \eqref{3.1}, we have
\begin{align}\label{B.1}
\mf{Y}^{*}_t&=\mf{A}\mf{Y}^{*}_{t-1}+\mf{S}^{-1}\left(\mf{X}_t\bs{\beta}_0+\tilde{\bs{\mu}}_0+\alpha_{t0}\mf{1}_n+\mf{U}_t\right),
\end{align}
where $\tilde{\bs{\mu}}_0=\bs{\mu}+\mu_{\e}\mf{1}_n$. Using \eqref{B.1}, we can expand $\mf{Y}^{*}_{t+h}$ for $h\geq0$ as
\begin{align}
\mf{Y}^{*}_{t+h}&=\mf{A}^{h+1}\mf{Y}^{*}_{t-1}+\sum_{j=0}^h\mf{A}^{j}\mf{S}^{-1}\left(\mf{X}_{t+h-j}\bs{\beta}_0+\tilde{\bs{\mu}}_0+\alpha_{t+h-j,0}\mf{1}_n+\mf{U}_{t+h-j}\right),
\end{align} 
Therefore, we have
\begin{align}
\mf{Y}^{*}_{s}&=\mf{A}^{s-t+1}\mf{Y}^{*}_{t-1}+\sum_{j=0}^{s-t}\mf{A}^{j}\mf{S}^{-1}\left(\mf{X}_{s-j}\bs{\beta}_0+\tilde{\bs{\mu}}_0+\alpha_{s-j,0}\mf{1}_n+\mf{U}_{s-j}\right).
\end{align}
Then, we can express $\sum_{s=t}^{T-1}\mf{Y}^{*}_{s}$ as
\begin{align}
\sum_{s=t}^{T-1}\mf{Y}^{*}_{s}&=\sum_{s=t}^{T-1}\mf{A}^{s-t+1}\mf{Y}^{*}_{t-1}+\sum_{s=t}^{T-1}\sum_{j=0}^{s-t}\mf{A}^{j}\mf{S}^{-1}\left(\mf{X}_{s-j}\bs{\beta}_0+\tilde{\bs{\mu}}_0+\alpha_{s-j,0}\mf{1}_n+\mf{U}_{s-j}\right)\nonumber\\
&=\sum_{h=1}^{T-t}\mf{A}^{h}\mf{Y}^{*}_{t-1}+\sum_{r=t}^{T-1}\left(\sum_{h=0}^{T-r-1}\mf{A}^{h}\right)\mf{S}^{-1}\left(\mf{X}_{r}\bs{\beta}_0+\tilde{\bs{\mu}}_0+\alpha_{r,0}\mf{1}_n+\mf{U}_{r}\right).
\end{align}
Thus, using $\mf{Y}^{**,-1}_{t-1}=c_t\left(\mf{Y}^{*}_{t-1}-\frac{1}{T-t}\sum_{s=t}^{T-1}\mf{Y}^{*}_s\right)$, we obtain
\begin{align}\label{B.5}
\mf{Y}^{**,-1}_{t-1}&=c_t\left(\left(\mf{I}_n-\frac{1}{T-t}\sum_{h=1}^{T-t}\mf{A}^h\right)\mf{Y}^{*}_{t-1}- \frac{1}{T-t}\sum_{r=t}^{T-1}\left(\sum_{h=0}^{T-r-1}\mf{A}^{h}\right)\mf{S}^{-1}\left(\mf{X}_{r}\bs{\beta}_0+\alpha_{r,0}\mf{1}_n\right)\right)\nonumber\\
&-c_t\frac{1}{T-t}\sum_{r=t}^{T-1}\left(\sum_{h=0}^{T-r-1}\mf{A}^{h}\right)\mf{S}^{-1}\tilde{\bs{\mu}}_0-c_t\frac{1}{T-t}\sum_{r=t}^{T-1}\left(\sum_{h=0}^{T-r-1}\mf{A}^{h}\right)\mf{S}^{-1}\mf{U}_{r}.
\end{align}
Note that we can express $\tilde{\bs{\mu}}_0$ as 
\begin{align}\label{B.6}
\tilde{\bs{\mu}}_0=\frac{1}{t-1}\sum_{s=1}^{t-1}\left(\mf{S}\mf{Y}^{*}_s-\mf{Z}^{*}_s\bs{\eta}_0-\alpha_{s0}\mf{1}_n-\mf{U}_s\right),
\end{align}
where $\mf{Z}^{*}_s=\left(\mf{Y}^{*}_{s-1},\mb{M}\mf{Y}^{*}_{s-1},\mf{X}_s\right)$. Then, substituting \eqref{B.6} into \eqref{B.5} yields the result.
\end{proof}
\section{Details on Identification }\label{A.C}
In the GMM setting, the parameter vector is identified if $\plim_{n\to\infty}\frac{1}{N}\mf{g}_N(\bs{\theta})=\mf{0}$ has the unique solution $\bs{\theta}_0$. Consider the linear moment function $\mf{Q}^{'}_N\mf{J}_N\mf{U}_N(\bs{\theta})$. From \eqref{3.3}, we have $\mf{J}_n\mf{U}^{*}_t(\bs{\theta})=\mf{J}_n\left(\mf{S}(\bs{\rho})\mf{Y}^{**}_t-\mf{Z}^{**}_t\bs{\eta}\right)$, and $\mf{Y}^{**}_t=\mf{S}^{-1}\left(\mf{Z}^{**}_t\bs{\eta}_0+\alpha^{*}_t\mf{1}_n+\mf{U}^{*}_t\right)$. Thus, we have 
\begin{align*}
\mf{J}_n\mf{U}^{*}_t(\bs{\theta})=\mf{J}_n\left(\mf{S}(\bs{\rho})\mf{S}^{-1}\left(\mf{Z}^{**}_t\bs{\eta}_0+\alpha^{*}_t\mf{1}_n+\mf{U}^{*}_t\right)-\mf{Z}^{**}_t\bs{\eta}\right)
\end{align*}
Note that we can express $\mf{S}(\bs{\rho})\mf{S}^{-1}$ in the following way:
\begin{align*}
\mf{S}(\bs{\rho})\mf{S}^{-1}&=\left(\mf{I}_n-\sum_{l=1}^p\rho_{l}\mf{M}_l\right)\mf{S}^{-1}=\left(\mf{I}_n+\mf{S}-\mf{S}-\sum_{l=1}^p\rho_{l}\mf{M}_l\right)\mf{S}^{-1}\\
&=\left(\mf{S}-\sum_{l=1}^p(\rho_{l}-\rho_{l0})\mf{M}_l\right)\mf{S}^{-1}=\left(\mf{I}_n-\sum_{l=1}^p(\rho_{l}-\rho_{l0}) \mf{G}_{l}\right).
\end{align*}
Then, we can express $\mf{J}_n\mf{U}^{*}_t(\bs{\theta})$ as 
\begin{align*}
&\mf{J}_n\mf{U}^{*}_t(\bs{\theta})=\mf{J}_n\mf{S}(\bs{\rho})\mf{S}^{-1}\left(\mf{Z}^{**}_t\bs{\eta}_0+\alpha^{*}_t\mf{1}_n\right)+\mf{J}_n\mf{S}(\bs{\rho})\mf{S}^{-1}\mf{U}^{*}_t-\mf{J}_n\mf{Z}^{**}_t\bs{\eta}\\
&=\mf{J}_n\left(\mf{I}_n-\sum_{l=1}^p(\rho_{l}-\rho_{l0})\mf{G}_{l}\right)\left(\mf{Z}^{**}_t\bs{\eta}_0+\alpha^{*}_t\mf{1}_n\right)+\mf{J}_n\mf{S}(\bs{\rho})\mf{S}^{-1}\mf{U}^{*}_t-\mf{J}_n\mf{Z}^{**}_t\bs{\eta}\\
&=\mf{J}_n\left(\mf{Z}^{**}_t\left(\bs{\eta}_0-\bs{\eta}\right)-\sum_{l=1}^p(\rho_{l}-\rho_{l0})\mf{G}_{l}\left(\mf{Z}^{**}_t\bs{\eta}_0+\alpha^{*}_t\mf{1}_n\right)\right)+\mf{J}_n\mf{S}(\bs{\rho})\mf{S}^{-1}\mf{U}^{*}_t.
\end{align*}
Define $\mf{L}_{r,t}= \mf{G}_{r}\left(\mf{Z}^{**}_t\bs{\eta}_0+\alpha^{*}_t\mf{1}_n\right)$, $\mf{L}_{t}=\left(\mf{L}_{1,t},\hdots,\mf{L}_{p,t}\right)$, and $\mf{L}_{N}=\left(\mf{L}^{'}_{1},\hdots,\mf{L}^{'}_{T-1}\right)^{'}$. Thus, we can write $\mf{Q}^{'}_N\mf{J}_N\mf{U}_N(\bs{\theta})$ as
\begin{align*}
\mf{Q}^{'}_N\mf{J}_N\mf{U}_N(\bs{\theta})&=\mf{Q}^{'}_N\mf{J}_N\left(\mf{Z}_N\left(\bs{\eta}_0-\bs{\eta}\right)+\mf{L}_{N}\left(\bs{\rho}_0-\bs{\rho}\right)+\mf{S}_N(\bs{\rho})\mf{S}^{-1}_N\mf{U}^{*}_t\right)\\
&=\mf{Q}^{'}_N\mf{J}_N\left(\mf{Z}_N,\,\mf{L}_{N}\right)\left(\left(\bs{\eta}_0-\bs{\eta}\right)^{'},\left(\bs{\rho}_0-\bs{\rho}\right)^{'}\right)^{'}+\mf{Q}^{'}_N\mf{J}_N\mf{S}_N(\bs{\rho})\mf{S}^{-1}_N\mf{U}_N.
\end{align*}
By Lemma~\ref{l1}, we have $\plim_{n\to\infty}\frac{1}{N}\mf{Q}^{'}_N\mf{J}_N\mf{S}_N(\bs{\rho})\mf{S}^{-1}_N\mf{U}_N=\mf{0}$. Thus, we require that the equation $\plim_{n\to\infty}\frac{1}{N}\mf{Q}^{'}_N\mf{J}_N\left(\mf{Z}_N,\,\mf{L}_{N}\right)\left(\left(\bs{\eta}_0-\bs{\eta}\right)^{'},\left(\bs{\rho}_0-\bs{\rho}\right)^{'}\right)^{'}=\mf{0}$ should have a unique solution at $\bs{\theta}_0$. If $\plim_{n\to\infty}\frac{1}{N}\mf{Q}^{'}_N\mf{J}_N\left(\mf{Z}_N,\,\mf{L}_{N}\right)$ has the full column rank, then we will have a unique solution at $\bs{\theta}_0$. 

\section{Proof of Theorem~\ref{t1}}\label{pt1}
In this section, we will first show that  $\frac{1}{N}\frac{\partial \mf{g}_N(\bs{\theta}_0)}{\partial\bs{\theta}^{'}}=\mf{D}_{1N}+\mf{D}_{2N}+O_p(N^{-1/2})$. Recall that 
\begin{align}
\mf{g}_N(\bs{\theta})=
\begin{pmatrix}
\mf{U}^{'}_N(\bs{\theta})\mf{J}_N\mf{P}_{1N}\mf{J}_N\mf{U}_N(\bs{\theta})\\
\vdots\\
\mf{U}^{'}_N(\bs{\theta})\mf{J}_N\mf{P}_{mN}\mf{J}_N\mf{U}_N(\bs{\theta})\\
\mf{Q}^{'}_N\mf{J}_N\mf{U}_N(\bs{\theta})
\end{pmatrix},
\end{align}
where $\bs{\theta}=\left(\bs{\rho}^{'},\bs{\eta}^{'}\right)^{'}$, $\mf{U}_N(\bs{\theta})=\left(\mf{U}^{*'}_1(\bs{\theta}),\hdots,\mf{U}^{*'}_{T-1}(\bs{\theta})\right)^{'}$ and $\mf{U}^{*}_t(\bs{\theta})=\left(\mf{S}(\bs{\rho})\mf{Y}^{**}_t-\mf{Z}^{**}_t\bs{\eta}-\alpha^{*}_{t}\mf{1}_n\right)$. Note that
$\frac{\partial\mf{U}^{*}_t(\bs{\theta})}{\partial\rho_j}=-\mf{M}_j\mf{Y}^{**}_t$ for  $j=1,\hdots,p$ and $\frac{\partial\mf{U}^{*}_t(\bs{\theta})}{\partial\bs{\eta}^{'}}=-\mf{Z}^{**}_t$. Then, the components of $\frac{\partial \mf{g}_N(\bs{\theta})}{\partial\bs{\theta}^{'}}$ are
\begin{align}
\frac{\partial \mf{g}_N(\bs{\theta})}{\partial\bs{\eta}^{'}}=-
\begin{pmatrix}
\mf{U}^{'}_N(\bs{\theta})\mf{J}_N\mf{P}^s_{1N}\mf{J}_N\mf{Z}_N\\
\vdots\\
\mf{U}^{'}_N(\bs{\theta})\mf{J}_N\mf{P}^s_{mN}\mf{J}_N\mf{Z}_N\\
\mf{Q}^{'}_N\mf{J}_N\mf{Z}_N
\end{pmatrix},
\quad
\frac{\partial \mf{g}_N(\bs{\theta})}{\partial\bs{\rho}^{'}}=\left(\frac{\partial \mf{g}_N(\bs{\theta})}{\partial\rho_1},\hdots,\frac{\partial \mf{g}_N(\bs{\theta})}{\partial\rho_p}\right),
\end{align}
where
\begin{align}
\frac{\partial \mf{g}_N(\bs{\theta})}{\partial\rho_j}=
-
\begin{pmatrix}
\left(\mf{M}_{jN}\mf{Y}_N\right)^{'}\mf{J}_N\mf{P}^s_{1N}\mf{J}_N\mf{U}_N(\bs{\theta})\\
\vdots\\
\left(\mf{M}_{jN}\mf{Y}_N\right)^{'}\mf{J}_N\mf{P}^s_{mN}\mf{J}_N\mf{U}_N(\bs{\theta})\\
\mf{Q}^{'}_N\mf{J}_N\mf{M}_{jN}\mf{Y}_N
\end{pmatrix},
\end{align}
with $\mf{M}_{jN}=(\mf{I}_{T-1}\otimes\mf{M}_j)$ for $j=1,2,\hdots,p$. Next, we will determine the probability limit of $\frac{\partial \mf{g}_N(\bs{\theta}_0)}{\partial\bs{\theta}^{'}}$. Consider $\mf{U}^{'}_N\mf{J}_N\mf{P}^s_{lN}\mf{J}_N\mf{Z}_N$ in $\frac{\partial \mf{g}_N(\bs{\theta})}{\partial\bs{\eta}^{'}}$. Recall that $\mf{Z}^{**}_t=\left(\mf{Y}^{**,-1}_{t-1},\mb{M}\mf{Y}^{**,-1}_{t-1},\mf{X}^{*}_t\right)$. Then, we can express $\mf{Z}^{'}_N\mf{J}_N\mf{P}^s_{lN}\mf{J}_N\mf{U}_N$ in the following way
\begin{align}
\mf{Z}^{'}_N\mf{J}_N\mf{P}^s_{lN}\mf{J}_N\mf{U}_N&=\sum_{t=1}^{T-1}\mf{Z}^{**'}_t\mf{J}_n\mf{P}^s_{l}\mf{J}_n\mf{U}^{*}_t=
\begin{pmatrix}
\sum_{t=1}^{T-1}\mf{Y}^{**,-1'}_{t-1}\mf{J}_n\mf{P}^s_{l}\mf{J}_n\mf{U}^{*}_t\\
\sum_{t=1}^{T-1}\left(\mb{M}\Y^{**,-1}_{t-1}\right)^{'}\mf{J}_n\mf{P}^s_{l}\mf{J}_n\mf{U}^{*}_t\\
\sum_{t=1}^{T-1}\mf{X}^{*'}_t\mf{J}_n\mf{P}^s_{l}\mf{J}_n\mf{U}^{*}_t
\end{pmatrix}.
\end{align}
Using Lemma~\ref{l1}, it can be shown that
\begin{align*}
\E\left(\frac{1}{N}\sum_{t=1}^{T-1}\mf{\Y}^{**,-1}_{t-1}\mf{J}_n\mf{P}^s_{l}\mf{J}_n\mf{U}^{*}_t\right)=\frac{\sigma^2_0}{NT}\tr\left(\sum_{h=1}^{T-1}(T-h)\mf{A}^{h-1}\mf{S}^{-1}\mf{J}_n\mf{P}^s_l\mf{J}_n\right)
\end{align*}
\begin{align*}
\E\left(\frac{1}{N}\sum_{t=1}^{T-1}\left(\mf{M}_j\Y^{**,-1}_{t-1}\right)^{'}\mf{J}_n\mf{P}^s_{l}\mf{J}_n\mf{U}^{*}_t\right)=\frac{\sigma^2_0}{NT}\tr\left(\sum_{h=1}^{T-1}(T-h)\mf{A}^{h-1}\mf{S}^{-1}\mf{J}_n\mf{P}^s_l\mf{J}_n\mf{M}_j\right),
\end{align*}
\begin{align*}
\E\left(\frac{1}{N}\sum_{t=1}^{T-1}\mf{X}^{*'}_t\mf{J}_n\mf{P}^s_{l}\mf{J}_n\mf{U}^{*}_t\right)=\mf{0}_{k\times1}.
\end{align*}
for $l=1,2,\hdots,m$ and $j=1,2,\hdots,p$. Define $\mf{b}_{l\gamma}=\frac{\sigma^2_0}{N}\tr\left(\sum_{h=1}^{T-1}(T-h)\mf{A}^{h-1}\mf{S}^{-1}\mf{J}_n\mf{P}^s_l\mf{J}_n\right)$, $\mf{b}_{l\delta_j}=\frac{\sigma^2_0}{N}\tr\left(\sum_{h=1}^{T-1}(T-h)\mf{A}^{h-1}\mf{S}^{-1}\mf{J}_n\mf{P}^s_l\mf{J}_n\mf{M}_s\right)$ for $j=1,2,\hdots,p$, and 
\begin{align}
\mf{b}_{l\bs{\eta}}=\left(\mf{b}_{l\gamma},\mf{b}_{l\delta_1},\mf{b}_{l\delta_2},\hdots,\mf{b}_{l\delta_p},\mf{0}_{k\times1}\right).
\end{align}
Then, Lemma~\ref{l1} ensures that 
\begin{align}
\frac{1}{N}\mf{Z}^{'}_N\mf{J}_N\mf{P}^s_{lN}\mf{J}_N\mf{U}_N=\frac{1}{T}\mf{b}^{'}_{l\bs{\eta}}+O_P(N^{-1/2}),
\end{align}
for $l=1,2,\hdots,m$. Let $\mf{b}_{\bs{\delta}}=\left(\mf{b}^{'}_{1\bs{\eta}},\mf{b}^{'}_{l\bs{\eta}},\hdots,\mf{b}^{'}_{m\bs{\eta}}\right)^{'}$. Then, we have 
\begin{align}\label{d8}
\frac{1}{N}\frac{\partial \mf{g}_N(\bs{\theta})}{\partial\bs{\eta}^{'}}=
-\frac{1}{N}
\begin{pmatrix}
\mf{0}_{m\times k_z}\\
\mf{Q}^{'}_N\mf{J}_N\mf{Z}_N
\end{pmatrix}
+
\frac{1}{T}
\begin{pmatrix}
\mf{b}_{\bs{\delta}}\\
\mf{0}_{k_q\times k_z}
\end{pmatrix}
+O_p(N^{-1/2}).
\end{align}
Next, we consider $\left(\mf{M}_{rN}\mf{Y}_N\right)^{'}\mf{J}_N\mf{P}^s_{lN}\mf{J}_N\mf{U}_N$ in  $\frac{\partial \mf{g}_N(\bs{\theta}_0)}{\partial\bs{\rho}^{'}}$. Recall that $\mf{M}_r\mf{Y}^{**}_t= \mf{G}_{r}\left(\mf{Z}^{**}_t\bs{\eta}_0+\alpha^{*}_{t0}\mf{1}_n\right)+\mf{G}_{r}\mf{U}^{*}_t$ for $r=1,\hdots,p$. This expression implies that  $\mf{M}_{rN}\mf{Y}_N=\mf{L}_{rN}+\mf{G}_{rN}\mf{U}_N$, where $\mf{G}_{rN}=(\mf{I}_{T-1}\otimes\mf{G}_{r})$ and $\mf{L}_{rN}=\left(\mf{L}^{'}_{r1},\mf{L}^{'}_{r2},\hdots,\mf{L}^{'}_{r,T-1}\right)^{'}$ with $\mf{L}_{rt}=\mf{G}_{r}\left(\mf{Z}^{**}_t\bs{\eta}_0+\alpha^{*}_{t0}\mf{1}_n\right)$. Thus, we can express $\left(\mf{M}_{rN}\mf{Y}_N\right)^{'}\mf{J}_N\mf{P}^s_{lN}\mf{J}_N\mf{U}_N$ as
\begin{align*}
\left(\mf{M}_{rN}\mf{Y}_N\right)^{'}\mf{J}_N\mf{P}^s_{lN}\mf{J}_N\mf{U}_N=\mf{L}^{'}_{rN}\mf{J}_N\mf{P}^s_{lN}\mf{J}_N\mf{U}_N+\left(\mf{G}_{rN}\mf{U}_N\right)^{'}\mf{J}_N\mf{P}^s_{lN}\mf{J}_N\mf{U}_N
\end{align*}
By Lemma~\ref{l1}, we have $\frac{1}{N}\left(\mf{G}_{rN}\mf{U}_N\right)^{'}\mf{J}_N\mf{P}^s_{lN}\mf{J}_N\mf{U}_N=\frac{\sigma^2_0}{N}\tr\left(\mf{G}^{'}_{rN}\mf{J}_N\mf{P}^s_{lN}\mf{J}_N\right)+O_p(N^{-1/2})$ for $r=1,2,\hdots,p$ and $l=1,2,\hdots,m$. Note that we can express $\mf{L}^{'}_{rN}\mf{J}_N\mf{P}^s_{lN}\mf{J}_N\mf{U}_N$ as
\begin{align*}  
&\mf{L}^{'}_{rN}\mf{J}_N\mf{P}^s_{lN}\mf{J}_N\mf{U}_N=\sum_{t=1}^{T-1}\mf{L}^{'}_{rt}\mf{J}_n\mf{P}^s_l\mf{J}_n\mf{U}_t=\sum_{t=1}^{T-1}\left(\mf{G}_{r}\left(\mf{Z}^{**}_t\bs{\eta}_0+\alpha^{*}_{t0}\mf{1}_n\right)\right)^{'}\mf{J}_n\mf{P}^s_l\mf{J}_n\mf{U}_t\\
&=\sum_{t=1}^{T-1}\left(\mf{G}_{r}\left(\gamma_0+\sum_{s=1}^p\delta_{s0}\mf{M}_s\right)\Y^{**,-1}_{t-1}+\mf{G}_{r}\mf{X}^{*}_{t}\bs{\beta}_0+\alpha^{*}_{t0}\mf{G}_{r}\mf{1}_n\right)^{'}\mf{J}_n\mf{P}^s_l\mf{J}_n\mf{U}_t
\end{align*}
Then, using Lemma~\ref{l1}, we obtain 
\begin{align*}
&\frac{1}{N}\mf{L}^{'}_{rN}\mf{J}_N\mf{P}^s_{lN}\mf{J}_N\mf{U}_N\\
&=\frac{\sigma^2_0}{NT}\tr\left(\left(\mf{G}_{r}\left(\gamma_0+\sum_{j=1}^p\delta_{j0}\mf{M}_j\right)\right)\left(\sum_{h=1}^{T-1}(T-h)\mf{A}^{h-1}\mf{S}^{-1}\mf{J}_n\mf{P}^s_l\mf{J}_n\right)\right).
\end{align*}
Let $\mf{b}_{l\rho_r}=\frac{\sigma^2_0}{NT}\tr\left(\left(\mf{G}_{r}\left(\gamma_0+\sum_{j=1}^q\delta_{j0}\mf{M}_j\right)\right)\left(\sum_{h=1}^{T-1}(T-h)\mf{A}^{h-1}\mf{S}^{-1}\mf{J}_n\mf{P}^s_l\mf{J}_n\right)\right)$ for $r=1,2,\hdots,p$ and $l=1,2,\hdots,m$.  Define the following matrices:
\begin{align}
&\mf{C}_{N}=
\begin{pmatrix}
\tr\left(\mf{G}^{'}_{1N}\mf{J}_N\mf{P}^s_{1N}\mf{J}_N\right)&\hdots&\tr\left(\mf{G}^{'}_{pN}\mf{J}_N\mf{P}^s_{1N}\mf{J}_N\right)\\
\vdots&\ddots&\vdots\\
\tr\left(\mf{G}^{'}_{1N}\mf{J}_N\mf{P}^s_{mN}\mf{J}_N\right)&\hdots&\tr\left(\mf{G}^{'}_{pN}\mf{J}_N\mf{P}^s_{mN}\mf{J}_N\right)
\end{pmatrix},
\end{align}
\begin{align}
\mf{b}_{\bs{\rho}}=
\begin{pmatrix}
\mf{b}_{1\rho_1}&\mf{b}_{1\rho_2}&\hdots&\mf{b}_{1\rho_p}\\
\vdots&\vdots&\ddots&\vdots\\
\mf{b}_{m\rho_1}&\mf{b}_{m\rho_2}&\hdots&\mf{b}_{m\rho_p}\\
\mf{0}_{k_q\times1}&\mf{0}_{k_q\times1}&\hdots&\mf{0}_{k_q\times1}
\end{pmatrix}.
\end{align}
Let $\mf{L}_{N}=\left(\mf{L}_{1N},\hdots,\mf{L}_{pN}\right)$. The preceding analysis indicates that 
\begin{align}\label{d11}
\frac{\partial \mf{g}_N(\bs{\theta})}{\partial\bs{\eta}^{'}}=-\frac{1}{N}
\begin{pmatrix}
\sigma^2_0\mf{C}_{N}\\
\mf{Q}^{'}_N\mf{J}_N\mf{L}_{N}
\end{pmatrix}
+\frac{1}{T}
\begin{pmatrix}
\mf{b}_{\bs{\rho}}\\
\mf{0}_{k_q\times p}
\end{pmatrix}.
\end{align}
Substituting \eqref{d8} and \eqref{d11} into  $\frac{1}{N}\frac{\partial \mf{g}_N(\bs{\theta}_0)}{\partial\bs{\theta}^{'}}=\left(\frac{1}{N}\frac{\partial \mf{g}_N(\bs{\theta}_0)}{\partial\bs{\rho}^{'}},\,\frac{1}{N}\frac{\partial \mf{g}_N(\bs{\theta}_0)}{\partial\bs{\eta}^{'}}\right)$, we obtain
\begin{align}
\frac{1}{N}\frac{\partial \mf{g}_N(\bs{\theta}_0)}{\partial\bs{\theta}^{'}}=\mf{D}_{1N}+\mf{D}_{2N}+o_p(1),
\end{align}
 where
\begin{align}
\mf{D}_{1N}=-\frac{1}{N}
\begin{pmatrix}
\sigma^2_0\mf{C}_{N}&\mf{0}_{m\times k_z}\\
\mf{Q}^{'}_N\mf{J}_N\mf{L}_{N}&\mf{Q}^{'}_N\mf{J}_N\mf{Z}_{N}
\end{pmatrix}
=O(1),
\end{align}
\begin{align}
\mf{D}_{2N}=\frac{1}{T}
\begin{pmatrix}
\mf{b}_{\bs{\rho}}&\mf{b}_{\bs{\delta}}\\
\mf{0}_{k_q\times p}&\mf{0}_{k_q\times k_z}
\end{pmatrix}
=O(T^{-1}).
\end{align}
Under our set of assumptions, Lemma~\ref{l3} and Cramer-Wold device suggest that $\frac{1}{\sqrt{N}}\frac{\partial \mf{g}_N(\bs{\theta}_0)}{\partial\bs{\theta}^{'}}\xrightarrow{d}N\left(\mf{0},\,\plim_{n\to\infty}\bs{\Omega}_N\right)$. Then, the asymptotic distribution of the GMM estimator in our setting follows from the asymptotic argument in \citet[Proposition 2]{Lee:2007}.    
\section{Proof of Theorem~\ref{t2}}\label{pt2}
When $T$ is large, the precision matrix of $\sqrt{N}\left(\hat{\bs{\theta}}_N-\bs{\theta}_0\right)$ reduces to 
\begin{align}\label{E.1}
\mf{D}^{'}_{1N}\bs{\Omega}^{-1}_N\mf{D}_{1N}&=\frac{1}{N}
\begin{pmatrix}
\mf{C}^{'}_N\left(\bs{\Delta}_{mN}+\frac{\mu_4-3\sigma^4_0}{\sigma^4_0}\bs{\omega}^{'}_{mN}\bs{\omega}_{mN}\right)^{-1}\mf{C}_N&\mf{0}_{p\times k_z}\\
\mf{0}_{k_z\times p}&\mf{0}_{k_z\times k_z}
\end{pmatrix}\nonumber\\
&+\frac{1}{N\sigma^2_0}\left(\mf{L}_N,\, \mf{Z}_N\right)^{'}\mf{M}_{\mf{Q}}\left(\mf{L}_N,\, \mf{Z}_N\right).
\end{align}
As shown by \citet{Lee:2014}, the maximum of $\mf{C}^{'}_N\left(\bs{\Delta}_{mN}+\frac{\mu_4-3\sigma^4_0}{\sigma^4_0}\bs{\omega}^{'}_{mN}\bs{\omega}_{mN}\right)^{-1}\mf{C}_N$ is obtained by choosing $\mf{P}^{*}_j$ for $j=1,2,\hdots,p$. Thus, the maximum is 
\begin{align}\label{E.2}
\mf{C}^{*}_{N}=
\begin{pmatrix}
\tr\left(\mf{G}^{'}_{1N}\mf{J}_N\mf{P}^{*s}_{1N}\mf{J}_N\right)&\hdots&\tr\left(\mf{G}^{'}_{pN}\mf{J}_N\mf{P}^{*s}_{1N}\mf{J}_N\right)\\
\vdots&\ddots&\vdots\\
\tr\left(\mf{G}^{'}_{1N}\mf{J}_N\mf{P}^{*s}_{pN}\mf{J}_N\right)&\hdots&\tr\left(\mf{G}^{'}_{pN}\mf{J}_N\mf{P}^{*s}_{pN}\mf{J}_N\right)
\end{pmatrix}.
\end{align}
Next , we consider the second term on the right hand side of \eqref{E.1}. It follows from Lemma~\ref{l1} (4) that $\plim_{n,T\to\infty}\frac{1}{N}\mf{Q}^{'}_N\mf{J}_N\left(\mf{L}_N,\mf{Z}^{**}_N\right)=\plim_{n,T\to\infty}\frac{1}{N}\mf{Q}^{'}_N\mf{J}_N\mf{Q}^{*}_N$, where $\mf{Q}^{*}_N=(\mf{Q}^{*'}_1,\hdots,\mf{Q}^{*'}_{T-1})^{'}$. Thus, we have 
$$
\plim_{n,T\to\infty}\frac{1}{N}\left(\mf{L}_N,\, \mf{Z}_N\right)^{'}\mf{M}_{\mf{Q}}\left(\mf{L}_N,\, \mf{Z}_N\right)\leq \plim_{n,T\to\infty}\frac{1}{N}\mf{Q}^{*'}_N\mf{J}_N\mf{Q}^{*}_N,
$$
suggesting that $\mf{Q}^{*}_N$ is the best IV matrix.  From Lemma 5 of \citet{Lee:2014}, it also follows that 
\begin{align}\label{E.3}
\plim_{n,T\to\infty}\frac{1}{N}\left(\mf{L}_N,\, \mf{Z}_N\right)^{'}\mf{J}_{N}\left(\mf{L}_N,\, \mf{Z}_N\right)=\plim_{n,T\to\infty}\frac{1}{N}\mf{Q}^{*'}_N\mf{J}_N\mf{Q}^{*}_N.
\end{align}
Thus, the results in \eqref{E.2} and \eqref{E.3} suggest that that the precision matrix of $\sqrt{N}\left(\hat{\bs{\theta}}^{*}_N-\bs{\theta}_0\right)$ is given by $\bs{\Sigma}^{*}_N$. Finally, the asymptotic distribution of $\sqrt{N}\left(\hat{\bs{\theta}}^{*}_N-\bs{\theta}_0\right)$ can be shown by following the argument given in Theorem 2 of \citet{Lee:2014}. Therefore, the details are omitted.

\newpage

\bibliographystyle{agsm}

\begin{thebibliography}{}

\bibitem[Abadir and Magnus, 2005]{Karim:2005}
Abadir, K.~M. and Magnus, J.~R. (2005).
\newblock {\em Matrix Algebra}.
\newblock Cambridge University Press, New York.

\bibitem[Bashar, 2021]{bashar2021intra}
Bashar, O.~H. (2021).
\newblock An intra-city analysis of house price convergence and spatial
  dependence.
\newblock {\em The Journal of Real Estate Finance and Economics},
  63(4):525--546.

\bibitem[Bill{\'e} et~al., 2017]{bille2017two}
Bill{\'e}, A.~G., Benedetti, R., and Postiglione, P. (2017).
\newblock A two-step approach to account for unobserved spatial heterogeneity.
\newblock {\em Spatial Economic Analysis}, 12(4):452--471.

\bibitem[Bollerslev et~al., 1992]{Bollerslev:1992}
Bollerslev, T., Chou, R.~Y., and Kroner, K.~F. (1992).
\newblock Arch modeling in finance: A review of the theory and empirical
  evidence.
\newblock {\em Journal of Econometrics}, 52(1):5--59.

\bibitem[Chang and Diao, 2021]{chang2021inter}
Chang, Z. and Diao, M. (2021).
\newblock Inter-city transport infrastructure and intra-city housing markets:
  {Estimating the redistribution effect of high-speed rail in Shenzhen, China}.
\newblock {\em Urban Studies}.

\bibitem[Engle, 1982]{Engle:1982}
Engle, R.~F. (1982).
\newblock Autoregressive conditional heteroscedasticity with estimates of the
  variance of {United Kingdom} inflation.
\newblock {\em Econometrica}, 50(4):987--1007.

\bibitem[Engle and Bollerslev, 1986]{Engle:1986}
Engle, R.~F. and Bollerslev, T. (1986).
\newblock Modelling the persistence of conditional variances.
\newblock {\em Econometric Reviews}, 5(1):1--50.

\bibitem[Gupta and Robinson, 2015]{gupta2015inference}
Gupta, A. and Robinson, P.~M. (2015).
\newblock Inference on higher-order spatial autoregressive models with
  increasingly many parameters.
\newblock {\em Journal of Econometrics}, 186(1):19--31.

\bibitem[H{\o}lleland and Karlsen, 2020]{Hol:2020}
H{\o}lleland, S. and Karlsen, H.~A. (2020).
\newblock A stationary spatio-temporal {GARCH} model.
\newblock {\em Journal of Time Series Analysis}, 41(2):177--209.

\bibitem[Holmes et~al., 2017]{holmes2017pair}
Holmes, M.~J., Otero, J., and Panagiotidis, T. (2017).
\newblock A pair-wise analysis of intra-city price convergence within the
  {Paris} housing market.
\newblock {\em The Journal of Real Estate Finance and Economics}, 54(1):1--16.

\bibitem[Jacquier et~al., 1994]{JPR:1994}
Jacquier, E., Polson, N.~G., and Rossi, P.~E. (1994).
\newblock Bayesian analysis of stochastic volatility models.
\newblock {\em Journal of Business \& Economic Statistics}, 12(4):371--389.

\bibitem[{Kelejian} and {Prucha}, 2010]{KP:2010}
{Kelejian}, H.~H. and {Prucha}, I.~R. (2010).
\newblock Specification and estimation of spatial autoregressive models with
  autoregressive and heteroskedastic disturbances.
\newblock {\em Journal of Econometrics}, 157:53--67.

\bibitem[Kim et~al., 1998]{Kim:1998}
Kim, S., Shephard, N., and Chib, S. (1998).
\newblock Stochastic volatility: Likelihood inference and comparison with
  {ARCH} models.
\newblock {\em The Review of Economic Studies}, 65(3):361--393.

\bibitem[Lee, 2004]{Lee:2004}
Lee, L.-f. (2004).
\newblock Asymptotic distributions of quasi-maximum likelihood estimators for
  spatial autoregressive models.
\newblock {\em Econometrica}, 72(6):1899--1925.

\bibitem[Lee, 2007]{Lee:2007}
Lee, L.-f. (2007).
\newblock {GMM} and {2SLS} estimation of mixed regressive, spatial
  autoregressive models.
\newblock {\em Journal of Econometrics}, 137(2):489--514.

\bibitem[Lee and Liu, 2010]{lee2010efficient}
Lee, L.-f. and Liu, X. (2010).
\newblock Efficient {GMM} estimation of high order spatial autoregressive
  models with autoregressive disturbances.
\newblock {\em Econometric Theory}, 26(1):187--230.

\bibitem[Lee and Yu, 2010]{Yu:2010}
Lee, L.-f. and Yu, J. (2010).
\newblock A spatial dynamic panel data model with both time and individual
  fixed effects.
\newblock {\em Econometric Theory}, 26:564--597.

\bibitem[Lee and Yu, 2014]{Lee:2014}
Lee, L.-f.~L. and Yu, J. (2014).
\newblock Efficient {GMM} estimation of spatial dynamic panel data models with
  fixed effects.
\newblock {\em Journal of Econometrics}, 180(2):174--197.

\bibitem[LeSage and Pace, 2009]{Lesage:2009}
LeSage, J.~P. and Pace, R.~K. (2009).
\newblock {\em Introduction to Spatial Econometrics (Statistics: A Series of
  Textbooks and Monographs}.
\newblock Chapman and Hall/CRC, London.

\bibitem[McMillen, 2014]{mcmillen2014local}
McMillen, D. (2014).
\newblock Local quantile house price indices.
\newblock {\em Journal of Urban Economics}.

\bibitem[Meen, 1999]{Meen:1999}
Meen, G. (1999).
\newblock Regional house prices and the ripple effect: A new interpretation.
\newblock {\em Housing Studies}, 14(6):733--753.

\bibitem[Merk and Otto, 2021]{merk2021directional}
Merk, M.~S. and Otto, P. (2021).
\newblock Directional spatial autoregressive dependence in the conditional
  first-and second-order moments.
\newblock {\em Spatial Statistics}, 41:100490.

\bibitem[Otto, 2019]{Otto:2019}
Otto, P. (2019).
\newblock {spGARCH: An R-Package for Spatial and Spatiotemporal ARCH and GARCH
  models}.
\newblock {\em {The R Journal}}, 11(2):401--420.

\bibitem[Otto and Schmid, 2020]{Otto:2020}
Otto, P. and Schmid, W. (2020).
\newblock Spatial and spatiotemporal {GARCH} models -- a unified approach.

\bibitem[Otto et~al., 2018]{Otto:2018}
Otto, P., Schmid, W., and Garthoff, R. (2018).
\newblock Generalised spatial and spatiotemporal autoregressive conditional
  heteroscedasticity.
\newblock {\em Spatial Statistics}, 26:125--145.

\bibitem[Robinson, 2009]{Robinson:2009}
Robinson, P.~M. (2009).
\newblock Large-sample inference on spatial dependence.
\newblock {\em Econometrics Journal}, 12.

\bibitem[Sandmann and Koopman, 1998]{Koopman:1998}
Sandmann, G. and Koopman, S.~J. (1998).
\newblock Estimation of stochastic volatility models via {Monte Carlo} maximum
  likelihood.
\newblock {\em Journal of Econometrics}, 87(2):271 -- 301.

\bibitem[Sato and Matsuda, 2017]{Sato:2017}
Sato, T. and Matsuda, Y. (2017).
\newblock Spatial autoregressive conditional heteroskedasticity models.
\newblock {\em Journal of the Japan Statistical Society}, 47(2):221--236.

\bibitem[Sato and Matsuda, 2021]{Takaki:2021}
Sato, T. and Matsuda, Y. (2021).
\newblock Spatial extension of generalized autoregressive conditional
  heteroskedasticity models.
\newblock {\em Spatial Economic Analysis}, 16(2):148--160.

\bibitem[Shephard, 1994]{Shephard:1994}
Shephard, N. (1994).
\newblock Partial non-{Gaussian} state space.
\newblock {\em Biometrika}, 81(1):115--131.

\bibitem[Ta{\c s}p{\i}nar et~al., 2021]{Taspinar:2021}
Ta{\c s}p{\i}nar, S., Do{\u g}an, O., Chae, J., and Bera, A.~K. (2021).
\newblock Bayesian inference in spatial stochastic volatility models: An
  application to house price returns in {Chicago}.
\newblock {\em Oxford Bulletin of Economics and Statistics}, 83:1243--1272.

\bibitem[Tobler, 1970]{Tobler70}
Tobler, W.~R. (1970).
\newblock A computer movie simulating urban growth in the {D}etroit region.
\newblock {\em Economic Geography}, 46:234--240.

\bibitem[Yu et~al., 2008]{Yu:2008}
Yu, J., de~Jong, R., and fei Lee, L. (2008).
\newblock Quasi-maximum likelihood estimators for spatial dynamic panel data
  with fixed effects when both n and t are large.
\newblock {\em Journal of Econometrics}, 146(1):118 -- 134.

\bibitem[Zhang and Yi, 2017]{zhang2017quantile}
Zhang, L. and Yi, Y. (2017).
\newblock Quantile house price indices in {Beijing}.
\newblock {\em Regional Science and Urban Economics}, 63:85--96.

\end{thebibliography}

\end{document}